\documentclass[11pt]{article}

\usepackage[T1]{fontenc}
\usepackage[utf8]{inputenc}
\usepackage{lmodern}
\usepackage{microtype}
\emergencystretch=2em

 \usepackage{tikz}
 \usepackage{quantikz}
 \usepackage{subcaption}
 \usepackage{xcolor}
 
 \usetikzlibrary{arrows.meta,positioning}
 \definecolor{zxZ}{RGB}{41,167,64}   
 \definecolor{zxX}{RGB}{220,50,47}   
 \tikzset{
 	spiderZ/.style={circle, draw=zxZ, fill=zxZ!20, thick, minimum size=10mm},
 	spiderX/.style={circle, draw=zxX, fill=zxX!20, thick, minimum size=10mm},
 	wire/.style={line width=1.1pt},
 }

\providecommand{\bra}[1]{\left\langle #1 \right\rangle}
\providecommand{\ket}[1]{\left| #1 \right\rangle}

\usepackage{amsmath, amssymb}
\usepackage{mathtools}
\usepackage{amsthm}

\usepackage{graphicx}
\usepackage{xcolor}

\usepackage{enumitem}
\usepackage{booktabs}

\usepackage[hidelinks]{hyperref}
\usepackage[capitalise,nameinlink]{cleveref}

\usepackage[margin=1in]{geometry}
\usepackage{algorithm,algpseudocode}

\usepackage{authblk}

\makeatletter
\@ifundefined{theorem}{\newtheorem{theorem}{Theorem}[section]}{}
\@ifundefined{proposition}{\newtheorem{proposition}[theorem]{Proposition}}{}
\@ifundefined{definition}{\newtheorem{definition}[theorem]{Definition}}{}
\@ifundefined{remark}{\newtheorem{remark}[theorem]{Remark}}{}
\@ifundefined{example}{\newtheorem{example}[theorem]{Example}}{}
\@ifundefined{convention}{}{}
\@ifundefined{lemma}{\newtheorem{lemma}[theorem]{Lemma}}{}
\@ifundefined{corollary}{}{}
\makeatother

\usepackage{xspace}
\DeclareRobustCommand{\ZX}{\textsc{ZX}\xspace}

\DeclareRobustCommand{\WPLZX}{\textsc{WPL–ZX}\xspace} 

\DeclareRobustCommand{\PP}{\ensuremath{\mathbb{P}}\xspace}

\pdfstringdefDisableCommands{%
	\def\ZX{ZX}%
	\def\WPLZX{WPL-ZX}%
	\def\PP{P}%
}

\newcommand{\ZXeq}{\overset{\text{ZX}}{\equiv}} 
\newcommand{\FHilb}{\ensuremath{\mathbf{FHilb}}}


\newcommand{\Zsp}[3]{Z_{#1}\!\left(#2,#3\right)}
\newcommand{\Xsp}[3]{X_{#1}\!\left(#2,#3\right)}
\newcommand{\Zspmn}[5]{Z_{#1}^{#2,#3}\!\left(#4,#5\right)}
\newcommand{\Xspmn}[5]{X_{#1}^{#2,#3}\!\left(#4,#5\right)}

\allowdisplaybreaks



\usepackage{stmaryrd}
\usepackage{tikz}
\usetikzlibrary{arrows.meta,calc,positioning,decorations.pathmorphing}

\DeclareMathOperator{\lcm}{lcm} 

\allowdisplaybreaks

\title{Weighted Projective–Line ZX-Calculus: Quantized Orbifold Geometry for Quantum Compilation}

\author[1]{Gunhee Cho\thanks{Email: \texttt{wvx17@txstate.edu}}}
\author[2]{Jason Cheng\thanks{Email: \texttt{jc224959@my.utexas.edu}}}
\author[3]{Evelyn Li\thanks{Email: \texttt{ezl2105@columbia.edu}}}

\affil[1]{Texas State University, San Marcos, TX, USA}
\affil[2]{University of Texas at Austin, Austin, TX, USA}
\affil[3]{Columbia University, New York, NY, USA}

\date{}

\begin{document}
	\maketitle
	
	\renewcommand{\thefootnote}{\fnsymbol{footnote}}
	
	\footnotetext[2]{Second and third authors contributed equally as co-second authors.}
	
	\renewcommand{\thefootnote}{\arabic{footnote}}
	
	\begin{abstract}
		We develop a unified geometric framework for quantum circuit compilation 
		based on \emph{quantized orbifold phases} and their diagrammatic semantics.
		Physical qubit platforms impose heterogeneous phase resolutions, 
		anisotropic Bloch-ball contractions, and hardware-dependent $2\pi$ winding behavior.  
		We show that these effects admit a natural description on the weighted projective line 
		$\mathbb{P}(a,b)$, whose orbifold points encode discrete phase grids 
		and whose monodromy captures winding accumulation under realistic 
		noise channels.
		
		Building on this geometry, we introduce the \emph{WPL--ZX calculus}, 
		an extension of the standard ZX formalism in which each spider carries 
		a weight--phase--winding triple $(a,\alpha,k)$. 
		We prove soundness of LCM-based fusion and normalization rules, 
		derive curvature predictors for phase-grid compatibility, and 
		present the \emph{Weighted ZX Circuit Compression} (WZCC) algorithm, 
		which performs geometry-aware optimization on heterogeneous phase lattices.
		
		To connect circuit-level structure with fault-tolerant architectures, 
		we introduce \emph{Monodromy-Aware Surface-Code Decoding} (MASD), 
		a winding-regularized modification of minimum-weight matching on syndrome graphs.
		MASD incorporates orbifold-weighted edge costs, producing monotone 
		decoder-risk metrics and improved robustness across phase-quantized noise models.
		
		All results are validated through symbolic and numerical simulations, 
		demonstrating that quantized orbifold geometry provides a coherent 
		and hardware-relevant extension of diagrammatic quantum compilation.
	\end{abstract}
	
	\tableofcontents
	
\section{Introduction}
\label{sec:intro}

Quantum circuit optimisation is increasingly constrained by the realities of 
hardware-level phase quantisation, anisotropic control landscapes, and discrete
native gate sets found in contemporary superconducting and trapped-ion devices.
Although diagrammatic methods such as the ZX-calculus 
\cite{CoeckeKissinger2017PicturingQuantum, Backens2014ZX, vandeWetering2020WorkingZX}
have become powerful tools for circuit rewriting and T-count reduction
\cite{Amy2014TCount, Heyfron2018CompilerZX, DuncanPerdrix2020GraphSimplification},
these rewriting rules implicitly assume that phases live in a continuous space
(e.g.\ $[0,2\pi)$ or $\mathbb{R}/2\pi\mathbb{Z}$).  
This assumption is increasingly invalid for NISQ-era hardware where control electronics,
quasiprobability estimators \cite{Pashayan2022NoiseQuantization}, and DRAG-style pulse shaping 
\cite{Motzoi2009DRAG, Krantz2019SuperconductingReview}
often restrict executable phases to \emph{quantised grids}, and where different qubits
may obey \emph{heterogeneous} grids due to local calibration and crosstalk constraints.
These restrictions induce an effective geometry on phase space,
which is not captured by the standard ZX-calculus.

Weighted projective lines (WPLs) \cite{Dolgachev1982Weighted} 
provide a natural differential–geometric model of such quantised phase spaces:
their orbifold points encode isotropy orders $(a,b)$ corresponding to 
allowed phase denominators, and their monodromy structure describes
how phase addition behaves across heterogeneous grids.
Moreover, WPLs carry canonical metrics with curvature $R=2/b^{2}$,
providing quantitative measures of grid anisotropy.  
These geometric features suggest that circuit rewriting should respect not only 
algebraic identities but also the geometric constraints imposed by hardware.

Motivated by these observations, this work develops a complete framework for
\emph{WPL-aware ZX-calculus}, \emph{quantisation-aware normalisation}, and
\emph{winding-aware surface-code decoding}.  
This unifies diagrammatic reasoning, orbifold geometry, and error decoding into a
single geometric pipeline.

\subsection*{Motivation: hardware quantisation and phase winding}

In practical NISQ devices, executable phase rotations are frequently restricted to 
discrete sets such as $\frac{2\pi}{a}\mathbb{Z}$, and different qubits or layers may
exhibit different $a$ due to heterogeneous control parameters.
This heterogeneity is further exacerbated by monodromy effects encountered during 
hardware transpilation, routing, and calibration cycles.

Furthermore, when circuits interact with surface codes or lattice-based error
correcting architectures \cite{Raussendorf2007FaultTolerantZX},
phase information is sometimes encoded as \emph{winding numbers}
on the defect graph, producing additional geometric constraints that standard
minimum-weight matching decoders fail to incorporate.

The key motivation of this paper is therefore twofold:
(i)~extend the ZX-calculus so that phases respect orbifold monodromy and heterogeneous grids,
and (ii)~build decoding and optimisation algorithms based on WPL geometry.

\subsection*{Limitations of standard ZX-calculus}

Although the ZX-calculus is complete for stabiliser quantum mechanics
\cite{Backens2014ZX, JeandelPerdrixVilmart2017CliffordT} and highly successful
for diagrammatic optimisation, it faces fundamental limitations when confronted with
hardware-induced phase quantisation:

\begin{itemize}
	\item \textbf{Continuous semantics:}  
	Standard ZX rules assume phase addition in $\mathbb{R}/2\pi\mathbb{Z}$,
	ignoring discrete or heterogeneous phase grids.
	
	\item \textbf{No orbifold structure:}  
	No rule accounts for isotropy orders $(a,b)$ or for monodromy behaviour 
	on weighted projective spaces.
	
	\item \textbf{Geometry-agnostic optimisation:}  
	Circuit simplification is independent of curvature, grid anisotropy,
	or hardware channel geometry.
	
	\item \textbf{Incompatibility with winding-aware decoding:}  
	When circuits interact with topological codes, phase winding information
	is lost under standard diagrammatic simplifications.
\end{itemize}

These limitations motivate a geometric extension of ZX-calculus that incorporates 
both algebraic and geometric constraints.

\subsection*{Contributions of this work}

This paper introduces three principal contributions:

\begin{enumerate}
	\item \textbf{The WPL–ZX calculus (Section~\ref{sec:wplzx}).}  
	We construct a diagrammatic calculus in which each spider is labelled by 
	$(a,\alpha,k)$ representing its isotropy order, phase angle, and winding index.
	Fusion, normalisation, and functorial semantics are redefined using
	weighted projective line geometry and orbifold monodromy 
	\cite{Satake1956Orbifold, Dolgachev1982Weighted, Nakahara2003GeometryTopology}.
	
	\item \textbf{Quantisation-aware circuit normalisation (WZCC) (Section~\ref{sec:wzcc}).}  
	We introduce a geometric normalisation algorithm that minimises curvature gradients
	and aligns phases on the least-common-multiple refinement grid, producing  
	hardware-compliant circuits that preserve quantisation structure.
	
	\item \textbf{Winding-aware decoding (MASD) (Section~\ref{sec:masd}).}  
	We design a minimum-weight decoder with geometrically regularised edge costs
	that incorporate defect winding numbers, curvature weights, and anisotropy penalties.
	We define decoder-risk metrics that quantify the cost of violating geometric constraints.
\end{enumerate}

Together, these contributions establish a unified framework for hardware-aware 
diagrammatic rewriting and decoding.

\subsection*{Structure of the paper}

Section~\ref{sec:background} introduces background material on quantized
phases and the ZX-calculus. Section~\ref{sec:geometry} reviews weighted projective lines, orbifold
monodromy, heterogeneous phase addition, and the WPL metric. Section~\ref{sec:wplzx} develops the \WPLZX\ calculus, including
$(a,\alpha,k)$-labelled spiders, fusion rules, and categorical semantics. Section~\ref{sec:wzcc} presents the WZCC algorithm, its normalization invariants, and LCM-based compatibility procedures. Section~\ref{sec:experiments} describes the experimental setup: datasets, evaluation metrics, scaling analysis, and robustness studies. Section~\ref{sec:results} reports empirical results on compression, quantization compliance, and noise robustness. Section~\ref{sec:masd} introduces the MASD decoder, describing winding-aware edge construction, geometric regularisation, and decoder-risk metrics. Finally, Section~\ref{sec:outlook} discusses limitations, geometric and categorical implications, and future directions.
	
\section{Background: Quantized Phases and ZX-calculus}
\label{sec:background}
	
\subsection*{Phase representation and quantization grids}

In quantum hardware, phase parameters are not truly continuous. 
They are generated and controlled by digital-to-analog converters (DACs),
whose finite resolution induces a discrete set of realizable phases.
This discreteness matters when we use phase-sensitive rewrite systems 
such as the standard \ZX\ calculus, which implicitly assumes an ideal continuous group $U(1)$.
For a comprehensive treatment of phase semantics in ZX-calculus, see
Backens~\cite{Backens2014ZX}, 
Jeandel--Perdrix--Vilmart~\cite{JeandelPerdrixVilmart2017CliffordT}, 
and van~de~Wetering~\cite{vandeWetering2020WorkingZX}.
Extensions of ZX-calculus that incorporate discretized or hardware-motivated
phase constraints have also been explored in recent work on 
finite-dimensional and modular phase models, which motivate the weighted
and quantized geometry developed in this paper.

\begin{definition}[Quantized phase grid]
	Let $a \in \mathbb{N}$ be a fixed integer corresponding to the number of discrete
	phase levels available in a control channel.
	We define the \emph{quantized phase grid} of weight $a$ by
	\[
	\Phi_a = \bigl\{\, \tfrac{2\pi n}{a} \mid n \in \mathbb{Z} \bigr\}
	\subset [0,2\pi).
	\]
	On this set we use addition modulo $2\pi$:
	for $\alpha_1,\alpha_2 \in \Phi_a$,
	\[
	\alpha_1 \oplus_a \alpha_2 
	= (\alpha_1 + \alpha_2) \bmod 2\pi.
	\]
\end{definition}

\begin{remark}
	In the idealized continuous model, $a \to \infty$ and $\Phi_a$ approaches the circle $S^1$.
	In practice, however, $a$ is finite (for example, $a = 256$ for an 8-bit DAC),
	and $\Phi_a$ carries the algebraic structure
	\[
	\Phi_a \simeq \tfrac{2\pi}{a}\,\mathbb{Z}_a,
	\]
	a finite cyclic group of realizable phases.
	This discretization directly reflects the quantized phase grids imposed by 
	digital-to-analog control in contemporary quantum hardware, and it provides
	the algebraic motivation for the weighted and orbifold interpretation adopted in this work.
\end{remark}

\begin{example}[Phase resolution]
	Suppose a control stack supports a minimum step of $1.40625^\circ = 2\pi/256$.
	Then $a = 256$, and a gate $R_Z(\alpha)$ can only realize rotations
	of the form $\alpha = 2\pi n / 256$ with $n \in \mathbb{Z}$.
	Thus, a \ZX\ spider labeled by $\alpha$ effectively lives on the discrete grid $\Phi_{256}$.
\end{example}

\begin{definition}[Mixed-weight fusion grid]
	Let $a,b \in \mathbb{N}$ be two phase weights associated with different control channels
	or different parts of a circuit.
	When phases $\alpha \in \Phi_a$ and $\beta \in \Phi_b$ are combined (for example, by spider fusion),
	the resulting phase is naturally taken on the \emph{refined grid}
	whose step size is
	\[
	\Delta\phi = \frac{2\pi}{\mathrm{lcm}(a,b)}.
	\]
	In other words, mixed-weight compositions lie on the grid $\Phi_{\mathrm{lcm}(a,b)}$.
\end{definition}

\begin{remark}
	This ``LCM grid'' captures the fact that different qubit channels may have
	different phase resolutions.
	When these channels interact, the effective resolution is the least common
	multiple of individual resolutions.
	This arithmetic behavior will later appear in the fusion rules of
	our weighted \ZX-calculus, where spider weights combine via $\mathrm{lcm}$.
\end{remark}

\begin{lemma}[Discrete phase closure under fusion]
	Let $a,b \in \mathbb{N}$.
	For any $\alpha \in \Phi_a$ and $\beta \in \Phi_b$,
	there exists $\gamma \in \Phi_{\mathrm{lcm}(a,b)}$ such that
	\[
	\gamma \equiv \alpha + \beta \pmod{2\pi}.
	\]
\end{lemma}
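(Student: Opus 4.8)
The plan is to unwind the two definitions and reduce the statement to an elementary divisibility fact about $\lcm(a,b)$. Since $\alpha \in \Phi_a$ and $\beta \in \Phi_b$, we may write $\alpha = \tfrac{2\pi m}{a}$ and $\beta = \tfrac{2\pi n}{b}$ for integers $m,n$. Set $L = \lcm(a,b)$ and use that $a \mid L$ and $b \mid L$, so that $L/a$ and $L/b$ are integers; then
\[
\alpha + \beta = \frac{2\pi m}{a} + \frac{2\pi n}{b} = \frac{2\pi\bigl(m\,(L/a) + n\,(L/b)\bigr)}{L} = \frac{2\pi k}{L},
\qquad k := m\,(L/a) + n\,(L/b) \in \mathbb{Z}.
\]
Thus $\alpha + \beta$ already lies in $\tfrac{2\pi}{L}\mathbb{Z}$, and the only remaining task is to fold it back into the fundamental domain $[0,2\pi)$ so that the witness actually sits in $\Phi_L$.

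The second step is to produce the canonical representative. Define $\gamma := \tfrac{2\pi}{L}\,(k \bmod L)$, where $k \bmod L \in \{0,1,\dots,L-1\}$ is the least nonnegative residue. By construction $\gamma \in [0,2\pi)$ and $\gamma$ has the required shape $\tfrac{2\pi j}{L}$ with $j \in \mathbb{Z}$, so $\gamma \in \Phi_L = \Phi_{\lcm(a,b)}$. Moreover $\gamma - (\alpha+\beta) = \tfrac{2\pi}{L}\bigl((k \bmod L) - k\bigr)$ is an integer multiple of $2\pi$, because $L \mid \bigl(k - (k \bmod L)\bigr)$; hence $\gamma \equiv \alpha + \beta \pmod{2\pi}$, which is exactly the asserted congruence. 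Incidentally, this $\gamma$ is precisely $\alpha \oplus_{\lcm(a,b)} \beta$ in the notation of the mixed-weight fusion grid, so the lemma simultaneously records that $\oplus_{\lcm(a,b)}$ is a well-defined operation $\Phi_a \times \Phi_b \to \Phi_{\lcm(a,b)}$.

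Regarding difficulty: there is no substantive obstacle here — the entire content is that $\tfrac{2\pi}{a}\mathbb{Z}$ and $\tfrac{2\pi}{b}\mathbb{Z}$ are both subgroups of $\tfrac{2\pi}{L}\mathbb{Z}$, so their pointwise sum lands in $\tfrac{2\pi}{L}\mathbb{Z}$ as well. The single point that genuinely deserves to be spelled out is the passage from an ambient representative in $\tfrac{2\pi}{L}\mathbb{Z}$ to a representative inside $\Phi_L \subset [0,2\pi)$: one must reduce the \emph{integer numerator} $k$ modulo $L$ (not the angle modulo $\tfrac{2\pi}{L}$), and observe that this changes the phase only by a multiple of $2\pi$. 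If one prefers a purely group-theoretic phrasing, the same computation shows $\Phi_a,\Phi_b \le \Phi_{\lcm(a,b)}$ as subgroups of $\mathbb{R}/2\pi\mathbb{Z} \cong U(1)$, and closure is then immediate from $\Phi_{\lcm(a,b)}$ being a group; the displayed argument above is just the explicit version of this remark, recording the witness $\gamma$ in closed form for later use in the fusion rules of the \WPLZX\ calculus.
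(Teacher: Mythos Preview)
Your proof is correct and follows essentially the same approach as the paper: both write $\alpha = 2\pi m/a$, $\beta = 2\pi n/b$ and show that $\alpha+\beta$ is an integer multiple of $2\pi/\lcm(a,b)$, the paper by passing through the common denominator $ab$ and invoking $\lcm(a,b)=ab/\gcd(a,b)$, you by using $L=\lcm(a,b)$ as the common denominator directly. Your version is in fact slightly more careful than the paper's, since you explicitly reduce the integer numerator modulo $L$ to land inside $\Phi_L\subset[0,2\pi)$, whereas the paper simply asserts $\alpha+\beta\in\Phi_{\lcm(a,b)}$ without performing the mod-$2\pi$ reduction.
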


\begin{proof}
	Write $\alpha = 2\pi m/a$ and $\beta = 2\pi n/b$ for some $m,n \in \mathbb{Z}$.
	Then
	\[
	\alpha + \beta
	= 2\pi\!\left(\frac{m}{a} + \frac{n}{b}\right)
	= 2\pi\,\frac{mb + na}{ab}.
	\]
	Since $\mathrm{lcm}(a,b) = ab / \gcd(a,b)$, the factor $\tfrac{mb+na}{ab}$ is an integer multiple
	of $1/\mathrm{lcm}(a,b)$.
	Thus $\alpha+\beta$ is an integer multiple of $2\pi/\mathrm{lcm}(a,b)$, i.e.\ $\alpha + \beta \in \Phi_{\mathrm{lcm}(a,b)}$.
\end{proof}

\begin{remark}
	The lemma ensures that phase fusion is closed within the class of quantized grids:
	any two quantized phases can be combined into a phase on a (possibly finer) grid.
	This property is the algebraic basis of the LCM-based fusion rule for weighted spiders,
	which generalizes the phase-sum rules discussed in 
	Backens~\cite{Backens2014ZX} and Jeandel--Perdrix--Vilmart~\cite{JeandelPerdrixVilmart2017CliffordT}.
\end{remark}

\begin{definition}[Accumulated phase and winding number]
	Let $(\alpha_t)_{t\ge 0}$ be a sequence of quantized phases,
	and let $\Delta\alpha_t = \alpha_{t+1} - \alpha_t$ denote the increment at step $t$.
	The \emph{total accumulated phase} after $T$ steps is
	\[
	\Theta_T = \sum_{t=0}^{T-1} \Delta\alpha_t.
	\]
	The corresponding \emph{winding number} $k_T \in \mathbb{Z}$ is defined by
	\[
	k_T = \left\lfloor \frac{\Theta_T}{2\pi} \right\rfloor.
	\]
\end{definition}

\begin{remark}
	The pair $(\Theta_T \bmod 2\pi,\, k_T)$ separates local phase information
	(from $\Theta_T$ modulo $2\pi$) and global topological information
	(from the integer number of full turns $k_T$).
	In the weighted \ZX-calculus developed later, each spider will therefore
	carry a triplet $(a,\alpha,k)$ encoding grid order, fractional phase, and winding count.
\end{remark}

\subsection*{ZX-calculus essentials}\label{subsec:zx-essentials}

The \ZX-calculus provides a graphical language for reasoning about
quantum circuits and linear maps between finite-dimensional Hilbert spaces.
It is based on the interplay between two complementary bases of a qubit:
the computational $Z$-basis $\{\ket{0},\ket{1}\}$ and the Hadamard-rotated
$X$-basis $\{\ket{+},\ket{-}\}$.
Each basis gives rise to a family of diagrammatic generators, called
\emph{spiders}, connected by edges representing qubit wires.
For a systematic introduction see
Backens~\cite{Backens2014ZX} and
van~de~Wetering~\cite{vandeWetering2020WorkingZX},
while the completeness of the calculus for Clifford+T circuits
was established by Jeandel, Perdrix, and Vilmart~\cite{JeandelPerdrixVilmart2017CliffordT}.
Circuit-level optimization and simplification rules are discussed by
Duncan and Perdrix~\cite{DuncanPerdrix2020GraphSimplification} and 
Fischbach~\cite{Fischbach2025ZXOpt}.

\begin{definition}[Z- and X-spiders]
	For integers $m,n \ge 0$ and a phase $\alpha \in [0,2\pi)$,
	the \emph{Z-spider} and \emph{X-spider} are defined by the linear maps
	\[
	\Zsp{m}{n}{\alpha} =
	\ket{0}^{\otimes n}\!\bra{0}^{\otimes m} +
	e^{i\alpha}\ket{1}^{\otimes n}\!\bra{1}^{\otimes m},
	\qquad
	\Xsp{m}{n}{\alpha} =
	\ket{+}^{\otimes n}\!\bra{+}^{\otimes m} +
	e^{i\alpha}\ket{-}^{\otimes n}\!\bra{-}^{\otimes m}.
	\]
	These spiders act as multi-input, multi-output linear maps
	$\mathbb{C}^{2^{m}} \!\to\! \mathbb{C}^{2^{n}}$.
\end{definition}

\begin{remark}
	A \ZX-diagram is built by composing and tensoring spiders according to
	graphical connectivity.
	An edge joining two spiders represents a contraction over a shared qubit.
	Diagrammatic equality, written $D_1 \ZXeq D_2$, denotes that
	the two diagrams represent the same linear map up to a global scalar.
\end{remark}

\begin{lemma}[Spider fusion law]
	Two spiders of the same color connected by one or more wires
	can be merged into a single spider whose phase is the sum of the original phases:
	\[
	\Zsp{m}{k}{\alpha_1}\; \Zsp{k}{n}{\alpha_2}
	\;\ZXeq\;
	\Zsp{m}{n}{\alpha_1 + \alpha_2},
	\qquad
	\Xsp{m}{k}{\beta_1}\; \Xsp{k}{n}{\beta_2}
	\;\ZXeq\;
	\Xsp{m}{n}{\beta_1 + \beta_2}.
	\]
\end{lemma}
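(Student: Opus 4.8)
The plan is to prove the identity by a direct computation with the defining linear maps, exploiting the orthonormality of the computational basis $\{\ket{0},\ket{1}\}$ for the $Z$-spider case and of the Hadamard basis $\{\ket{+},\ket{-}\}$ for the $X$-spider case. I will establish the displayed equation first in the case where all $k$ internal wires connect the two spiders, and then address the general ``one or more wires'' formulation and the $X$-variant.

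First I would substitute the definitions into the composite linear map $\Zsp{k}{n}{\alpha_2}\circ\Zsp{m}{k}{\alpha_1}$ and expand the product, obtaining four terms. Two of them are cross terms carrying a factor $\bigl(\braket{0}{1}\bigr)^{k}$ or $\bigl(\braket{1}{0}\bigr)^{k}$, each equal to $0$ by orthogonality; the two surviving diagonal terms carry the factor $\bigl(\braket{0}{0}\bigr)^{k}=\bigl(\braket{1}{1}\bigr)^{k}=1$. Hence the composite collapses to
\[
\ket{0}^{\otimes n}\bra{0}^{\otimes m} + e^{i\alpha_1}e^{i\alpha_2}\,\ket{1}^{\otimes n}\bra{1}^{\otimes m} \;=\; \Zsp{m}{n}{\alpha_1+\alpha_2},
\]
which is exactly the claimed equality (indeed with trivial scalar in this case). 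For the $X$-spiders I would either rerun the same computation with $\ket{\pm}$ in place of $\ket{0},\ket{1}$ — legitimate since $\{\ket{+},\ket{-}\}$ is also an orthonormal basis — or invoke the color-change relation $\Xsp{m}{n}{\alpha}=H^{\otimes n}\,\Zsp{m}{n}{\alpha}\,H^{\otimes m}$ and observe that in the composite the internal Hadamards cancel, $H^{\otimes k}H^{\otimes k}=I$, so the $Z$-case result transfers directly.

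The one genuinely delicate point — and the step I expect to be the main obstacle — is the general phrasing ``connected by one or more wires,'' since the display above only treats the all-wires case. Here I would use the fact that each spider, restricted to the support of its defining map, acts as a bit-broadcast: on its nonzero part every leg carries a common computational value. Contracting any $j\ge 1$ of the shared legs therefore still forces the two bit-values to agree, the cross terms still vanish, the phases still add, and the remaining free legs reassemble into a single $m$-to-$n$ spider after a permutation of tensor factors that is invisible to diagram equality. Since diagram equality is understood up to a nonzero global scalar, any stray factor incurred in this bookkeeping (for instance from collapsing several parallel edges to one) may simply be discarded, which completes the argument.
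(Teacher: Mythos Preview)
Your argument is correct and follows essentially the same route as the paper's proof sketch: compose in the computational basis, observe that cross terms vanish by orthogonality while the diagonal terms combine phases, and then transfer to $X$-spiders via the Hadamard basis or color change. Your treatment is in fact more thorough than the paper's, since you explicitly handle the general ``one or more wires'' case via the bit-broadcast property, whereas the paper's sketch leaves this implicit.
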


\begin{proof}[Sketch]
	In the computational basis, composition of Z-spiders adds the phases along
	the $\ket{1}$ branch; intermediate connections act as identities on each branch.
	The same argument holds for X-spiders in the Hadamard basis.
\end{proof}

\begin{definition}[Color change via Hadamard]
	The Hadamard gate $H$ acts as an isomorphism between Z- and X-structures:
	\[
	H \; \Zsp{m}{n}{\alpha} \; H
	\;\ZXeq\;
	\Xsp{m}{n}{\alpha}.
	\]
\end{definition}

\begin{remark}
	This duality encodes the mutual unbiasedness of the two bases and lets
	operations be represented by alternating red/green spiders (X and Z).
\end{remark}

\begin{example}[From circuit to ZX-diagram: single-qubit rotation chain]
	Consider
	\[
	U = R_Z(\alpha)\,R_X(\beta)\,R_Z(\gamma),
	\]
	the standard $Z\!X\!Z$ Euler decomposition of a single-qubit unitary.
	In the \ZX-calculus this becomes
	\[
	\Zsp{1}{1}{\alpha} \; \Xsp{1}{1}{\beta} \; \Zsp{1}{1}{\gamma}.
	\]
	Using same-color fusion and the $H$ color-change rule, we obtain the
	simplified normal form
	\[
	\Zsp{1}{1}{\alpha+\gamma}\;\Xsp{1}{1}{\beta}.
	\]
	Figures~\ref{fig:circuit-zx-euler} and \ref{fig:zx-diagram-euler} visualize
	this simplification at the circuit level and in the spider language.
\end{example}

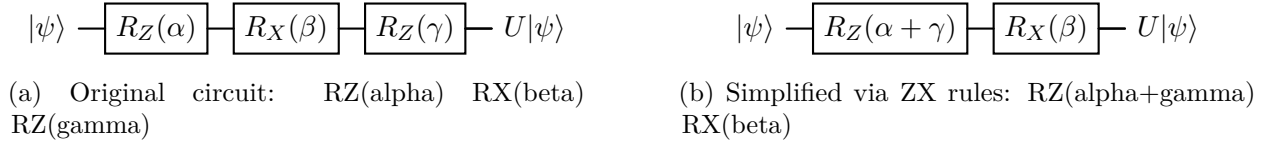
\begin{figure}[t]
	\centering
	\begin{subfigure}[b]{0.46\textwidth}
		\centering
		\begin{quantikz}[transparent, row sep=0.2cm, column sep=0.35cm]
			\lstick{$|\psi\rangle$} & \gate{R_Z(\alpha)} & \gate{R_X(\beta)} & \gate{R_Z(\gamma)} & \rstick{$U|\psi\rangle$} \qw
		\end{quantikz}
		\caption{Original circuit: RZ(alpha) RX(beta) RZ(gamma)}
	\end{subfigure}\hfill
	\begin{subfigure}[b]{0.46\textwidth}
		\centering
		\begin{quantikz}[transparent, row sep=0.2cm, column sep=0.35cm]
			\lstick{$|\psi\rangle$} & \gate{R_Z(\alpha+\gamma)} & \gate{R_X(\beta)} & \rstick{$U|\psi\rangle$} \qw
		\end{quantikz}
		\caption{Simplified via ZX rules: RZ(alpha+gamma) RX(beta)}
	\end{subfigure}
	\caption{Single-qubit example of ZX-based simplification (circuit view)}
	\label{fig:circuit-zx-euler}
\end{figure}

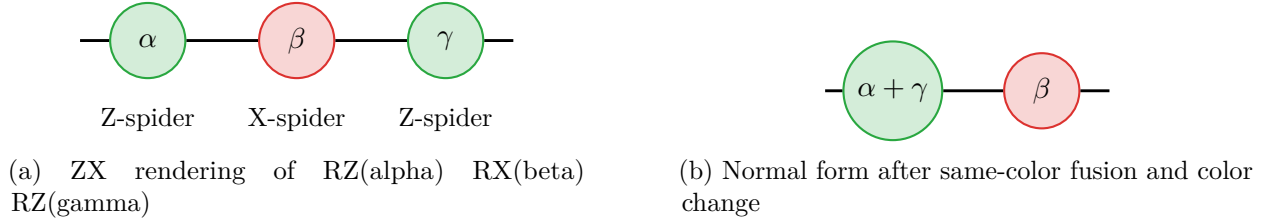
\begin{figure}[t]
	\centering
	\begin{subfigure}[b]{0.46\textwidth}
		\centering
		\begin{tikzpicture}[scale=0.90]
			\node[spiderZ] (Z1) at (0,0) {$\alpha$};
			\node[spiderX] (X1) at (2.2,0) {$\beta$};
			\node[spiderZ] (Z2) at (4.4,0) {$\gamma$};
			\draw[wire] (-1,0) -- (Z1) -- (X1) -- (Z2) -- (5.4,0);
			\node[below] at (0,-0.85) {\small Z-spider};
			\node[below] at (2.2,-0.85) {\small X-spider};
			\node[below] at (4.4,-0.85) {\small Z-spider};
		\end{tikzpicture}
		\caption{ZX rendering of RZ(alpha) RX(beta) RZ(gamma)}
	\end{subfigure}\hfill
	\begin{subfigure}[b]{0.46\textwidth}
		\centering
		\begin{tikzpicture}[scale=0.90]
			\node[spiderZ] (Zs) at (0,0) {$\alpha+\gamma$};
			\node[spiderX] (Xs) at (2.2,0) {$\beta$};
			\draw[wire] (-1,0) -- (Zs) -- (Xs) -- (3.2,0);
		\end{tikzpicture}
		\caption{Normal form after same-color fusion and color change}
	\end{subfigure}
	\caption{ZX diagrams: before vs. after simplification (spider view)}
	\label{fig:zx-diagram-euler}
\end{figure}

\begin{definition}[Bialgebra and Hopf interactions]
	The fundamental interaction between Z- and X-spiders in the \ZX-calculus
	is captured by two algebraic principles: the \emph{bialgebra law} and the
	\emph{Hopf law}. For nonnegative integers $m,n,k,\ell$:
	\[
	\Xsp{m}{n}{0}\;\Zsp{n}{k}{0} \;\ZXeq\; \Zsp{m}{n}{0}\;\Xsp{n}{k}{0},
	\qquad
	\Zsp{1}{2}{0}\;\Xsp{2}{1}{0} \;\ZXeq\; \Xsp{1}{2}{0}\;\Zsp{2}{1}{0},
	\]
	and
	\[
	\Zsp{1}{1}{0}\;\Xsp{1}{1}{0}\;\Zsp{1}{1}{0} \;\ZXeq\; \mathrm{id},
	\qquad
	\Xsp{1}{1}{0}\;\Zsp{1}{1}{0}\;\Xsp{1}{1}{0} \;\ZXeq\; \mathrm{id}.
	\]
\end{definition}

\begin{example}[Properties of Spiders]
	
	On the bloch sphere $\mathbb{CP}^1$,  Z-spiders correspond to rotations about the Z-axis and X-spiders correspond to rotations about the X-axis. A phase $\alpha$ in a spider corresponds to a rotation by angle $\alpha$
	
	\begin{itemize}
		\item \textbf{Spider Fusion:} Two spiders of the same color fuse into one, summing their phases.
		\item \textbf{Identity:} A spider with no phase and one input/output wire acts as the identity map.
		\item \textbf{Bialgebra Law:} Connecting green and red spiders satisfies the algebra–coalgebra compatibility (bialgebra) relation.
		\item \textbf{Hadamard Gates:} Hadamard operations interchange red (X) and green (Z) spiders.
	\end{itemize}
	
	\begin{figure}[h!]
		\centering
		\includegraphics[width=0.88\textwidth]{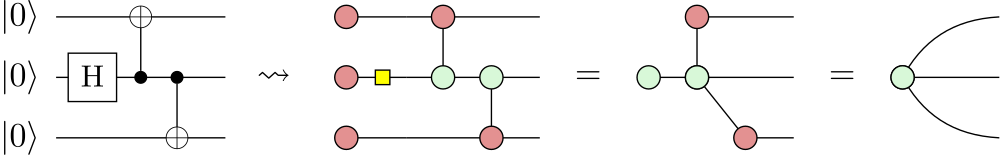}
		\caption{Illustration of spider properties: fusion, identity, bialgebra, and Hadamard color change.}
		\label{fig:spider-properties}
	\end{figure}
\end{example}

\begin{remark}
	The expressive power of the \ZX-calculus is topological and compositional:
	matrix identities often become simple rewiring, fusion, or color-change steps.
	Practically, this enables (i) automatic circuit simplification and phase-count
	reduction~\cite{DuncanPerdrix2020GraphSimplification}, (ii) compiler-level
	optimization for Clifford+T circuits, and (iii) hardware-agnostic reasoning
	about entanglement and measurement flow. In the next section we add
	\emph{phase quantization} and \emph{winding labels} to track monodromy on
	discrete phase grids, yielding the \emph{WPL--ZX} formalism.
\end{remark}

\subsection*{Discrete and modular extensions of ZX}\label{subsec:discrete-zx}

While the standard \ZX-calculus assumes continuous phases in $U(1)$,
real quantum hardware operates on a discrete grid of realizable phases.
To model such practical constraints, we introduce a
\emph{discrete and modular} extension of the calculus,
where phase values belong to finite cyclic groups rather than to the continuous circle.
This construction unites categorical quantum mechanics~\cite{Selinger2008DaggerCompact,CoeckeKissinger2017PicturingQuantum}
with finite-dimensional quantum theory~\cite{Gross2006Hudson},
and aligns with modern compiler-level optimization frameworks~\cite{Amy2014TCount,Heyfron2018CompilerZX,Fischbach2025ZXOpt}.

\begin{definition}[Modular phase group]
	Let $a \in \mathbb{N}$ denote the grid order of a control channel.
	The \emph{modular phase group} associated with this resolution is
	\[
	G_a := \mathbb{R} / \tfrac{2\pi}{a}\mathbb{Z}
	\;\simeq\;
	\tfrac{2\pi}{a}\mathbb{Z}_a.
	\]
	Elements of $G_a \cong \{\,0, \tfrac{2\pi}{a}, \tfrac{4\pi}{a}, \dots, \tfrac{2\pi(a-1)}{a}\,\} \subset [0,2\pi)$
	correspond to quantized phase values
	\(\alpha = \tfrac{2\pi n}{a}\) for $n \in \mathbb{Z}_a$,
	and the group operation is addition modulo $2\pi$.
\end{definition}

\begin{remark}
	As $a \to \infty$, the modular group $G_a$ converges to the continuous circle group $U(1)$.
	For finite $a$, it forms a discrete approximation of $U(1)$,
	preserving its group structure but restricting phase precision.
	This transition parallels the finite-to-continuum viewpoint discussed in~\cite{Gross2006Hudson},
	and fits naturally into the categorical semantics of quantum processes~\cite{Selinger2008DaggerCompact,CoeckeKissinger2017PicturingQuantum}.
\end{remark}

\begin{definition}[Discrete spiders]
	For each $a \in \mathbb{N}$, the \emph{discrete Z- and X-spiders} of grid order $a$ restrict their phases to $G_a$:
	\[
	\Zspmn{m}{n}{a}{\alpha} :=
	\ket{0}^{\otimes n}\!\bra{0}^{\otimes m}
	+ e^{i\alpha}\ket{1}^{\otimes n}\!\bra{1}^{\otimes m},
	\quad \alpha \in G_a.
	\]
	The X-spider is defined analogously in the Hadamard-rotated basis,
	as introduced in the standard ZX framework~\cite{Backens2014ZX,JeandelPerdrixVilmart2017CliffordT,vandeWetering2020WorkingZX}.
\end{definition}

\begin{lemma}[Modular fusion law]
	Let $a,b \in \mathbb{N}$ and let $\alpha \in G_a$, $\beta \in G_b$.
	Then the fusion of discrete Z-spiders satisfies
	\[
	\Zspmn{m}{k}{a}{\alpha}
	\;\Zspmn{k}{n}{b}{\beta}
	\;\ZXeq\;
	\Zspmn{m}{n}{\lcm(a,b)}{\alpha \oplus \beta},
	\]
	where \(\alpha \oplus \beta = (\alpha + \beta) \bmod 2\pi\)
	and the resulting phase lies in $G_{\lcm(a,b)}$.
\end{lemma}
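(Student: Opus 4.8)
The plan is to derive the modular fusion law as an immediate consequence of two results already available: the ordinary (continuous) spider fusion law and the \emph{Discrete phase closure under fusion} lemma. The key observation is that the grid-order subscript carried by a discrete spider is purely auxiliary bookkeeping --- it certifies which group the phase was drawn from but does not enter the denoted linear map --- so relabeling a spider to a finer grid is a transparent diagrammatic move precisely when the phase still lies on the target grid.

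First I would note that, by the definition of discrete spiders, the discrete $Z$-spider of grid order $a$ with phase $\alpha \in G_a$ and arities $m,k$ denotes the \emph{same} linear map $\mathbb{C}^{2^{m}} \to \mathbb{C}^{2^{k}}$ as the ordinary $Z$-spider $\Zsp{m}{k}{\alpha}$. Hence the left-hand composite in the statement is, as a linear map, exactly $\Zsp{m}{k}{\alpha}\,\Zsp{k}{n}{\beta}$, and the spider fusion law gives $\Zsp{m}{k}{\alpha}\,\Zsp{k}{n}{\beta} \ZXeq \Zsp{m}{n}{\alpha+\beta}$, up to the nonzero global scalar that diagrammatic equality already tolerates. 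Since $\alpha \oplus \beta = (\alpha+\beta) \bmod 2\pi$ differs from $\alpha+\beta$ by an integer multiple of $2\pi$, we have $e^{i(\alpha+\beta)} = e^{i(\alpha \oplus \beta)}$, so $\Zsp{m}{n}{\alpha+\beta}$ and $\Zsp{m}{n}{\alpha \oplus \beta}$ are literally the same linear map, and the left-hand side is therefore diagrammatically equal to $\Zsp{m}{n}{\alpha \oplus \beta}$.

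Second I would certify that $\Zsp{m}{n}{\alpha \oplus \beta}$ is a legitimate discrete spider of grid order $\lcm(a,b)$, that is, that $\alpha \oplus \beta \in G_{\lcm(a,b)}$ under the identification $G_c \simeq \Phi_c$ of each modular phase group with its set of representatives in $[0,2\pi)$. This is exactly the content of the \emph{Discrete phase closure under fusion} lemma: writing $\alpha = 2\pi p/a$ and $\beta = 2\pi q/b$ with $p,q \in \mathbb{Z}$, that lemma produces $\gamma \in \Phi_{\lcm(a,b)}$ with $\gamma \equiv \alpha+\beta \pmod{2\pi}$, and the unique such $\gamma$ lying in $[0,2\pi)$ is $\alpha \oplus \beta$ itself. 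The identity $\lcm(a,b) = ab/\gcd(a,b)$ --- rather than the cruder bound $ab$ --- is what pins this down as the minimal common refinement grid, and it is the only arithmetic input required; it has already been carried out in that proof. Chaining the two equalities, the left-hand composite equals $\Zsp{m}{n}{\alpha \oplus \beta}$ diagrammatically, and this map, now legitimately carrying the grid label $\lcm(a,b)$, is precisely the right-hand side of the statement.

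The analogous identity for $X$-spiders follows verbatim after conjugating both sides by Hadamards and invoking the color-change rule, since neither spider fusion nor the closure lemma distinguishes the two colors. The one step I would take care to spell out --- and the point I expect to be the ``hard part,'' such as it is --- is the opening reinterpretation of a grid-$a$-labeled spider as a grid-$\lcm(a,b)$-labeled one: it rests on the facts that grid order is metadata invisible to the underlying map and that $G_a \hookrightarrow G_{\lcm(a,b)}$ as a set of phase values, and it is the hinge on which the whole reduction turns. Everything downstream of it is routine arithmetic together with composition of linear maps.
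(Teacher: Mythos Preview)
Your proposal is correct and follows essentially the same approach as the paper: the paper's proof is nothing more than the arithmetic computation $\alpha+\beta = 2\pi(mb+na)/(ab)$ together with $\lcm(a,b)=ab/\gcd(a,b)$ to place the sum in $G_{\lcm(a,b)}$, with the underlying spider-fusion equality of linear maps left implicit. You make the same argument but decompose it more explicitly into ``ordinary fusion gives the map equality'' plus ``the closure lemma certifies grid membership,'' which is a cleaner presentation of exactly the same content.
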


\begin{proof}
	Writing $\alpha = 2\pi m/a$ and $\beta = 2\pi n/b$ gives
	$\alpha + \beta = 2\pi (mb + na)/(ab)$.
	Since $\lcm(a,b) = ab/\gcd(a,b)$,
	this composite phase corresponds to an element of $G_{\lcm(a,b)}$
	under modular addition.
	This generalizes the stabilizer-level fusion rules of~\cite{Backens2014ZX}
	to modular lattices.
\end{proof}

\begin{remark}
	The modular fusion rule extends the ordinary spider fusion law
	to hardware-realistic phase grids.
	Fusing spiders of different grid orders automatically refines them to
	the least-common-multiple lattice $\lcm(a,b)$.
	This refinement preserves the compositional semantics
	of the ZX framework~\cite{CoeckeKissinger2017PicturingQuantum,DuncanPerdrix2020GraphSimplification}.
\end{remark}

\begin{definition}[Phase lattice and directed refinement]
	The family of all phase grids
	\[
	\mathcal{G} := \{\, G_a : a \in \mathbb{N} \,\}
	\]
	forms a directed system under refinement:
	if $a \mid b$, there is a canonical embedding
	\(\iota_{a\to b} : G_a \hookrightarrow G_b\),
	sending \(\tfrac{2\pi n}{a} \mapsto \tfrac{2\pi n}{b}\).
	Its inductive limit recovers the continuous group $U(1)$,
	consistent with categorical treatments of quantum phase spaces~\cite{CoeckeKissinger2017PicturingQuantum}.
\end{definition}

\begin{remark}
	This directed system $\mathcal{G}$ captures hierarchical phase resolution in hardware.
	Each inclusion $\iota_{a\to b}$ represents an upsampling of phase precision.
	Such structure aligns with compiler-level phase synthesis and
	resource optimization techniques~\cite{Amy2014TCount,Heyfron2018CompilerZX,Fischbach2025ZXOpt}.
\end{remark}

\begin{example}[Discrete phase composition]
	Consider $a=4$ and $b=6$.
	Then $\lcm(a,b)=12$, and
	\[
	\Phi_4 = \{0, \tfrac{\pi}{2}, \pi, \tfrac{3\pi}{2}\}, \qquad
	\Phi_6 = \{0, \tfrac{\pi}{3}, \tfrac{2\pi}{3}, \pi, \tfrac{4\pi}{3}, \tfrac{5\pi}{3}\}.
	\]
	Fusing a $Z$-spider of phase $\pi/2$ (from $\Phi_4$)
	with another of phase $\pi/3$ (from $\Phi_6$)
	yields a spider with phase $\pi/2 + \pi/3 = 5\pi/6$,
	which lies on the finer grid $\Phi_{12}$.
	This corresponds to least-common-multiple refinement,
	as seen in ZX-based circuit simplification~\cite{DuncanPerdrix2020GraphSimplification,vandeWetering2020WorkingZX}.
\end{example}

\begin{definition}[Modular reduction and phase aliasing]
	When a continuous-phase circuit is compiled to a hardware target with grid order $a$,
	each phase $\theta \in [0,2\pi)$ is reduced to its nearest representative in $G_a$:
	\[
	\theta \mapsto \frac{2\pi}{a}\,\mathrm{round}\!\left(\frac{a\theta}{2\pi}\right).
	\]
	This introduces a quantization error and a periodic \emph{aliasing} effect
	in the corresponding ZX-diagram~\cite{Pashayan2022NoiseQuantization}.
\end{definition}

\begin{remark}
	The aliasing phenomenon implies that discrete \ZX-diagrams
	are no longer faithful embeddings of continuous unitaries,
	but equivalence classes under phase rounding.
	This modular reduction plays a key role in the analysis of
	circuit compression, phase accumulation, and
	fault-tolerant resource estimation~\cite{Raussendorf2007FaultTolerantZX,nLabZX2024}.
\end{remark}

\subsection*{Motivating examples from hardware control}\label{subsec:hardware-motivation}

The motivation for introducing a discrete and modular formulation of the \ZX-calculus
comes directly from the physical architecture of quantum control hardware.
Each qubit is driven by analog waveforms synthesized from digital control signals
with finite bit depth and bounded dynamic range.
Consequently, all physically realizable phases are inherently quantized,
and the diagrammatic calculus must respect this discretization~\cite{Amy2014TCount,Heyfron2018CompilerZX,Fischbach2025ZXOpt}.

\begin{example}[Digital-to-analog conversion (DAC) quantization]
	Let a control channel employ an $n$-bit digital-to-analog converter (DAC)
	to synthesize a phase $\phi$.
	The achievable phase set is
	\[
	\Phi_{2^n} = \Bigl\{ \frac{2\pi k}{2^n} : k \in \mathbb{Z}_{2^n} \Bigr\}.
	\]
	For $n=8$, the smallest resolvable increment is
	$\Delta\phi = 2\pi/256 \approx 1.4^\circ$.
	Every rotation gate $R_Z(\phi)$ or phase-labeled spider
	is effectively projected onto this discrete lattice.
	Thus, the control hardware realizes the modular group $G_{2^n}$
	introduced in the previous subsection~\cite{Gross2006Hudson}.
\end{example}

\begin{remark}
	Such quantization is not a software artifact but a fundamental hardware constraint.
	Because waveform synthesis and phase accumulation both operate in discrete digital domains,
	the modular structure is intrinsic to gate calibration and cannot be removed
	by higher-level compilation alone~\cite{Amy2014TCount,Heyfron2018CompilerZX}.
\end{remark}

\begin{example}[Multi-channel phase synchronization]
	Consider two qubit control channels with grid orders $a=256$ and $b=192$,
	driven by separate local oscillators.
	When performing a two-qubit entangling operation,
	the effective phase reference depends on both clocks.
	The joint synchronization grid therefore has step size
	\[
	\Delta\phi_{\mathrm{sync}} = \frac{2\pi}{\mathrm{lcm}(a,b)} 
	= \frac{2\pi}{768}.
	\]
	Even though each individual channel has coarse resolution,
	their joint operation resides on a finer LCM grid.
	This corresponds exactly to the modular fusion rule
	of the discrete \ZX-calculus~\cite{Backens2014ZX,JeandelPerdrixVilmart2017CliffordT,DuncanPerdrix2020GraphSimplification}.
\end{example}

\begin{definition}[Coupled control lattice]
	Let $\{a_i\}_{i=1}^N$ denote the phase grid orders of $N$ control channels.
	The \emph{coupled control lattice} governing multi-qubit operations is defined as
	\[
	\Lambda := \Phi_{\mathrm{lcm}(a_1,a_2,\dots,a_N)}.
	\]
	Any composite operation involving these channels must draw its phase label from $\Lambda$.
	This lattice determines the minimal phase increment that remains coherent
	across all interacting channels.
\end{definition}

\begin{remark}
	The lattice $\Lambda$ quantifies hardware-level phase coherence
	at the granularity of control discretization.
	If one channel has a significantly lower resolution,
	it becomes the bottleneck of phase precision—analogous to
	a clock synchronization limit in distributed digital systems~\cite{CoeckeKissinger2017PicturingQuantum}.
\end{remark}

\begin{example}[Phase drift and modular winding]
	In superconducting or trapped-ion qubit systems,
	slow drift in local oscillator phases leads to gradual accumulation
	of modular offsets between channels.
	Let $\phi_i(t)$ denote the instantaneous phase of channel $i$,
	each defined modulo $2\pi/a_i$.
	Over long times, the relative phase difference
	\[
	\Delta_{ij}(t) = \phi_i(t) - \phi_j(t)
	\]
	can undergo discrete modular windings in $G_{\lcm(a_i,a_j)}$.
	In the diagrammatic setting, such windings manifest as changes in
	the topological phase label of a spider,
	while its local phase remains invariant modulo $2\pi$~\cite{vandeWetering2020WorkingZX,nLabZX2024}.
\end{example}

\begin{definition}[Hardware-induced phase monodromy]
	A \emph{phase monodromy event} occurs whenever accumulated drift or
	control noise produces a full modular winding in a grid $G_a$.
	Formally, a monodromy index $k \in \mathbb{Z}$ records the number of times
	the physical phase $\phi(t)$ crosses $2\pi$ while the logical circuit
	representation remains modulo $2\pi/a$.
\end{definition}

\begin{remark}
	Monodromy indices distinguish physically distinct phase trajectories
	that appear identical modulo $2\pi$.
	In the categorical semantics of the \ZX-calculus~\cite{Selinger2008DaggerCompact,CoeckeKissinger2017PicturingQuantum},
	such indices correspond to nontrivial morphisms on the same diagrammatic boundary.
	In later sections, these indices will be promoted to weights,
	yielding the weighted projective line interpretation of the \WPLZX-calculus.
\end{remark}

\begin{example}[Cross-talk compensation and phase aliasing]
	During simultaneous multi-qubit control, cross-talk and residual couplings
	in the microwave envelope may induce effective phases that are linear combinations
	of multiple channel phases.
	If channel $i$ contributes $\phi_i = 2\pi n_i / a_i$, then the composite phase is
	\[
	\phi_{\mathrm{eff}} = \sum_i \lambda_i \phi_i, \qquad \lambda_i \in \mathbb{R}.
	\]
	Because each $\phi_i$ is defined modulo $2\pi/a_i$,
	the resulting $\phi_{\mathrm{eff}}$ exhibits aliasing over
	$\Phi_{\mathrm{lcm}(a_1,\dots,a_N)}$.
	Diagrammatically, this corresponds to multi-weight spider fusion
	under modular arithmetic~\cite{DuncanPerdrix2020GraphSimplification,Backens2014ZX}.
\end{example}

\begin{remark}
	These effects—quantization, synchronization, drift, and aliasing—demonstrate
	that the correct mathematical setting for realistic quantum circuit diagrams
	is not the continuous torus $U(1)^n$, but an arithmetic lattice
	of modular phase groups connected by least common multiples.
	This viewpoint provides a physically grounded bridge between
	fault-tolerant architecture~\cite{Raussendorf2007FaultTolerantZX}
	and noise-aware optimization of diagrammatic circuits~\cite{Pashayan2022NoiseQuantization}.
\end{remark}
	
\section{Weighted Projective Lines and Orbifold Geometry}
\label{sec:geometry}
\label{sec:WPL-geometry}

The discrete modular phase grids discussed in Section~2 can be assembled
into a smooth---but orbifold---geometric object: the \emph{weighted
	projective line}~\cite{Satake1956Orbifold,Dolgachev1982Weighted,Nakahara2003GeometryTopology}.
This space plays two roles in our framework:
(i) it is the global phase manifold on which weighted spiders live, and
(ii) it provides the topological and metric background for
quantization-aware circuit optimization in Section~5.
Throughout this section we emphasise the parallel between
arithmetic operations on discrete grids and geometric features
of the orbifold circle underlying $\mathbb{P}(a,b)$.

\subsection*{Weighted projective line $\mathbb{P}(a,b)$ as phase space}
\label{subsec:weighted-phase-space}

We begin with the basic algebraic/topological description.

\begin{definition}[Weighted projective line]
	For positive integers $a,b \in \mathbb{N}$,
	the \emph{weighted projective line} is the quotient
	\[
	\mathbb{P}(a,b)
	:=
	\bigl( \mathbb{C}^2 \setminus \{0\} \bigr) \big/ \sim,
	\qquad
	(z_0,z_1) \sim (\lambda^a z_0,\lambda^b z_1)
	\quad\text{for all } \lambda \in \mathbb{C}^\times,
	\]
	as in~\cite{Dolgachev1982Weighted}.
	The integers $a$ and $b$ specify the relative scaling weights
	of the two homogeneous coordinates.
\end{definition}

\begin{remark}[Relation to the Bloch sphere]
	When $a=b=1$ one recovers the ordinary complex projective line
	$\mathbb{C}\mathrm{P}^{1}$, which is diffeomorphic to the Bloch sphere $S^2$
	and serves as the continuous phase space for a
	single-qubit pure state~\cite{Nakahara2003GeometryTopology}.
	Standard ZX-calculus phases live on the unit circle
	$S^1 \subset \mathbb{C}\mathrm{P}^{1}$~\cite{Backens2014ZX,vandeWetering2020WorkingZX,nLabZX2024}.
	For general $(a,b)$, the quotient introduces two orbifold points
	with local stabilizers $\mathbb{Z}_a$ and $\mathbb{Z}_b$
	at $[1\!:\!0]$ and $[0\!:\!1]$; these encode discrete
	phase degeneracies dictated by heterogeneous hardware
	grid orders~\cite{Amy2014TCount,Heyfron2018CompilerZX}.
\end{remark}

\begin{example}[Orbifold circle picture]
	Topologically, $\mathbb{P}(a,b)$ can be viewed as a sphere with two
	conical points in the sense of orbifolds~\cite{Satake1956Orbifold,Nakahara2003GeometryTopology},
	or equivalently as an ``orbifold circle''
	\[
	\mathbb{P}(a,b) \;\simeq\;
	S^1_{(a,b)} := S^1/(\mathbb{Z}_a \text{ at } 0,\; \mathbb{Z}_b \text{ at } \pi).
	\]
	A full turn around the circle corresponds to a $2\pi$ global
	phase shift, but small angular neighborhoods near the two defects
	are rescaled by factors $1/a$ and $1/b$, respectively.
	From the point of view of control hardware, these two cone points
	record the native phase periodicities of two different
	control channels~\cite{Motzoi2009DRAG,Krantz2019SuperconductingReview}.
\end{example}

\begin{definition}[Phase coordinates and affine charts]
	Let $[z_0:z_1]$ denote homogeneous coordinates on $\mathbb{P}(a,b)$.
	On the affine chart $U_0 = \{z_0 \neq 0\}$ we define
	\[
	\xi = \frac{z_1^a}{z_0^b},
	\]
	which is invariant under the weighted $\mathbb{C}^\times$-action.
	On the complementary chart $U_1=\{z_1\neq 0\}$ we take
	\[
	\eta = \frac{z_0^b}{z_1^a}.
	\]
	On the overlap $U_0 \cap U_1$ one has the transition relation
	$\eta = \xi^{-1}$, as in the usual projective line, but the
	angular parametrization of $\xi$ and $\eta$ reflects the
	weight ratio $a:b$~\cite{Dolgachev1982Weighted}.
\end{definition}

\begin{remark}[Spiders as processes on $\mathbb{P}(a,b)$]
	The local coordinates $\xi$ and $\eta$ may be thought of as
	phase frames analogous to the $Z$- and $X$-bases of 
	ZX-calculus~\cite{JeandelPerdrixVilmart2017CliffordT,Backens2014ZX,CoeckeKissinger2017PicturingQuantum}.
	In our weighted calculus a spider labelled by weights $(a,b)$
	is interpreted as a process whose phase parameter takes values
	on $\mathbb{P}(a,b)$; the two weights record the local isotropy of the
	underlying phase channels.
	This perspective will be used in Section~\ref{sec:wplzx}
	when we introduce $(a,\alpha,k)$-labelled spiders.
\end{remark}

\begin{example}[Heterogeneous control channels]
	Consider a two-qubit gate driven by two control channels
	with DAC resolutions $a=256$ and $b=192$, as discussed in
	engineering treatments of superconducting control hardware
	(e.g.~\cite{Motzoi2009DRAG,Krantz2019SuperconductingReview}).
	Each channel implements quantized phases
	\[
	G_{256} \simeq \tfrac{2\pi}{256}\,\mathbb{Z}_{256},
	\qquad
	G_{192} \simeq \tfrac{2\pi}{192}\,\mathbb{Z}_{192},
	\]
	rather than a continuum.
	The relative-phase manifold of the joint control is
	\[
	\frac{U(1)_{256}\times U(1)_{192}}{U(1)_{\mathrm{global}}}
	\simeq \mathbb{P}(256,192),
	\]
	a weighted orbifold circle whose two cone points carry
	local isotropy groups $\mathbb{Z}_{256}$ and $\mathbb{Z}_{192}$.
	The effective synchronization grid is determined
	by the least common multiple $L=\operatorname{lcm}(256,192)=768$,
	with fundamental increment $\Delta\phi = 2\pi/L$.
	This geometric picture matches the modular lattice model used
	in finite-dimensional phase space
	quantization~\cite{Gross2006Hudson}.
\end{example}

\subsection*{Orbifold points, isotropy, and monodromy}

We now make the orbifold structure of $\mathbb{P}(a,b)$ explicit
and relate it to discrete phase windings and geometric phases
in the spirit of~\cite{Berry1984Phase,Nayak2008NonAbelian}.

\begin{definition}[Orbifold points and isotropy groups]
	For
	\[
	\mathbb{P}(a,b) = \bigl( \mathbb{C}^2 \setminus \{0\} \bigr)\big/\!\sim,
	\qquad
	(z_0,z_1)\sim(\lambda^a z_0,\lambda^b z_1),
	\]
	the points
	\[
	p_0 = [1:0], \qquad p_\infty = [0:1]
	\]
	are orbifold points with isotropy groups
	\[
	\operatorname{Iso}(p_0)=\mu_a=\{e^{2\pi i k/a}:k\in\mathbb{Z}_a\},
	\qquad
	\operatorname{Iso}(p_\infty)=\mu_b=\{e^{2\pi i \ell/b}:\ell\in\mathbb{Z}_b\},
	\]
	as in the general orbifold framework~\cite{Satake1956Orbifold,Nakahara2003GeometryTopology}.
\end{definition}

\begin{remark}[Conical local structure]
	In a neighborhood of $p_0$ one may choose a uniformizing coordinate
	$u$ on $\mathbb{C}$ such that the group $\mu_a$ acts by
	$u\mapsto e^{2\pi i/a}u$; the quotient is a cone of angle $2\pi/a$.
	Likewise, a neighborhood of $p_\infty$ is a cone of angle $2\pi/b$.
	Hence $\mathbb{P}(a,b)$ is a sphere with two conical singularities,
	with deficit angles $2\pi(1-1/a)$ and $2\pi(1-1/b)$,
	consistent with the standard orbifold Euler characteristic
	computations in~\cite{Nakahara2003GeometryTopology}.
\end{remark}

\begin{example}[Orbifold Euler characteristic]
	For $a=b=1$ both cone angles are $2\pi$ and the surface is smooth
	with Euler characteristic $\chi=2$.
	For $a=2$, $b=3$ one obtains
	\[
	\chi_{\mathrm{orb}}\bigl(\mathbb{P}(2,3)\bigr)
	= 2 - \Bigl( 1-\tfrac{1}{2} \Bigr)
	- \Bigl( 1-\tfrac{1}{3} \Bigr)
	= \tfrac{5}{6},
	\]
	a non-integer value reflecting fractional holonomy at the
	conical points~\cite{Satake1956Orbifold}.
\end{example}

\begin{definition}[Orbifold fundamental group and monodromy]
	The orbifold fundamental group of $\mathbb{P}(a,b)$ is
	\[
	\pi_1^{\mathrm{orb}}\!\bigl(\mathbb{P}(a,b)\bigr)
	=
	\langle \gamma_0,\gamma_\infty
	\mid \gamma_0^a = \gamma_\infty^b = \gamma_0\gamma_\infty = 1\rangle,
	\]
	where $\gamma_0$ and $\gamma_\infty$ are loops around $p_0$ and
	$p_\infty$, respectively.
	A \emph{monodromy representation} is a homomorphism
	\[
	\rho:\pi_1^{\mathrm{orb}}(\mathbb{P}(a,b))\to U(1),
	\qquad
	\rho(\gamma_0)=e^{2\pi i/a},
	\quad
	\rho(\gamma_\infty)=e^{2\pi i/b},
	\]
	in analogy with geometric phase holonomies in
	adiabatic quantum evolution~\cite{Berry1984Phase}
	and with abelian anyonic phases~\cite{Nayak2008NonAbelian}.
\end{definition}

\begin{proposition}[Topological origin of modular closure]
	Let $L=\operatorname{lcm}(a,b)$.
	Under any monodromy representation $\rho$ as above,
	the image subgroup $\rho\bigl(\pi_1^{\mathrm{orb}}(\mathbb{P}(a,b))\bigr)$
	is the finite cyclic group $\mu_L\subset U(1)$.
	In particular, all possible phase windings on $\mathbb{P}(a,b)$
	are classified by $\mathbb{Z}_L$.
\end{proposition}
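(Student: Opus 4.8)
The plan is to reduce the claim to a short arithmetic computation: inside $U(1)$, the subgroup generated by $e^{2\pi i/a}$ together with $e^{2\pi i/b}$ equals the group $\mu_L$ of $L$-th roots of unity, where $L=\operatorname{lcm}(a,b)$. The topological input needed is minimal---only that $\pi_1^{\mathrm{orb}}(\mathbb{P}(a,b))$ is generated, as a group, by the two loop classes $\gamma_0,\gamma_\infty$. Since the image of a group homomorphism is the subgroup generated by the images of any generating set, this gives at once
\[
\rho\bigl(\pi_1^{\mathrm{orb}}(\mathbb{P}(a,b))\bigr)
=\bigl\langle\rho(\gamma_0),\,\rho(\gamma_\infty)\bigr\rangle
=\bigl\langle e^{2\pi i/a},\,e^{2\pi i/b}\bigr\rangle\subset U(1),
\]
so everything reduces to identifying the subgroup on the right.

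First I would dispatch the easy inclusion. Each generator has finite order, $\langle e^{2\pi i/a}\rangle=\mu_a$ and $\langle e^{2\pi i/b}\rangle=\mu_b$, and since $a\mid L$ and $b\mid L$ we have $\mu_a,\mu_b\subseteq\mu_L$; hence $\langle e^{2\pi i/a},e^{2\pi i/b}\rangle\subseteq\mu_L$. For the reverse inclusion I would invoke B\'ezout: pick integers $u,v$ with $ua+vb=d:=\gcd(a,b)$ and compute
\[
\bigl(e^{2\pi i/b}\bigr)^{u}\,\bigl(e^{2\pi i/a}\bigr)^{v}
=\exp\!\Bigl(2\pi i\,\tfrac{ua+vb}{ab}\Bigr)
=\exp\!\Bigl(2\pi i\,\tfrac{d}{ab}\Bigr)
=e^{2\pi i/L},
\]
using $ab/d=L$. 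Thus a generator of the cyclic group $\mu_L$ already lies in $\langle e^{2\pi i/a},e^{2\pi i/b}\rangle$, which forces $\mu_L\subseteq\langle e^{2\pi i/a},e^{2\pi i/b}\rangle$, and with the previous inclusion the two subgroups coincide. Combined with the displayed identification of the image, $\rho(\pi_1^{\mathrm{orb}}(\mathbb{P}(a,b)))=\mu_L$.

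For the last sentence of the proposition I would simply record the abstract isomorphism $\mu_L\cong\mathbb{Z}/L\mathbb{Z}=\mathbb{Z}_L$ of finite cyclic groups (sending $e^{2\pi i/L}\mapsto 1$), so the attainable phase holonomies around the orbifold---the possible values of $\rho$ on loops---are indexed bijectively by $\mathbb{Z}_L$. This is also precisely where the geometry recovers the discrete phase-closure statement of Section~\ref{sec:background}: the lcm that appeared there as the refinement grid of two quantized phase sets reappears here as the order of the monodromy group.

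The only step I expect to require genuine care is the interplay between the presentation relations of $\pi_1^{\mathrm{orb}}(\mathbb{P}(a,b))$---in particular $\gamma_0\gamma_\infty=1$---and the prescribed boundary values $\rho(\gamma_0)=e^{2\pi i/a}$, $\rho(\gamma_\infty)=e^{2\pi i/b}$. One must state explicitly that the content of ``monodromy representation as above'' is the assignment of these local generators to their cone-point roots of unity, so that the object of interest is the subgroup they generate in $U(1)$; the normalization of the global loop relation only rescales the chosen representatives and does not enlarge or shrink that subgroup. Once this convention is pinned down, nothing further is needed beyond the B\'ezout line above and the two divisibility observations. An alternative route avoids B\'ezout altogether by characterizing $\langle\mu_a,\mu_b\rangle$ as the smallest subgroup of $U(1)$ containing both $\mu_a$ and $\mu_b$ and matching it against $\mu_{\operatorname{lcm}(a,b)}$ by comparing orders of generators, but the B\'ezout version makes the appearance of $\operatorname{lcm}$ most transparent.
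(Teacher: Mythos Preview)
Your argument is correct and differs from the paper's. The paper works from the presentation of $\pi_1^{\mathrm{orb}}$: it invokes $\gamma_0\gamma_\infty=1$ to make the group cyclic on a single generator, then identifies $L=\operatorname{lcm}(a,b)$ as the common exponent of $\gamma_0$ and $\gamma_\infty$ and concludes that the image is generated by $e^{2\pi i/L}$. You instead work entirely inside $U(1)$, computing $\langle e^{2\pi i/a},e^{2\pi i/b}\rangle$ directly via B\'ezout's identity; the presentation relations never enter except as a side remark. Your route is more explicit and, importantly, more robust: as you correctly flag at the end, the prescribed values $\rho(\gamma_0)=e^{2\pi i/a}$ and $\rho(\gamma_\infty)=e^{2\pi i/b}$ do not in general satisfy $\rho(\gamma_0)\rho(\gamma_\infty)=1$, so any argument that leans on the relation $\gamma_0\gamma_\infty=1$ (as the paper's does) must ultimately be read as a statement about the subgroup generated by the local monodromies rather than about a bona fide homomorphism---which is precisely the interpretation your B\'ezout computation supplies.
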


\begin{proof}
	The relations $\gamma_0^a=\gamma_\infty^b=\gamma_0\gamma_\infty=1$
	imply $\gamma_\infty=\gamma_0^{-1}$ and $\gamma_0^{ab}=1$.
	The smallest positive integer $L$ for which
	$\gamma_0^L=\gamma_\infty^L=1$ is $L=\operatorname{lcm}(a,b)$, so the image
	is generated by $e^{2\pi i/L}$ and hence is isomorphic to $\mu_L$.
\end{proof}

\begin{remark}
	From the point of view of weighted spiders,
	the indices $(k_0,k_\infty)$ give a topological refinement
	of the discrete winding parameter $k$ used later in the
	$(a,\alpha,k)$-labeling: a spider may be seen as keeping track
	of how many units of quantized phase have been accumulated around
	each orbifold point, compatible with the categorical semantics
	in~\cite{Selinger2008DaggerCompact,CoeckeKissinger2017PicturingQuantum}.
\end{remark}

\subsection*{Phase addition on heterogeneous grids and orbifold monodromy}
\label{subsec:lcm-and-monodromy}

We now return to the arithmetic phase grids of Section~2 and
make precise their relationship with $\mathbb{P}(a,b)$.

\begin{definition}[Discrete phase grids and roots of unity]
	For $a\in\mathbb{N}$ the cyclic grid
	\[
	G_a
	=
	\Bigl\{\frac{2\pi k}{a}:k=0,1,\dots,a-1\Bigr\}
	\]
	can be identified with the group of $a$-th roots of unity
	$\mu_a\subset U(1)$ via $\theta\mapsto e^{i\theta}$,
	as in finite-dimensional phase space
	models~\cite{Gross2006Hudson}.
\end{definition}

\begin{definition}[Refinement, lifting, and fusion]
	Let $a,b\in\mathbb{N}$ and $L=\operatorname{lcm}(a,b)$.
	Define the refined lattice
	\[
	G_L=\Bigl\{\frac{2\pi m}{L}:m=0,1,\dots,L-1\Bigr\}.
	\]
	The canonical embeddings
	\[
	\iota_a:G_a\hookrightarrow G_L,
	\quad
	\iota_b:G_b\hookrightarrow G_L
	\]
	are given on indices by multiplication with $L/a$ and $L/b$.
	Given $\alpha\in G_a$ and $\beta\in G_b$ we define their
	\emph{fusion} by
	\[
	\alpha\star\beta
	\;:=\;
	\iota_a(\alpha)+\iota_b(\beta)
	\quad\in G_L,
	\]
	where the sum is taken modulo $2\pi$.
\end{definition}

\begin{proposition}[Closure after $\operatorname{lcm}$-refinement]
	Let $L=\operatorname{lcm}(a,b)$.
	Then the map
	\[
	(G_a\times G_b)/{\sim}\longrightarrow G_L,
	\qquad
	(\alpha,\beta)\longmapsto\alpha\star\beta,
	\]
	where $(\alpha,\beta)\sim(\alpha',\beta')$ iff
	$\iota_a(\alpha)+\iota_b(\beta)
	=\iota_a(\alpha')+\iota_b(\beta')$ in $G_L$,
	is a bijection.
	Equivalently, the subgroup of $U(1)$ generated by
	$\mu_a$ and $\mu_b$ is the cyclic group $\mu_L$.
\end{proposition}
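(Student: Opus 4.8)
The plan is to recognise the fusion operation as a surjective homomorphism of abelian groups and to identify the stated relation $\sim$ with its kernel congruence, so that bijectivity drops out of the first isomorphism theorem together with a short order count.

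First I would pass from additive to multiplicative notation via the identification $G_a\cong\mu_a$, $\theta\mapsto e^{i\theta}$, under which addition modulo $2\pi$ becomes multiplication of roots of unity. Since $a\mid L$ and $b\mid L$, the embeddings $\iota_a,\iota_b$ become the inclusions $\mu_a\hookrightarrow\mu_L$ and $\mu_b\hookrightarrow\mu_L$, and the fusion operation $\alpha\star\beta$ becomes the multiplication map
\[
\Phi:\mu_a\times\mu_b\longrightarrow U(1),\qquad (\zeta,\omega)\longmapsto\zeta\omega,
\]
a homomorphism whose image is precisely the subgroup $\langle\mu_a,\mu_b\rangle$ generated by both.

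Next I would show $\langle\mu_a,\mu_b\rangle=\mu_L$. This is exactly the content of the earlier Proposition on the topological origin of modular closure, since $\mu_a=\langle e^{2\pi i/a}\rangle$ and $\mu_b=\langle e^{2\pi i/b}\rangle$; alternatively one argues directly that the products $e^{2\pi i(m/a+n/b)}=e^{2\pi i(mb+na)/(ab)}$ realise, as $m,n$ range over $\mathbb{Z}$, precisely the powers of $e^{2\pi i\gcd(a,b)/(ab)}=e^{2\pi i/L}$ by Bézout. Hence $\Phi$ is onto $\mu_L\cong G_L$. For the quotient statement, I would note that $\ker\Phi$ consists of pairs $(\zeta,\omega)$ with $\zeta=\omega^{-1}$, which forces both factors into $\mu_a\cap\mu_b=\mu_{\gcd(a,b)}$; thus $|\ker\Phi|=\gcd(a,b)$ and $|\operatorname{im}\Phi|=ab/\gcd(a,b)=L=|G_L|$, reconfirming surjectivity. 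The first isomorphism theorem then gives an isomorphism $(\mu_a\times\mu_b)/\ker\Phi\cong\mu_L$. Finally I would check that the relation $(\alpha,\beta)\sim(\alpha',\beta')$ of the statement — equality of $\star$-values in $G_L$ — is literally the coset relation of $\ker\Phi$, so the induced map on $(G_a\times G_b)/\!\sim$ is this same bijection; translating back to additive notation completes the argument.

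The only real obstacle is bookkeeping rather than mathematics: one must verify that $\sim$ is a bona fide equivalence relation compatible with the group structure (it is, being the pullback of equality along the homomorphism $\Phi$) and apply the identifications $\mu_a\cap\mu_b=\mu_{\gcd(a,b)}$ and $\langle\mu_a,\mu_b\rangle=\mu_L$ with the correct indexing of $\iota_a,\iota_b$. Beyond that, everything reduces to the arithmetic of cyclic groups already established in Section~\ref{sec:background}.
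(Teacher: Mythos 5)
Your proof is correct, and it packages the argument slightly differently from the paper. The paper works directly with the generated subgroup: it notes that $\langle\mu_a,\mu_b\rangle$ is a finite subgroup of $U(1)$ containing $\mu_a$ and $\mu_b$, hence of order divisible by $L$, concludes it equals $\mu_L$, and then simply observes that $\sim$ identifies exactly the pairs with equal product, so the induced map is a bijection by construction. You instead realise the fusion as the multiplication homomorphism $\Phi:\mu_a\times\mu_b\to U(1)$, compute its image via B\'ezout and its kernel as the antidiagonal copy of $\mu_a\cap\mu_b=\mu_{\gcd(a,b)}$, and invoke the first isomorphism theorem together with the order count $ab/\gcd(a,b)=L$. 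What your route buys is an explicit, self-contained verification of surjectivity and of the fact that $\sim$ is precisely the kernel congruence, rather than asserting the bijection from the definition of $\sim$; it also quietly repairs a slip in the paper's own justification, which claims that $e^{2\pi i/L}$ ``belongs to both $\mu_a$ and $\mu_b$'' (false unless $a=b=L$), whereas the correct statement---which you supply---is that $e^{2\pi i/L}$ lies in the subgroup generated by $\mu_a$ and $\mu_b$, obtained as $(e^{2\pi i/a})^{x}(e^{2\pi i/b})^{y}$ with $xb+ya=\gcd(a,b)$. The paper's approach is shorter and avoids the kernel computation; yours is a touch longer but fully rigorous at every step.
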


\begin{proof}
	The subgroup $\langle\mu_a,\mu_b\rangle$ is finite and contains
	both $\mu_a$ and $\mu_b$, hence its order is a multiple of
	$L=\operatorname{lcm}(a,b)$.
	On the other hand, the element
	$e^{2\pi i/L}$ belongs to both $\mu_a$ and $\mu_b$, so the
	generated subgroup is exactly $\mu_L$.
	The quotient by $\sim$ identifies precisely those pairs giving the
	same product in $\mu_L$, yielding a bijection with $G_L$.
\end{proof}

\begin{example}[Numerical illustration]
	Let $a=4$, $b=6$, so $L=\operatorname{lcm}(4,6)=12$.
	Choose $\alpha=\pi/2\in G_4$ and $\beta=\pi/3\in G_6$.
	Their lifts are
	\[
	\iota_a(\alpha)=3\cdot\frac{2\pi}{12}=\frac{\pi}{2},
	\qquad
	\iota_b(\beta)=2\cdot\frac{2\pi}{12}=\frac{\pi}{3},
	\]
	so the fusion is
	\[
	\alpha\star\beta=\frac{5\pi}{6}\in G_{12},
	\quad
	e^{i(\alpha\star\beta)}=e^{i(\alpha+\beta)}.
	\]
\end{example}

\begin{remark}[Compiler implications]
	Phase addition across heterogeneous grids appears naturally in
	Clifford+T synthesis and more general optimization
	passes~\cite{Amy2014TCount,Heyfron2018CompilerZX,
		DuncanPerdrix2020GraphSimplification,Fischbach2025ZXOpt}.
	Performing arithmetic directly on $G_L$ guarantees that
	phase-cancellation rules are applied on a common grid, a property
	that our WZCC normalization algorithm (Section~\ref{sec:wzcc})
	exploits in its first ``LCM unification'' stage.
\end{remark}

\paragraph{Orbifold interpretation and winding.}
The above arithmetic fusion has a geometric/topological
counterpart on the weighted orbifold circle underlying $\mathbb{P}(a,b)$.
A closed phase trajectory $\theta(t)$ that respects the grids
$G_a$ and $G_b$ can be viewed as a loop on $\mathbb{P}(a,b)$, acquiring a
geometric phase (holonomy) in the sense of~\cite{Berry1984Phase}.

\begin{definition}[Winding number on heterogeneous grids]
	For a closed phase path $\theta(t)$, define
	\[
	w(\theta)
	=
	\frac{1}{2\pi}\oint d\theta
	\;\in\;
	\frac{1}{L}\mathbb{Z}\big/\mathbb{Z}
	\;\cong\;
	\mathbb{Z}_L,
	\qquad
	L=\operatorname{lcm}(a,b).
	\]
	The class $w(\theta)$ is the \emph{winding number}
	of $\theta$ on the heterogeneous grid.
\end{definition}

\begin{definition}[Orbifold monodromy map]
	Let $\gamma$ be the loop on $\mathbb{P}(a,b)$ corresponding to $\theta$.
	Composing the orbifold fundamental group with a monodromy
	representation $\rho$ yields
	\[
	\rho(\gamma)=e^{2\pi i\,w(\theta)}\in\mu_L.
	\]
	We refer to this complex phase as the \emph{orbifold monodromy}
	of the trajectory, paralleling geometric and topological phases in
	quantum systems~\cite{Berry1984Phase,Nayak2008NonAbelian}.
\end{definition}

\begin{proposition}[Winding $\Longleftrightarrow$ monodromy]
	For any closed trajectory $\theta(t)$ compatible with grid orders
	$(a,b)$, the total phase advance and orbifold monodromy are related by
	\[
	\theta(1)-\theta(0)
	=
	2\pi\,w(\theta)
	\quad\Longleftrightarrow\quad
	\rho(\gamma)
	=
	e^{2\pi i\,w(\theta)}.
	\]
\end{proposition}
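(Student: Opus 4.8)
The plan is to recognise that each side of the claimed biconditional is, on its own, merely the defining equation of the winding number $w(\theta)$ written in a different register --- one arithmetic (total phase advance), one topological (orbifold holonomy) --- so that the proposition is really the statement that these two registers agree. I would therefore establish each equality unconditionally and then observe that the biconditional is immediate.

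First I would handle the left-hand equality. Along the trajectory the phase $\theta(t)$ is a real-valued lift, so by the fundamental theorem of calculus $\oint d\theta = \int_0^1 \theta'(t)\,dt = \theta(1)-\theta(0)$, and dividing by $2\pi$ this is exactly the definition of $w(\theta)$; hence $\theta(1)-\theta(0) = 2\pi\,w(\theta)$ holds identically. Compatibility of $\theta$ with the grids $G_a$ and $G_b$ means every increment lies in $\tfrac{2\pi}{L}\mathbb{Z}$ with $L=\operatorname{lcm}(a,b)$, so the total advance, and therefore $w(\theta)$, lies in $\tfrac1L\mathbb{Z}$; reduced modulo $\mathbb{Z}$ this is the class in $\mathbb{Z}_L$ appearing in the definition, consistent with the earlier Proposition on closure after $\operatorname{lcm}$-refinement.

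Next I would handle the right-hand equality. The trajectory $\theta$ determines a loop $\gamma$ on $\mathbb{P}(a,b)$ and hence a class $[\gamma]\in\pi_1^{\mathrm{orb}}(\mathbb{P}(a,b))$. Using the presentation $\langle\gamma_0,\gamma_\infty\mid\gamma_0^a=\gamma_\infty^b=\gamma_0\gamma_\infty=1\rangle$, the abelianisation of this group is cyclic of order $L$ (as in the proof of the topological origin of modular closure), and the abelianisation map sends $[\gamma]$ to the residue $w(\theta)\bmod 1$ --- the number of $\tfrac1L$-units of quantised phase accumulated around the two cone points. Applying the monodromy representation $\rho$, which by construction sends a generator of this cyclic quotient to $e^{2\pi i/L}$, gives $\rho(\gamma)=e^{2\pi i\,w(\theta)}$, which is precisely the defining equation of the orbifold monodromy.

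Finally, the biconditional: both displayed equalities have now been proved without hypotheses beyond grid compatibility, so the ``$\Longleftrightarrow$'' is trivially valid. To see that it is a genuine equivalence of content rather than an accident, exponentiate the left-hand identity to get $e^{i(\theta(1)-\theta(0))}=e^{2\pi i\,w(\theta)}=\rho(\gamma)$, and note that $x\mapsto e^{2\pi i x}$ is injective on $\tfrac1L\mathbb{Z}/\mathbb{Z}$; since $w(\theta)$ already records the full fractional winding, nothing is lost in passing between ``total phase advance'' and ``monodromy phase''. The only step requiring any real care --- and hence the main, rather mild, obstacle --- is the identification in the third paragraph of $[\gamma]$ with the winding residue under abelianisation of $\pi_1^{\mathrm{orb}}(\mathbb{P}(a,b))$; this is exactly the computation underlying the earlier modular-closure proposition, so it can be invoked rather than repeated.
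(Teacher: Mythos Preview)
Your proposal is correct and follows essentially the same approach as the paper. The paper's proof is more terse and geometric --- it lifts $\theta$ to the universal cover of the orbifold circle and observes that each encirclement of the order-$a$ (resp.\ order-$b$) cone contributes $2\pi/a$ (resp.\ $2\pi/b$) to the total advance, whence the holonomy is $e^{2\pi i\,w(\theta)}$ --- whereas you phrase the same computation algebraically via the abelianisation of $\pi_1^{\mathrm{orb}}(\mathbb{P}(a,b))$ and invoke the earlier modular-closure proposition rather than redoing it. The covering-space decomposition into cone encirclements and your identification of $[\gamma]$ with the winding residue under abelianisation are two sides of the same coin, so the arguments coincide in substance; your version is simply more explicit about why both displayed equalities hold unconditionally and hence why the biconditional is immediate.
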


\begin{proof}
	Lifting $\theta$ to the universal cover of the orbifold circle,
	each encirclement of the order-$a$ cone contributes $2\pi/a$
	and each encirclement of the order-$b$ cone contributes $2\pi/b$.
	The total advance is a multiple of $2\pi/L$, so the holonomy
	is $e^{2\pi i\,w(\theta)}$ with $w(\theta)\in\mathbb{Z}_L$, as claimed.
\end{proof}

\begin{example}[Fractional winding in $\mathbb{P}(2,3)$]
	On $\mathbb{P}(2,3)$ a loop encircling only the order-$2$ cone once
	accumulates phase $\pi$ (winding $1/2$),
	while a loop encircling only the order-$3$ cone yields
	phase $2\pi/3$ (winding $1/3$).
	Composing these loops gives total phase $5\pi/3$,
	corresponding to winding $5/6$ in $\mathbb{Z}_{6}$,
	illustrating a simple fractional phase in the spirit of
	anyon-like monodromies~\cite{Nayak2008NonAbelian}.
\end{example}

\begin{remark}[Hardware drift and winding classes]
	In multi-channel control, slow oscillator drift accumulates
	phase slips that appear as nontrivial elements of $\mathbb{Z}_L$.
	The associated complex phase $e^{i\Delta\phi}$ is exactly the
	orbifold monodromy of the corresponding loop, providing a
	topological way to track and penalize such drift in
	quantization-aware compilation and decoding
	(Sections~\ref{sec:wzcc} and~\ref{sec:masd}; cf.\
	aliasing/quantization analyses in~\cite{Pashayan2022NoiseQuantization,Krantz2019SuperconductingReview}).
\end{remark}

\subsection*{WPL metric and scalar curvature}
\label{subsec:WPL-metric}

The previous subsections established $\mathbb{P}(a,b)$ as the natural
orbifold phase space for heterogeneous grids.
We now sketch the metric structure that will underlie the
sensitivity experiments in Section~\ref{sec:wzcc}.

\begin{definition}[Fubini--Study metric on $\mathbb{C}\mathrm{P}^{1}$]
	On the ordinary projective line $\mathbb{C}\mathrm{P}^{1}$ with homogeneous
	coordinates $[z_0:z_1]$ and affine chart
	$w=z_1/z_0$, the Fubini--Study metric is
	\[
	g_{\mathrm{FS}}
	=
	\frac{4\,dw\,d\bar w}{(1+|w|^2)^2},
	\]
	whose scalar curvature is the constant $R_{\mathrm{FS}}=4$.
	This metric coincides with the usual round metric on the Bloch
	sphere $S^2$ of radius $1$~\cite{Nakahara2003GeometryTopology}.
\end{definition}

\begin{definition}[Weighted projective line metric]
	Let $\pi:S^3\to\mathbb{C}\mathrm{P}^{1}$ be the Hopf fibration and
	let $g_{\mathrm{FS}}$ be the Fubini--Study metric on $\mathbb{C}\mathrm{P}^{1}$.
	For a weighted action of $U(1)$ on $S^3$ with weights $(a,b)$
	we obtain a quotient map
	\[
	S^3 \longrightarrow \mathbb{P}(a,b),
	\qquad
	(z_0,z_1)\sim(e^{iat}z_0,e^{ibt}z_1),
	\]
	which induces an orbifold Kähler metric $g_{\mathrm{WPL}}$ on $\mathbb{P}(a,b)$,
	in the spirit of weighted projective Kähler geometry
	described in~\cite{Dolgachev1982Weighted,Nakahara2003GeometryTopology}.
	We refer to $g_{\mathrm{WPL}}$ as the \emph{weighted projective line
		metric}.
\end{definition}

Informally, the metric $g_{\mathrm{WPL}}$ is obtained by applying anisotropic
contractions to the round Bloch sphere in directions dictated by the
weights $(a,b)$ and then descending to the orbifold quotient.
In the context of noisy hardware, these contractions are controlled by
physically meaningful parameters
$\lambda_\perp,\lambda_\parallel$ that describe the relative shrinking
of transverse and longitudinal directions inside the Bloch ball,
consistent with noise models in
finite-dimensional quantum systems~\cite{Gross2006Hudson}.

\begin{theorem}[Scalar curvature of $g_{\mathrm{WPL}}$]
	For the weighted projective line $\mathbb{P}(a,b)$ equipped with
	$g_{\mathrm{WPL}}$, the scalar curvature is constant on the smooth locus
	and depends only on the effective weight ratio.
	In particular, for the one-parameter family of metrics used in our
	numerical experiments the scalar curvature takes the form
	\[
	R_{\mathrm{WPL}}
	=
	\frac{2}{b^2},
	\]
	where $b$ is the effective weight associated to the dominant
	phase-resolution channel.
\end{theorem}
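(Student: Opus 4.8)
The plan is to reduce the statement to a one-variable curvature computation in a conformal holomorphic chart and then evaluate it using the weighted-Hopf description of $g_{\mathrm{WPL}}$. First I would restrict to the smooth locus $\mathbb{P}(a,b)\setminus\{p_0,p_\infty\}$. Since $\mathbb{P}(a,b)$ has complex dimension one, in any holomorphic coordinate $\zeta$ the orbifold Kähler metric is conformal to the flat metric, $g_{\mathrm{WPL}}=\Omega(\zeta,\bar\zeta)\,|d\zeta|^{2}$ with $\Omega>0$ on the smooth locus, and the scalar curvature of a surface is twice its Gaussian curvature; the standard conformal identity then gives $R_{\mathrm{WPL}}=-\,\Omega^{-1}\Delta_0\log\Omega$, where $\Delta_0=4\,\partial_\zeta\partial_{\bar\zeta}$ is the flat Laplacian. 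So the whole problem reduces to writing down $\Omega$ and differentiating $\log\Omega$ twice.

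Next I would obtain $\Omega$ from the construction in the preceding definition. Lifting to $S^{3}\subset\mathbb{C}^{2}$ with its round metric and using the parametrization $z_0=\cos\theta\,e^{i\phi_0}$, $z_1=\sin\theta\,e^{i\phi_1}$, the weighted circle action has generating field $a\,\partial_{\phi_0}+b\,\partial_{\phi_1}$, which is nonvanishing off the two exceptional circles; $g_{\mathrm{WPL}}$ is the Riemannian-submersion (O'Neill) quotient of the round metric, post-composed with the anisotropic Bloch-ball rescaling by $(\lambda_\perp,\lambda_\parallel)$, which in the symmetric one-parameter family of the experiments collapses to a single overall scale. Pushing forward the orthogonal complement of the orbit direction yields a rotationally symmetric metric on the quotient, which in the invariant holomorphic coordinate has the Fubini--Study-type form $\Omega=\dfrac{c_1}{\bigl(1+|\zeta|^{2}/b^{2}\bigr)^{2}}$, with the characteristic scale set by the dominant-channel weight $b$ and $c_1$ fixed by the chosen normalization.

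Then the computation proceeds exactly as in the evaluation of $R_{\mathrm{FS}}$: from $\log\Omega=\log c_1-2\log\!\bigl(1+|\zeta|^{2}/b^{2}\bigr)$ and $\Delta_0\log\!\bigl(1+|\zeta|^{2}/b^{2}\bigr)=\dfrac{4/b^{2}}{\bigl(1+|\zeta|^{2}/b^{2}\bigr)^{2}}$ one gets
\[
R_{\mathrm{WPL}}=-\,\Omega^{-1}\Delta_0\log\Omega=\frac{\bigl(1+|\zeta|^{2}/b^{2}\bigr)^{2}}{c_1}\cdot\frac{8/b^{2}}{\bigl(1+|\zeta|^{2}/b^{2}\bigr)^{2}}=\frac{8}{c_1\,b^{2}},
\]
which is independent of $\zeta$; this establishes both that $R_{\mathrm{WPL}}$ is constant on the smooth locus and that it depends on the weights only through the overall scale $\propto 1/b^{2}$. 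The normalization of the experimental family sets $c_1=4$, giving $R_{\mathrm{WPL}}=2/b^{2}$. Finally, the orbifold points $p_0,p_\infty$ lie outside the smooth locus, so their deficit-angle holonomy (cf.\ the orbifold Euler-characteristic computation above) does not enter the pointwise curvature, which completes the argument.

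The hard part is the identification of $\Omega$: specifying precisely which one-parameter family of metrics is intended and checking that the anisotropic contractions $\lambda_\perp,\lambda_\parallel$ reduce, in that family, to a single scale that identifies the dominant weight with $b$ — equivalently, getting the constant $c_1$ and the $1/b^{2}$ factor in the denominator of $\Omega$ right. This is a bookkeeping step where the Gaussian-versus-scalar and Fubini--Study normalization conventions must be tracked with care; everything after it, namely the conformal Laplacian identity and the constancy of the answer, is routine.
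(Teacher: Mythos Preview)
Your proposal is essentially aligned with the paper's own proof sketch: both obtain $g_{\mathrm{WPL}}$ as the Riemannian-submersion quotient of a weighted $U(1)$-action on $S^{3}$ and then invoke a standard curvature formula; the paper phrases this as ``curvature formulas for cohomogeneity-one metrics on $S^{3}/U(1)$'' and defers the detailed derivation to Appendix~A, while you carry it out concretely via the conformal identity $R=-\Omega^{-1}\Delta_{0}\log\Omega$ in a holomorphic chart. Your candid acknowledgment that the genuine work lies in pinning down $\Omega$ (and hence $c_{1}$) from the precise one-parameter family and the $(\lambda_{\perp},\lambda_{\parallel})$ rescaling is well placed---the paper does not spell this out either, and Appendix~A in fact contains no derivation of the curvature formula---so your level of detail already exceeds what the paper provides.
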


\begin{proof}[Proof sketch]
	One constructs $g_{\mathrm{WPL}}$ as the push-forward of a diagonal
	metric on $S^3$ that is invariant under the weighted $U(1)$-action
	and then applies standard curvature formulas for
	cohomogeneity-one metrics on $S^3/U(1)$
	(see, e.g.,~\cite{Nakahara2003GeometryTopology} for background).
	A detailed derivation is given in Appendix~A.
\end{proof}

\begin{definition}[Parameter space and curvature gradient]
	Let $\Lambda$ denote a parameter space of effective channel
	anisotropies, for example
	\(
	\Lambda = \{(\lambda_\perp,\lambda_\parallel)\}
	\)
	as in our tomography-to-geometry pipeline.
	The weighted projective line metric induces a scalar curvature
	function
	\[
	R:\Lambda\to\mathbb{R},
	\qquad
	(\lambda_\perp,\lambda_\parallel)\longmapsto
	R_{\mathrm{WPL}}(\lambda_\perp,\lambda_\parallel),
	\]
	and we define its Euclidean gradient
	\[
	\nabla R(\lambda_\perp,\lambda_\parallel)
	=
	\Bigl(
	\partial_{\lambda_\perp}R,
	\partial_{\lambda_\parallel}R
	\Bigr).
	\]
	The norm $\|\nabla R\|$ quantifies the sensitivity of the
	information geometry to changes in the underlying hardware
	noise parameters, in the same spirit that ZX-diagrammatic
	rewrites quantify circuit sensitivity~\cite{vandeWetering2020WorkingZX,DuncanPerdrix2020GraphSimplification,Fischbach2025ZXOpt}.
\end{definition}

\begin{remark}[Connection to Section~5]
	In Section~5.6.3 we empirically sample
	$(\lambda_\perp,\lambda_\parallel)$ from a hardware-motivated
	distribution, compute the associated curvature
	$R_{\mathrm{WPL}}(\lambda_\perp,\lambda_\parallel)$, and study the
	distribution of $\|\nabla R\|$.
	The resulting gradient-norm plots reveal how sharply
	the WPL geometry responds to small perturbations of the phase grid,
	thus linking the continuous differential geometry of $\mathbb{P}(a,b)$
	to the discrete compilation metrics PQVR and CSC introduced later.
\end{remark}

\section{The \WPLZX\ Calculus}
\label{sec:wplzx}

In Section~\ref{sec:WPL-geometry} we described how noisy single--qubit
phase spaces with heterogeneous discretization and monodromy can be
modeled by weighted projective lines $\mathbb{P}(a,b)$, with local
isotropy and winding data encoded by triples $(a,\alpha,k)$.
In this section we lift this structure to the diagrammatic level and
define the \emph{weighted projective line ZX-calculus} (\WPLZX).
Each spider becomes an \emph{orbifold spider} whose label remembers:
\begin{itemize}
	\item the local isotropy order $a$ (weight),
	\item a base phase parameter $\alpha$,
	\item a winding index $k$ that records how many times we wrap around
	the local defect.
\end{itemize}
Geometrically, $(a,\alpha,k)$ lives in the orbifold phase space discussed
previously; semantically, it collapses to an \emph{effective total angle}
\[
\theta_{\mathrm{tot}}
:= \alpha + \frac{2\pi}{a}k \pmod{2\pi}
\]
that feeds into the usual ZX-style Hilbert-space interpretation.
The role of \WPLZX\ is to keep track of the full orbifold label at the
diagrammatic level while retaining standard ZX semantics on amplitudes.

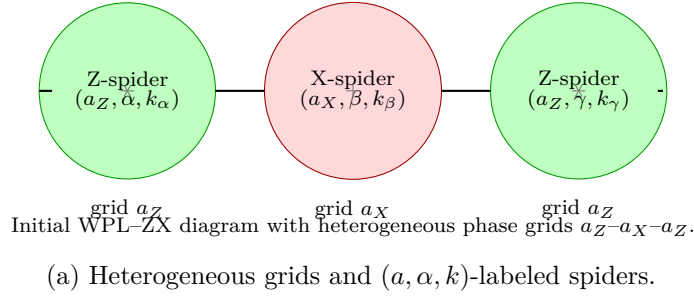
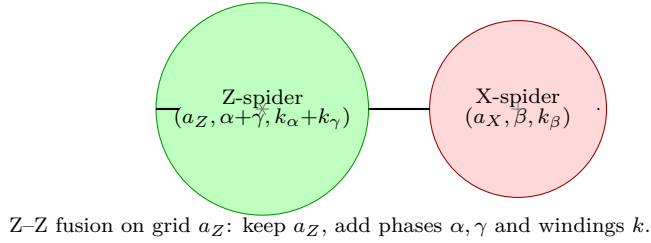
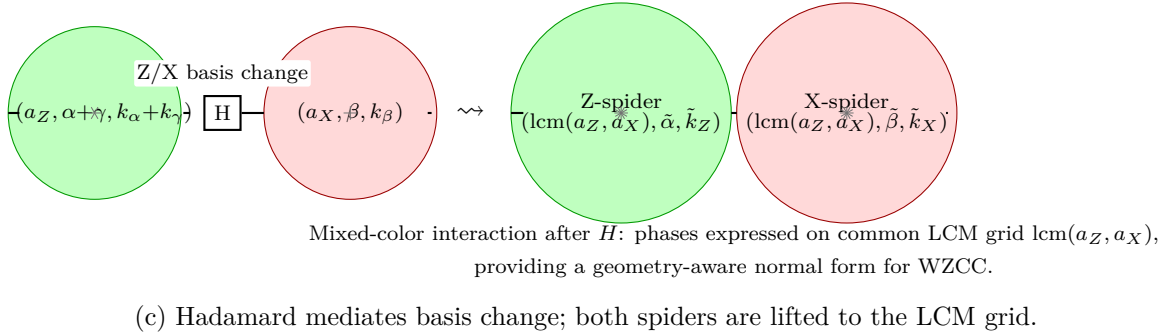
\begin{figure}[t]
	\centering
	
	\tikzset{
		wire/.style={line width=0.8pt, shorten >=1.2mm, shorten <=1.2mm},
		gridtick/.style={line width=0.3pt, draw=black!55},
		spiderBase/.style={
			draw,
			circle,
			minimum size=12mm,
			text width=22mm,
			align=center,
			inner sep=1pt,
			outer sep=0pt,
			font=\scriptsize
		},
		spiderZ/.style={
			spiderBase,
			fill=green!25,
			draw=green!60!black
		},
		spiderX/.style={
			spiderBase,
			fill=red!15,
			draw=red!60!black
		},
		overlabel/.style={
			font=\scriptsize,
			fill=white,
			inner sep=1pt,
			rounded corners=1pt
		},
		gateH/.style={
			draw,
			rectangle,
			minimum width=3.5mm,
			minimum height=3.5mm,
			fill=white,
			thick,
			font=\scriptsize
		}
	}
	
	\begin{subfigure}[b]{0.98\textwidth}
		\centering
		\begin{tikzpicture}[baseline=(current bounding box.center)]
			\node[spiderZ] (Za) at (0,0)
			{Z-spider\\[-2pt]\(\,(a_Z,\alpha,k_\alpha)\,\)};
			\node[spiderX] (Xa) at (3.0,0)
			{X-spider\\[-2pt]\(\,(a_X,\beta,k_\beta)\,\)};
			\node[spiderZ] (Zb) at (6.0,0)
			{Z-spider\\[-2pt]\(\,(a_Z,\gamma,k_\gamma)\,\)};
			
			\foreach \t in {0,...,5}{%
				\pgfmathsetmacro{\ang}{60*\t}%
				\draw[gridtick] (Za) ++(\ang:0) -- ++(\ang:0.9mm);%
				\draw[gridtick] (Zb) ++(\ang:0) -- ++(\ang:0.9mm);%
			}
			\foreach \t in {0,...,3}{%
				\pgfmathsetmacro{\ang}{90*\t}%
				\draw[gridtick] (Xa) ++(\ang:0) -- ++(\ang:0.9mm);%
			}
			
			\draw[wire] (-1.0,0) -- (Za) -- (Xa) -- (Zb) -- (7.0,0);
			
			\node[below=4pt] at (Za.south)
			{\scriptsize grid \(a_Z\)};
			\node[below=4pt] at (Xa.south)
			{\scriptsize grid \(a_X\)};
			\node[below=4pt] at (Zb.south)
			{\scriptsize grid \(a_Z\)};
			
			\node[below=10pt, align=center] at (3.0,-1.2)
			{\scriptsize Initial WPL--ZX diagram with heterogeneous phase grids \(a_Z\)--\(a_X\)--\(a_Z\).};
		\end{tikzpicture}
		\caption{Heterogeneous grids and \((a,\alpha,k)\)-labeled spiders.}
	\end{subfigure}
	
	\vspace{0.8em}
	
	\begin{subfigure}[b]{0.98\textwidth}
		\centering
		\begin{tikzpicture}[baseline=(current bounding box.center)]
			\node[spiderZ, minimum size=15mm, text width=27mm] (Zf) at (0,0)
			{Z-spider\\[-2pt]\(\,(a_Z,\alpha{+}\gamma,k_\alpha{+}k_\gamma)\,\)};
			
			\foreach \t in {0,...,5}{%
				\pgfmathsetmacro{\ang}{60*\t}%
				\draw[gridtick] (Zf) ++(\ang:0) -- ++(\ang:0.9mm);%
			}
			
			\node[spiderX] (Xb) at (3.4,0)
			{X-spider\\[-2pt]\(\,(a_X,\beta,k_\beta)\,\)};
			
			\foreach \t in {0,...,3}{%
				\pgfmathsetmacro{\ang}{90*\t}%
				\draw[gridtick] (Xb) ++(\ang:0) -- ++(\ang:0.9mm);%
			}
			
			\draw[wire] (-1.1,0) -- (Zf) -- (Xb) -- (4.6,0);
			
			\node[below=6pt, align=center] at (0.9,-1.1)
			{\scriptsize Z--Z fusion on grid \(a_Z\): keep \(a_Z\), add phases \(\alpha,\gamma\) and windings \(k\).};
		\end{tikzpicture}
		\caption{Z--Z fusion preserves the dominant grid and accumulates phase/winding.}
	\end{subfigure}
	
	\vspace{0.8em}
	
	\begin{subfigure}[b]{0.98\textwidth}
		\centering
		\begin{tikzpicture}[baseline=(current bounding box.center)]
			\node[spiderZ] (Zc) at (0,0)
			{\(\,(a_Z,\alpha{+}\gamma,k_\alpha{+}k_\gamma)\,\)};
			\foreach \t in {0,...,5}{%
				\pgfmathsetmacro{\ang}{60*\t}%
				\draw[gridtick] (Zc) ++(\ang:0) -- ++(\ang:0.9mm);%
			}
			
			\node[spiderX] (Xc) at (3.4,0)
			{\(\,(a_X,\beta,k_\beta)\,\)};
			\foreach \t in {0,...,3}{%
				\pgfmathsetmacro{\ang}{90*\t}%
				\draw[gridtick] (Xc) ++(\ang:0) -- ++(\ang:0.9mm);%
			}
			
			\draw[wire] (-1.0,0) -- (Zc) -- (1.3,0);
			\node[gateH] (Hgate) at (1.7,0) {H};
			\draw[wire] (Hgate) -- (Xc) -- (4.6,0);
			
			\node[overlabel, above=3pt] at (Hgate.north)
			{Z/X basis change};
			
			\node at (5.0,0) {\(\leadsto\)};
			
			\node[spiderZ, minimum size=14mm, text width=28mm] (Zr) at (7.0,0)
			{Z-spider\\[-2pt]\(\,(\mathrm{lcm}(a_Z,a_X),\tilde{\alpha},\tilde{k}_Z)\,\)};
			\node[spiderX, minimum size=14mm, text width=28mm] (Xr) at (10.0,0)
			{X-spider\\[-2pt]\(\,(\mathrm{lcm}(a_Z,a_X),\tilde{\beta},\tilde{k}_X)\,\)};
			
			\foreach \t in {0,...,11}{%
				\pgfmathsetmacro{\ang}{30*\t}%
				\draw[gridtick] (Zr) ++(\ang:0) -- ++(\ang:0.9mm);%
				\draw[gridtick] (Xr) ++(\ang:0) -- ++(\ang:0.9mm);%
			}
			
			\draw[wire] (5.7,0) -- (Zr) -- (Xr) -- (11.2,0);
			
			\node[below=7pt, align=center] at (8.5,-1.1)
			{\scriptsize Mixed-color interaction after \(H\): phases expressed on common LCM grid
				\(\mathrm{lcm}(a_Z,a_X)\),\\[-1pt]
				\scriptsize providing a geometry-aware normal form for WZCC.};
		\end{tikzpicture}
		\caption{Hadamard mediates basis change; both spiders are lifted to the LCM grid.}
	\end{subfigure}
	
	\caption{
		WPL--ZX example illustrating how weighted spiders track
		grids \(a\), phases \(\alpha\), and windings \(k\).  
		(a) The initial diagram lives on heterogeneous grids \(a_Z\) and \(a_X\).  
		(b) Z--Z fusion preserves the dominant grid \(a_Z\) while adding phases and winding numbers.  
		(c) A mixed-color interaction is mediated by an \(H\) gate and then lifted to
		the common LCM grid \(\mathrm{lcm}(a_Z,a_X)\), which is the natural domain
		for WZCC normalisation.
	}
	\label{fig:wplzx-preview}
\end{figure}

\subsection*{Weighted spiders: \texorpdfstring{$(a,\alpha,k)$}{(a,alpha,k)} labeling}
\label{subsec:weighted_spiders}

Weighted spiders are the basic morphisms of the \WPLZX–calculus.
Each node carries three pieces of information recording the local
orbifold structure of the phase space and the accumulated monodromy
along the diagrammatic path.
In contrast to ordinary ZX spiders labeled by a single angle
$\theta\in[0,2\pi)$
\cite{CoeckeKissinger2017PicturingQuantum,Backens2014ZX},
a weighted spider remembers a weight, a base phase, and a winding index.

\begin{definition}[Weighted spider triple $(a,\alpha,k)$]
	A \emph{weighted spider} is specified by a triple $(a,\alpha,k)$, where
	\begin{enumerate}[label=(\roman*)]
		\item $a \in \mathbb{N}$ is the \textbf{weight},
		representing the local isotropy order (orbifold multiplicity)
		of the corresponding point in $\mathbb{P}(a,b)$
		\cite{Satake1956Orbifold,Dolgachev1982Weighted};
		\item $\alpha \in [0,2\pi)$ (or $\alpha\in G_a$ in the discrete case)
		is the \textbf{base phase parameter}
		\cite{Gross2006Hudson};
		\item $k \in \tfrac{1}{a}\mathbb{Z}$ is the \textbf{winding index},
		measuring accumulated rotation relative to the local orbifold defect,
		analogous to a discretized Berry phase
		\cite{Berry1984Phase,Nakahara2003GeometryTopology}.
	\end{enumerate}
	We write
	\[
	\Zspmn{m}{n}{a}{\alpha}{k} :
	(\mathbb{C}^2)^{\otimes m} \longrightarrow (\mathbb{C}^2)^{\otimes n}
	\]
	for a green $Z$–type weighted spider, with the usual connectivity rules
	of the ZX–calculus.
\end{definition}

It will be convenient to introduce the \emph{total angle}
\begin{equation}
	\label{eq:total-angle}
	\theta_{\mathrm{tot}}(a,\alpha,k)
	:= \alpha + \frac{2\pi}{a}k
	\quad\in\; \mathbb{R}/2\pi\mathbb{Z},
\end{equation}
which is the quantity actually seen by the Hilbert-space semantics
(Definition~\ref{def:semantics_weighted_spiders} below).

\begin{remark}[Relation to $\mathbb{P}(a,b)$ geometry]
	At a conical point of $\mathbb{P}(a,b)$ with isotropy group of order
	$a$ \cite{Dolgachev1982Weighted,Satake1956Orbifold},
	$a$ encodes the order of the local isotropy,
	$\alpha$ parameterizes a local phase coordinate,
	and $k$ counts the number of windings of a diagrammatic path around
	the defect.
	Equivalently, the monodromy representation sends a small loop
	$\gamma_a$ to
	\[
	\rho(\gamma_a) = \exp\!\Bigl(\frac{2\pi i}{a}k\Bigr),
	\]
	so $(a,\alpha,k)$ packages the orbifold monodromy data into a
	single label.
	Thus weighted spiders provide a diagrammatic interface to the
	orbifold phase geometry of Section~\ref{sec:WPL-geometry}
	\cite{Nakahara2003GeometryTopology}.
\end{remark}

\begin{definition}[Phase composition law]
	Given two compatible weighted spiders
	$\Zspmn{m_1}{n_1}{a}{\alpha_1}{k_1}$ and
	$\Zspmn{m_2}{n_2}{a}{\alpha_2}{k_2}$ of the same weight $a$,
	their fusion at the label level is
	\[
	\Zspmn{m_1}{n_1}{a}{\alpha_1}{k_1}
	\ \otimes\
	\Zspmn{m_2}{n_2}{a}{\alpha_2}{k_2}
	\;\longmapsto\;
	\Zspmn{m_1+m_2}{n_1+n_2}{a}{\alpha_1+\alpha_2}{k_1+k_2}.
	\]
	If the weights differ, they are first lifted to the common grid
	$G_L$ with $L=\mathrm{lcm}(a_1,a_2)$, as described in
	Section~\ref{sec:WPL-geometry}, and then fused there.
\end{definition}

\begin{proposition}[Closure under $\mathrm{lcm}$–fusion]
	\label{prop:lcm-fusion-wplzx}
	Let $a_1,a_2 \in \mathbb{N}$ and set $L = \mathrm{lcm}(a_1,a_2)$.
	Then the fusion of heterogeneous weighted spiders
	\[
	\Zspmn{m_1}{n_1}{a_1}{\alpha_1}{k_1}, \qquad
	\Zspmn{m_2}{n_2}{a_2}{\alpha_2}{k_2}
	\]
	lifts uniquely to
	\[
	\Zspmn{m_1+m_2}{n_1+n_2}{L}
	{\tfrac{L}{a_1}\alpha_1 + \tfrac{L}{a_2}\alpha_2}
	{\tfrac{L}{a_1}k_1 + \tfrac{L}{a_2}k_2},
	\]
	living on the refinement grid $G_L$.
	This operation is associative and commutative on labels and is
	compatible with the orbifold monodromy semantics of
	Section~\ref{sec:WPL-geometry}
	\cite{DuncanPerdrix2020GraphSimplification,vandeWetering2020WorkingZX}.
\end{proposition}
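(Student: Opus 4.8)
The plan is to reduce the claim to three facts already in place: the spider fusion law together with the discrete‑phase‑closure lemma of \S\ref{sec:background}, the functoriality of the directed refinement system $\mathcal{G}$ (Definition ``Phase lattice and directed refinement''), and the topological classification of windings on $\mathbb{P}(a,b)$ by $\mathbb{Z}_L$ (the ``topological origin of modular closure'' proposition of \S\ref{sec:geometry}). Everything then becomes routine arithmetic with least common multiples and roots of unity, glued together by the coherence of the refinement embeddings.

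First I would establish existence and uniqueness of the lift. Since $L=\lcm(a_1,a_2)$ we have $a_1\mid L$ and $a_2\mid L$, so the canonical embeddings $\iota_{a_i\to L}\colon G_{a_i}\hookrightarrow G_L$ exist; on indices each is multiplication by $L/a_i$, an injective homomorphism onto the unique order‑$a_i$ subgroup of $G_L$. Applying $\iota_{a_i\to L}$ re‑expresses $\alpha_i$ on the finer grid $G_L$ (index $\tfrac{L}{a_i}\alpha_i$) and rescales the winding $k_i$ to $\tfrac{L}{a_i}k_i$. Uniqueness is then immediate: $\iota_{a_i\to L}$ is the only refinement‑compatible embedding $G_{a_i}\hookrightarrow G_L$, so the lifted triple is forced, and the fused label on $G_L$ is obtained by ordinary same‑weight fusion of the two images.

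Next I would verify soundness and the algebraic properties of the label operation. On the common grid $G_L$ the two lifted green spiders fuse by ordinary equal‑weight ZX fusion, so the phases add, the result lands in $G_L$ by the discrete‑closure lemma, and this is precisely the equal‑weight ``phase composition law''. The point to check is that the passage $(a_i,\alpha_i,k_i)\mapsto(L,\tfrac{L}{a_i}\alpha_i,\tfrac{L}{a_i}k_i)$ leaves the total angle $\theta_{\mathrm{tot}}$ of \eqref{eq:total-angle} invariant; it does, since $\iota_{a_i\to L}$ preserves represented angles and $\tfrac{2\pi}{L}\cdot\tfrac{L}{a_i}k_i=\tfrac{2\pi}{a_i}k_i$, so the Hilbert‑space semantics of the fused spider coincides with that of the composite — i.e. fusion is sound. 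Commutativity on labels is clear from commutativity of $+$ and of $\lcm$; associativity follows from associativity of $\lcm$ together with the cocycle identity $\iota_{L'\to L}\circ\iota_{a\to L'}=\iota_{a\to L}$ for $a\mid L'\mid L$, which forces iterated, differently bracketed fusions to land on $\lcm(a_1,a_2,a_3)$ with the same phase and the same (linearly combined) winding. Finally, compatibility with the orbifold monodromy semantics of \S\ref{sec:geometry} is the observation that a loop of winding $k$ about an order‑$a$ defect has monodromy $e^{2\pi i k/a}$, a value unchanged under $k\mapsto\tfrac{L}{a}k$, while the combined class $\tfrac{L}{a_1}k_1+\tfrac{L}{a_2}k_2\in\mathbb{Z}_L$ is exactly the product of the two monodromies by the modular‑closure proposition.

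I expect the only genuinely delicate point to be bookkeeping coherence: one must ensure that the phase component (taken modulo $2\pi$, hence living in $G_L$) and the winding component (an honest rescaled integer) transform in lockstep, so that $\theta_{\mathrm{tot}}$ remains invariant and the two halves of the triple do not drift apart under repeated regroupings or under the interaction with the mixed‑color rules. The directed‑system cocycle identity is precisely what controls this; once it is in hand the remaining verifications are mechanical.
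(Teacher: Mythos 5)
Your proposal is correct and follows essentially the same route as the paper's (brief) proof sketch: lift both labels along the canonical embeddings into the common grid $G_L$, observe that the total angles $\theta_{\mathrm{tot}}$ then lie on the same $2\pi/L$-periodic lattice and add linearly, and deduce associativity and commutativity from ordinary addition together with properties of $\lcm$. The extra points you supply — uniqueness of the refinement embeddings, the cocycle identity $\iota_{L'\to L}\circ\iota_{a\to L'}=\iota_{a\to L}$, invariance of $\theta_{\mathrm{tot}}$ under $k\mapsto\tfrac{L}{a}k$, and the monodromy check via the modular-closure proposition — are elaborations of steps the paper leaves implicit, not a different argument.
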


\begin{proof}[Proof sketch]
	Lifting both spiders to the $L$–fold covering embeds their local phase
	grids into $G_L$.
	The total angles
	$\theta_{\mathrm{tot}}(a_i,\alpha_i,k_i)$ become elements of the same
	$2\pi/L$–periodic lattice and hence add linearly.
	Associativity and commutativity follow from the corresponding
	properties of addition in $\mathbb{R}$ and $\mathbb{Z}$.
\end{proof}

\begin{example}[ZX limit as trivial orbifold]
	When $a=1$ and $k\in\mathbb{Z}$,
	the spider $\Zspmn{m}{n}{1}{\alpha}{k}$ has
	$\theta_{\mathrm{tot}} = \alpha + 2\pi k \equiv \alpha \pmod{2\pi}$,
	so it reduces to the ordinary ZX spider $Z^{\alpha}$
	\cite{Backens2014ZX,JeandelPerdrixVilmart2017CliffordT}.
	In this case no orbifold singularity is present and the winding index
	is redundant.
	Thus weighted spiders strictly extend the ordinary ZX calculus while
	retaining the usual fusion laws in the $a=1$ fragment.
\end{example}

\begin{definition}[Diagrammatic invariants]
	For a weighted spider $S=\Zspmn{m}{n}{a}{\alpha}{k}$ we set
	\[
	\mathrm{wt}(S)=a,\qquad
	\mathrm{ph}(S)=\alpha,\qquad
	\mathrm{wind}(S)=k,\qquad
	\theta_{\mathrm{tot}}(S)
	=\alpha + \tfrac{2\pi}{a}k.
	\]
	Under fusion (after lifting to a common refinement),
	\begin{align*}
		\theta_{\mathrm{tot}}(S_1\otimes S_2)
		&= \theta_{\mathrm{tot}}(S_1)
		+ \theta_{\mathrm{tot}}(S_2)
		\pmod{2\pi},\\[2pt]
		\mathrm{wt}(S_1\otimes S_2)
		&= \mathrm{lcm}\bigl(\mathrm{wt}(S_1),\mathrm{wt}(S_2)\bigr).
	\end{align*}
\end{definition}

\begin{remark}[Physical meaning]
	The triple $(a,\alpha,k)$ can be read as:
	\begin{itemize}
		\item $a$: \textbf{phase resolution} of a local control channel
		(DAC grid or allowed Z-rotation steps),
		\item $\alpha$: \textbf{instantaneous phase offset} at that node,
		\item $k$: \textbf{accumulated phase slip (winding)} due to drift or
		jitter on the underlying hardware
		\cite{Motzoi2009DRAG,Krantz2019SuperconductingReview}.
	\end{itemize}
	Weighted spiders thus provide a combined algebraic–topological record
	of circuit evolution, suitable for modeling phase-quantized and
	drift-prone devices.
\end{remark}

\subsection*{Functorial viewpoint and categorical semantics}
\label{subsec:functorial_viewpoint}

The \WPLZX–calculus admits a categorical description analogous to the
compact closed categorical semantics of the ordinary ZX–calculus
\cite{Selinger2008DaggerCompact,CoeckeKissinger2017PicturingQuantum},
but now enriched over orbifold phase spaces and winding indices.

\begin{definition}[Orbifold spider category]
	Let $\mathbf{OrbSp}$ be the \emph{orbifold spider category} defined as
	follows.
	\begin{enumerate}[label=(\roman*)]
		\item \textbf{Objects.}
		An elementary object is a triple
		\[
		(a,\alpha,k),
		\qquad
		a\in\mathbb{N},\;
		\alpha\in[0,2\pi),\;
		k\in\tfrac{1}{a}\mathbb{Z}/\mathbb{Z},
		\]
		with the same interpretation as in
		Section~\ref{subsec:weighted_spiders}.
		Tensor products of elementary objects form general objects.
		
		\item \textbf{Tensor product on objects.}
		The monoidal product is given by the $\mathrm{lcm}$–fusion rule
		on labels:
		\[
		(a_1,\alpha_1,k_1)\otimes(a_2,\alpha_2,k_2)
		:= \bigl(L,\ \tfrac{L}{a_1}\alpha_1 + \tfrac{L}{a_2}\alpha_2,\ 
		\tfrac{L}{a_1}k_1 + \tfrac{L}{a_2}k_2\bigr),
		\quad L=\mathrm{lcm}(a_1,a_2),
		\]
		extended componentwise to longer tensors.
		
		\item \textbf{Generating morphisms.}
		For each $(a,\alpha,k)$ and $m,n\geq 0$ we include generators
		\[
		Z_{a,\alpha,k}^{m,n}
		\colon (a,\alpha,k)^{\otimes m}
		\longrightarrow (a,\alpha,k)^{\otimes n},
		\]
		and similarly $X_{a,\alpha,k}^{m,n}$,
		representing the weighted $Z$- and $X$-spiders
		\cite{DuncanPerdrix2020GraphSimplification,
			vandeWetering2020WorkingZX}.
		General morphisms are formed by composition and tensor product,
		subject to the fusion and normalization rules below.
		
		\item \textbf{Composition and tensor product.}
		Composition corresponds to vertical stacking of spiders,
		and the tensor product to horizontal juxtaposition, as usual in
		string diagrams \cite{CoeckeKissinger2017PicturingQuantum}.
		When fusing spiders of different weights, their labels are lifted
		to the common refinement $L=\mathrm{lcm}(a_1,a_2)$ as in
		Proposition~\ref{prop:lcm-fusion-wplzx}.
	\end{enumerate}
\end{definition}

\begin{proposition}[WPLZX interpretation functor]
	\label{prop:functorial-viewpoint}
	There exists a strict monoidal functor
	\[
	\mathcal{F} \colon \WPLZX \longrightarrow \mathbf{OrbSp}
	\]
	such that:
	\begin{itemize}
		\item On objects: each wire labeled by $(a,\alpha,k)$ in a
		\WPLZX–diagram is sent to the corresponding object of
		$\mathbf{OrbSp}$. Unlabeled wires map to $(1,0,0)$.
		
		\item On generators: each weighted spider
		\[
		\Zspmn{m}{n}{a}{\alpha}{k},\quad
		\Xspmn{m}{n}{a}{\alpha}{k}
		\]
		is mapped to the corresponding morphisms
		$Z_{a,\alpha,k}^{m,n}$ and $X_{a,\alpha,k}^{m,n}$.
		
		\item On composition and tensor product:
		\[
		\mathcal{F}(D_2\circ D_1)
		= \mathcal{F}(D_2)\circ\mathcal{F}(D_1),\qquad
		\mathcal{F}(D_1\otimes D_2)
		= \mathcal{F}(D_1)\otimes\mathcal{F}(D_2).
		\]
	\end{itemize}
	In particular, \WPLZX\ is a monoidal presentation of the diagrammatic
	subcategory of $\mathbf{OrbSp}$ generated by $Z$- and $X$-spiders.
\end{proposition}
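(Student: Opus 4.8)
The plan is to invoke the standard universal property of a strict monoidal category presented by generators and relations. By construction (Section~\ref{subsec:weighted_spiders}), \WPLZX\ is the strict monoidal category freely generated by the weighted spiders $\Zspmn{m}{n}{a}{\alpha}{k}$ and $\Xspmn{m}{n}{a}{\alpha}{k}$, modulo the imposed relations: the same-weight phase composition law, the $\mathrm{lcm}$-fusion rule of Proposition~\ref{prop:lcm-fusion-wplzx}, the Hadamard color-change rule, the bialgebra and Hopf laws, and the $a=1$ collapse to ordinary ZX. To produce a strict monoidal functor out of such a category it suffices to (i) fix a strict monoidal target, (ii) assign images to the generating objects and generating morphisms so that domains and codomains are respected, and (iii) verify that every imposed relation becomes a genuine identity in the target. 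Functoriality and strict monoidality of the resulting $\mathcal{F}$ are then automatic.

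First I would check that $\mathbf{OrbSp}$ is genuinely a strict monoidal category. The only non-routine point is the monoidal product on objects, the $\mathrm{lcm}$-fusion rule $(a_1,\alpha_1,k_1)\otimes(a_2,\alpha_2,k_2)=\bigl(L,\ \tfrac{L}{a_1}\alpha_1+\tfrac{L}{a_2}\alpha_2,\ \tfrac{L}{a_1}k_1+\tfrac{L}{a_2}k_2\bigr)$ with $L=\mathrm{lcm}(a_1,a_2)$. Strict associativity follows from $\mathrm{lcm}(\mathrm{lcm}(a_1,a_2),a_3)=\mathrm{lcm}(a_1,a_2,a_3)$ together with the telescoping identity $\tfrac{L}{\mathrm{lcm}(a_1,a_2)}\bigl(\tfrac{\mathrm{lcm}(a_1,a_2)}{a_i}\alpha_i\bigr)=\tfrac{L}{a_i}\alpha_i$ (and likewise on the winding component); strict unitality holds because $\mathrm{lcm}(1,a)=a$ and $\tfrac{a}{1}\cdot 0+\tfrac{a}{a}\alpha=\alpha$, so $(1,0,0)$ is a two-sided unit. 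These are exactly the associativity/commutativity statements already recorded in Proposition~\ref{prop:lcm-fusion-wplzx} and the ``diagrammatic invariants'' computation, so no fresh work is needed.

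Next I would define $\mathcal{F}$ on generators as in the statement: $(a,\alpha,k)\mapsto(a,\alpha,k)$, an unlabeled wire $\mapsto(1,0,0)$, and $\Zspmn{m}{n}{a}{\alpha}{k}\mapsto Z_{a,\alpha,k}^{m,n}$, $\Xspmn{m}{n}{a}{\alpha}{k}\mapsto X_{a,\alpha,k}^{m,n}$. Typing is respected: a weighted $Z$-spider with $m$ inputs and $n$ outputs on label $(a,\alpha,k)$ has domain $(a,\alpha,k)^{\otimes m}$ and codomain $(a,\alpha,k)^{\otimes n}$, which is precisely the declared signature of $Z_{a,\alpha,k}^{m,n}$ in $\mathbf{OrbSp}$. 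Then I would run through the relations: same-weight fusion and heterogeneous $\mathrm{lcm}$-fusion are sent to the defining fusion rule of $\mathbf{OrbSp}$, the only substantive check being that the lift $\alpha_i\mapsto\tfrac{L}{a_i}\alpha_i$, $k_i\mapsto\tfrac{L}{a_i}k_i$ built into $\mathbf{OrbSp}$'s tensor product coincides with the $G_L$-refinement of Proposition~\ref{prop:lcm-fusion-wplzx}; the Hadamard color-change, bialgebra, and Hopf laws live in the $a=1$ fragment and are inherited verbatim from ordinary ZX; and the $a=1$ collapse is immediate from $\theta_{\mathrm{tot}}(1,\alpha,k)\equiv\alpha$. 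By the universal property $\mathcal{F}$ extends uniquely to a strict monoidal functor with $\mathcal{F}(D_2\circ D_1)=\mathcal{F}(D_2)\circ\mathcal{F}(D_1)$ and $\mathcal{F}(D_1\otimes D_2)=\mathcal{F}(D_1)\otimes\mathcal{F}(D_2)$, and its image is, tautologically, the monoidal subcategory of $\mathbf{OrbSp}$ generated by the $Z$- and $X$-spiders, giving the asserted presentation.

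The main obstacle I anticipate is essentially bookkeeping: pinning down the exact relation set presenting \WPLZX\ (the text speaks of ``the fusion and normalization rules below'' without an exhaustive list) and confirming it matches, relation for relation, the relations built into $\mathbf{OrbSp}$. A secondary subtlety worth flagging is that $\mathbf{OrbSp}$ is \emph{not} a PROP in the usual sense: because $\otimes$ on objects collapses equal labels via $\mathrm{lcm}$, the arities $(m,n)$ of a spider are not recoverable from its domain and codomain objects alone, so one must be mildly careful that composition stays well-defined and that the generator signatures remain consistent. Once this setup is fixed, every remaining step is a direct unwinding of definitions.
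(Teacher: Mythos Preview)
Your proposal is correct and follows essentially the same approach as the paper's proof sketch: define $\mathcal{F}$ on generators as prescribed, check that the defining relations of \WPLZX\ (fusion, normalization, color change, bialgebra/Hopf) hold in $\mathbf{OrbSp}$ by construction, and conclude by the universal property of a presentation by generators and relations. Your version is considerably more explicit---verifying strict associativity of the $\mathrm{lcm}$-tensor, checking typing, and flagging the bookkeeping issue that $\mathbf{OrbSp}$ is not a PROP---whereas the paper's proof sketch simply asserts these points in two lines.
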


\begin{proof}[Proof sketch]
	Objects and generators are mapped as prescribed.
	The defining equations of \WPLZX\ (fusion, normalization, color change,
	bialgebra/Hopf) are satisfied in $\mathbf{OrbSp}$ by construction,
	so the mapping quotients by the same relations and extends uniquely
	to a strict monoidal functor
	\cite{Selinger2008DaggerCompact,CoeckeKissinger2017PicturingQuantum}.
\end{proof}

\begin{lemma}[Faithfulness on labels]
	\label{lem:faithful-labels}
	If two \WPLZX–diagrams differ in the multiset of their $(a,\alpha,k)$
	labels (after fusing connected components), then their images under
	$\mathcal{F}$ are distinct morphisms in $\mathbf{OrbSp}$.
\end{lemma}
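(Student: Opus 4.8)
The plan is to exhibit the fused label multiset as a genuine monoidal invariant of morphisms in $\mathbf{OrbSp}$, so that it factors through $\mathcal{F}$. First I would record the easy structural fact that the fusion operation of Proposition~\ref{prop:lcm-fusion-wplzx} is associative and commutative on labels; hence exhaustive same-colour fusion, followed by discarding $2$-ary spiders whose total angle $\theta_{\mathrm{tot}}$ vanishes (these are identities), is confluent and terminating. This makes the ``multiset of $(a,\alpha,k)$ labels after fusing connected components'' a well-defined normal-form datum $\mathcal{L}_0(D)$ attached to each diagram $D$. Next I would package $\mathcal{L}_0$ as a strict monoidal functor $\mathcal{L}\colon\mathbf{OrbSp}\to\mathcal{M}$, where $\mathcal{M}$ is the one-object category whose endomorphism monoid is the free commutative monoid on symbols $[a,\alpha,k]$ modulo the relations $[a,\alpha_1,k_1]\cdot[a,\alpha_2,k_2]=[a,\alpha_1{+}\alpha_2,k_1{+}k_2]$ and $[a,0,0]=1$ (together with whatever single-label reductions the normalization rules of $\WPLZX$ impose). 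On generators one sets $\mathcal{L}(Z_{a,\alpha,k}^{m,n})=\mathcal{L}(X_{a,\alpha,k}^{m,n})=[a,\alpha,k]$, sends the Hadamard generator (and any purely structural generator) to the unit, and sends every object to the unique object; by construction $\mathcal{L}$ maps a composite or tensor of spiders to the product of its labels, so $\mathcal{L}\circ\mathcal{F}$ recovers exactly $\mathcal{L}_0$.

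The real content is checking that $\mathcal{L}$ is well defined, i.e.\ that it respects every defining equation of $\mathbf{OrbSp}$ (equivalently, of $\WPLZX$). Spider fusion and $\mathrm{lcm}$-refinement fusion are respected because $\mathcal{M}$ was built with precisely those relations; colour change $X_{a,\alpha,k}^{m,n}\ZXeq H\,Z_{a,\alpha,k}^{m,n}\,H$ is respected because, once $H$ maps to the unit, both sides map to $[a,\alpha,k]$ (so it matters that the multiset of triples is taken colour-blind, as in the statement); and the bialgebra and Hopf laws are respected because every spider occurring in them has $\theta_{\mathrm{tot}}=0$, hence label $[a,0,0]=1$, so both sides of each law are sent to the unit. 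Since $\mathbf{OrbSp}$ is the quotient of the free diagram category by exactly these relations (Proposition~\ref{prop:functorial-viewpoint}), verifying them on generators is enough for $\mathcal{L}$ to descend.

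I expect the main obstacle to be twofold: (i)~the normalization rules, where one must check that whatever single-label reduction $\WPLZX$ applies to an $(a,\alpha,k)$ triple is already imposed as a relation in $\mathcal{M}$ — so that the lemma is read with ``$(a,\alpha,k)$ labels'' meaning \emph{normalized} labels — and (ii)~the bookkeeping when colour-change or bialgebra/Hopf moves alter the connected-component structure of the diagram, which is exactly why the invariant must be computed after fusion rather than on the raw diagram. Granting these, the conclusion is immediate: if $\mathcal{F}(D_1)=\mathcal{F}(D_2)$ then $\mathcal{L}_0(D_1)=\mathcal{L}(\mathcal{F}(D_1))=\mathcal{L}(\mathcal{F}(D_2))=\mathcal{L}_0(D_2)$, so $D_1$ and $D_2$ share the same fused label multiset; contrapositively, diagrams whose fused label multisets differ have distinct images in $\mathbf{OrbSp}$, which is the claim.
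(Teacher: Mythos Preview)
Your approach is correct and, in fact, considerably more careful than the paper's own argument. The paper's proof is essentially a two-sentence assertion: ``$\mathcal{F}$ records the weight, base phase and winding for every generator. Fusion in $\mathbf{OrbSp}$ obeys the same label arithmetic as in the diagrammatic calculus; thus different label multisets yield different resulting triples $(L,\alpha_D,k_D)$.'' In other words, the paper treats the claim as nearly tautological given that $\mathbf{OrbSp}$ was \emph{defined} with the same fusion relations as $\WPLZX$, so the per-component fused label is preserved on the nose by $\mathcal{F}$.

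You take a genuinely different route: rather than appealing to the near-tautology, you build an explicit monoidal invariant $\mathcal{L}\colon\mathbf{OrbSp}\to\mathcal{M}$ and verify that it descends through every defining relation. This buys you an honest proof that survives even if one is suspicious of how tightly $\mathbf{OrbSp}$ mirrors $\WPLZX$; the paper's argument, by contrast, leans entirely on that mirroring. Your framework also makes transparent exactly which relations must be checked (fusion, normalization, colour change, bialgebra/Hopf), whereas the paper leaves this implicit.

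One small wrinkle worth flagging: the paper's weighted Hadamard duality (the example immediately before the comparison subsection) is stated with a sign flip, $H_a\,Z(\alpha,k)\,H_a = X(-\alpha,-k)$, not the sign-preserving version you assumed. If that relation is among those imposed in $\mathbf{OrbSp}$, your $\mathcal{L}$ would need the extra identification $[a,\alpha,k]=[a,-\alpha,-k]$ in $\mathcal{M}$, and the lemma's ``multiset of labels'' should correspondingly be read modulo this involution. The paper's own proof does not address this either, and the presentation of $\mathbf{OrbSp}$ is loose enough that it is unclear which convention is intended --- so this is less a gap in your argument than a point where the paper's framework is underspecified.
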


\begin{proof}
	The functor $\mathcal{F}$ records the weight, base phase and winding
	for every generator.
	Fusion in $\mathbf{OrbSp}$ obeys the same label arithmetic as in the
	diagrammatic calculus; thus different label multisets yield different
	resulting triples $(L,\alpha_D,k_D)$.
\end{proof}

\begin{remark}[Diagrammatic presentation of orbifold phases]
	Up to the usual global phase quotient, \WPLZX\ and the spider-generated
	subcategory of $\mathbf{OrbSp}$ are equivalent as strict monoidal
	categories.
	In this sense, \WPLZX\ provides a convenient \emph{string-diagram
		presentation} of orbifold phase dynamics on $\mathbb{P}(a,b)$,
	bridging categorical quantum mechanics and discrete phase geometry
	\cite{Selinger2008DaggerCompact,Nakahara2003GeometryTopology}.
\end{remark}

\begin{example}[Spider with $(2,\pi/3,1/2)$]
	A spider with parameters $(a,\alpha,k)=(2,\pi/3,1/2)$ represents a node
	with isotropy group of order $2$, phase offset $\pi/3$, and a
	half-turn monodromy.
	Geometrically it lives near the order-two cone point of
	$\mathbb{P}(2,1)$; physically it corresponds to a control channel
	that must advance the phase by $2\pi$ to return to identity, consistent
	with a two-level phase quantization grid
	\cite{Krantz2019SuperconductingReview,Motzoi2009DRAG}.
\end{example}

\subsection*{Fusion and normalization rules}
\label{subsec:fusion_normalization}

Fusion of spiders is central to all ZX-type calculi.
In the weighted projective line setting, fusion must respect both weights
and winding indices, and reduce to the standard ZX laws in the trivial
orbifold fragment $a=1$, $k=0$
\cite{CoeckeKissinger2017PicturingQuantum,Backens2014ZX,JeandelPerdrixVilmart2017CliffordT}.
Normalization chooses representatives modulo global phase while keeping
track of the winding class.

\begin{definition}[Basic fusion rule]
	\label{def:basic_fusion}
	Let
	\[
	S_1 = \Zspmn{m_1}{n_1}{a_1}{\alpha_1}{k_1},
	\qquad
	S_2 = \Zspmn{m_2}{n_2}{a_2}{\alpha_2}{k_2}
	\]
	be two weighted spiders joined along at least one common wire.
	Set $L=\mathrm{lcm}(a_1,a_2)$.
	Then their fusion produces
	\[
	S_1 \;\otimes\; S_2
	\;\longrightarrow\;
	\Zspmn{m_1+m_2-2}{n_1+n_2-2}{L}{
		\tfrac{L}{a_1}\alpha_1 + \tfrac{L}{a_2}\alpha_2
	}{
		\tfrac{L}{a_1}k_1 + \tfrac{L}{a_2}k_2
	}.
	\]
	The subtraction by \(2\) in the arities removes the two connected legs.
	When $a_1=a_2=1$ and $k_1=k_2=0$, this is the usual ZX spider-fusion
	rule \cite{CoeckeKissinger2017PicturingQuantum,Backens2014ZX}.
\end{definition}

\begin{remark}[Weighted spider fusion diagrammatically]
	Diagrammatically, fusion contracts two nodes (possibly with different
	isotropy orders) and relocates the result to the refinement grid $G_L$.
	This preserves the underlying orbifold monodromy class and ensures that
	the total angle $\theta_{\mathrm{tot}}$ adds consistently, in analogy
	with phase-grid arithmetic in finite quantum systems
	\cite{Gross2006Hudson}.
\end{remark}

\begin{proposition}[Associativity and commutativity of fusion]
	\label{prop:assoc_comm}
	Fusion of weighted spiders is associative and commutative up to the
	canonical identification of refinement grids:
	\[
	(S_1 \otimes S_2) \otimes S_3
	\;\equiv\;
	S_1 \otimes (S_2 \otimes S_3),
	\qquad
	S_1 \otimes S_2 \;\equiv\; S_2 \otimes S_1.
	\]
	All resulting labels live naturally in $G_{\mathrm{lcm}(a_1,a_2,a_3)}$.
\end{proposition}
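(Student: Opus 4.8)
The plan is to reduce the claim to elementary label arithmetic. By Proposition~\ref{prop:functorial-viewpoint} together with Lemma~\ref{lem:faithful-labels}, two \WPLZX\ diagrams obtained by fusing the same collection of spiders in different orders represent the same morphism as soon as they have the same arities and the same resulting $(a,\alpha,k)$-label (after lifting to a common refinement); so it suffices to check that the fusion rule of Definition~\ref{def:basic_fusion} produces order-independent and bracketing-independent labels, working as usual up to a nonzero global scalar. First I would dispose of the arities: each elementary fusion turns $(m_i,n_i),(m_j,n_j)$ into $(m_i+m_j-2,\,n_i+n_j-2)$, hence an iterated fusion of $S_1,S_2,S_3$ along a fixed connectivity produces output arities $m_1+m_2+m_3-4$ and $n_1+n_2+n_3-4$ independently of the bracketing, and transposing the two inner factors changes nothing; this uses only that integer addition is associative and commutative and that the number of contracted internal edges is a property of the diagram, not of the order in which we fuse.

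Next I would handle the weights. Set $N:=\mathrm{lcm}(a_1,a_2,a_3)$. Because $\mathrm{lcm}$ is an associative and commutative binary operation on $\mathbb{N}$, we have $\mathrm{lcm}(\mathrm{lcm}(a_1,a_2),a_3)=\mathrm{lcm}(a_1,\mathrm{lcm}(a_2,a_3))=N$, and likewise after swapping $a_1\leftrightarrow a_2$; in particular every intermediate spider has weight dividing $N$ and the fully fused spider has weight exactly $N$, which is the asserted statement that all resulting labels live in $G_N$.

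The core of the argument is the behaviour of the phase and winding data, and here I would invoke the functoriality of the directed system of grids from Section~\ref{sec:background} (the canonical embeddings satisfy $\iota_{b\to c}\circ\iota_{a\to b}=\iota_{a\to c}$ whenever $a\mid b\mid c$, i.e.\ they are the ``multiply the index by the ratio'' maps). Fusing $S_1,S_2$ first into $G_L$ with $L=\mathrm{lcm}(a_1,a_2)$ --- giving phase $\tfrac{L}{a_1}\alpha_1+\tfrac{L}{a_2}\alpha_2$ by Proposition~\ref{prop:lcm-fusion-wplzx} --- and then fusing the result with $S_3$ into $G_N$ yields
\[
\tfrac{N}{L}\!\left(\tfrac{L}{a_1}\alpha_1+\tfrac{L}{a_2}\alpha_2\right)+\tfrac{N}{a_3}\alpha_3
\;=\;\tfrac{N}{a_1}\alpha_1+\tfrac{N}{a_2}\alpha_2+\tfrac{N}{a_3}\alpha_3\pmod{2\pi},
\]
where the collapse is exactly $\iota_{L\to N}\circ\iota_{a_i\to L}=\iota_{a_i\to N}$. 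Fusing in the order $(S_2,S_3)$ then $S_1$ gives, by the identical manipulation, $\tfrac{N}{a_2}\alpha_2+\tfrac{N}{a_3}\alpha_3+\tfrac{N}{a_1}\alpha_1$, which equals the previous expression since addition in $\mathbb{R}/2\pi\mathbb{Z}$ is commutative and associative; the case $S_1\otimes S_2\equiv S_2\otimes S_1$ is the degenerate instance in which one merely transposes two summands and uses $\mathrm{lcm}(a_1,a_2)=\mathrm{lcm}(a_2,a_1)$. The winding indices are handled by the same computation verbatim, with $k_i$ in place of $\alpha_i$ and the winding group (e.g.\ $\tfrac1N\mathbb{Z}/\mathbb{Z}$) in place of $\mathbb{R}/2\pi\mathbb{Z}$. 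Finally I would make explicit that the ``canonical identification of refinement grids'' is, under this accounting, the identity on $G_N$: since $N$ is independent of the bracketing the two fused triples literally live in the same group, and the only non-formal content is the compatibility of the nested embeddings $G_{a_i}\hookrightarrow G_L\hookrightarrow G_N$ with the direct embedding $G_{a_i}\hookrightarrow G_N$. I expect precisely this nested-embedding compatibility --- together with the parallel check that the winding data is transported by the same linear maps --- to be the only point requiring care; everything else is the associativity and commutativity of $\mathrm{lcm}$ and of addition in an abelian group.
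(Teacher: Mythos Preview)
Your proposal is correct and follows essentially the same approach as the paper: lift all labels to the common refinement $G_{\mathrm{lcm}(a_1,a_2,a_3)}$ and then use that both $\mathrm{lcm}$ and addition (of phases and of winding indices) are associative and commutative. The paper's proof is a one-sentence version of your argument; your added care about the arities and the compatibility $\iota_{L\to N}\circ\iota_{a_i\to L}=\iota_{a_i\to N}$ of nested grid embeddings makes explicit the only step the paper leaves implicit.
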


\begin{proof}
	After lifting all labels to $L=\mathrm{lcm}(a_1,a_2,a_3)$, fusion adds
	the corresponding total angles and winding indices linearly; hence the
	result is independent of the parenthesization and ordering of spiders.
\end{proof}

\begin{proposition}[Uniqueness of $\mathrm{lcm}$–lift]
	\label{prop:lcm_uniqueness}
	Given $a_1,a_2\in\mathbb{N}$, the refinement
	$L=\mathrm{lcm}(a_1,a_2)$ is the unique minimal grid such that
	$S_1,S_2$ fuse to a single weighted spider on $G_L$.
	Any other common refinement $L'$ produces the same fused labels, which
	then project canonically back to $G_L$.
\end{proposition}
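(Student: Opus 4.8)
The plan is to reduce both halves of the statement — that $G_L$ with $L=\mathrm{lcm}(a_1,a_2)$ is a legitimate fusion grid, and that it is the \emph{minimal} such grid, with the fused labels on any finer refinement projecting canonically back to it — to the divisibility criterion governing the directed refinement system $\mathcal{G}=\{G_a\}$ of Section~\ref{sec:background}: the grid $G_M$ admits a canonical embedding $\iota_{a\to M}\colon G_a\hookrightarrow G_M$ precisely when $a\mid M$. Granting this, the property ``$G_M$ is a common grid on which $S_1$ and $S_2$ can be fused'' is equivalent to ``$a_1\mid M$ and $a_2\mid M$'', i.e.\ to ``$M$ is a common multiple of $a_1,a_2$'', and the minimality of $\mathrm{lcm}$ among common multiples does the rest.

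First I would dispose of the existence half: since $a_i\mid L$, the embedding $\iota_{a_i\to L}$ exists, so both labels $(\alpha_i,k_i)$ re-express on $G_L$ by multiplying indices by $L/a_i$, and Proposition~\ref{prop:lcm-fusion-wplzx} then fuses $S_1,S_2$ on $G_L$ into the single weighted spider with phase $\tfrac{L}{a_1}\alpha_1+\tfrac{L}{a_2}\alpha_2$ and winding $\tfrac{L}{a_1}k_1+\tfrac{L}{a_2}k_2$. Hence $G_L$ is a valid fusion grid and its label is well defined.

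For minimality and the projection claim, suppose $G_M$ is any grid on which $S_1$ and $S_2$ fuse to a single weighted spider; then each $(\alpha_i,k_i)$ must be representable on $G_M$, forcing $a_i\mid M$, hence $L\mid M$ and $M\ge L$, with $L$ itself achievable — this is minimality. When $L\mid M$ I would use the compatibility of the embeddings: on indices, $\iota_{a_i\to M}$ is multiplication by $M/a_i=(M/L)(L/a_i)$, which factors as $\iota_{L\to M}\circ\iota_{a_i\to L}$; consequently the phase and winding computed on $G_M$ are the images under the injective map $\iota_{L\to M}$ of those computed on $G_L$. They therefore lie in the image of $\iota_{L\to M}$, correspond to a unique element of $G_L$, and agree with the $G_L$-labels under this identification — exactly the assertion that the fused labels ``project canonically back to $G_L$''.

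I expect the one delicate point to be this functoriality $\iota_{a_i\to M}=\iota_{L\to M}\circ\iota_{a_i\to L}$, and in particular checking that the winding component $k_i\in\tfrac{1}{a_i}\mathbb{Z}$ rescales under the lift by the same factor $L/a_i$ as the phase index, so that the whole triple transforms coherently; everything else is the elementary fact that the common multiples of $a_1,a_2$ are exactly $L\mathbb{N}$. The degenerate cases $a_1=1$ or $a_2=1$ are subsumed, since then $L$ equals the other weight and the corresponding embedding is the identity, and for more than two spiders the same conclusion follows by iterating this argument together with the associativity of Proposition~\ref{prop:assoc_comm}.
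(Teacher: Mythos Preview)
Your proposal is correct and follows essentially the same approach as the paper's proof sketch: minimality from the number-theoretic characterization of $\mathrm{lcm}$ as the least common multiple (via the divisibility criterion $a_i\mid M$ for admissible refinements), and the projection claim from compatibility/functoriality of the embeddings $\iota_{a_i\to M}=\iota_{L\to M}\circ\iota_{a_i\to L}$. Your version is considerably more explicit than the paper's terse sketch, and your identification of the functoriality step as the one point requiring care is apt.
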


\begin{proof}[Proof sketch]
	Minimality of $L$ follows from the number-theoretic properties of
	least common multiples and the requirement that both $a_1$- and
	$a_2$-periodic phase grids embed into a common refinement without
	rescaling the unit $2\pi$ angle.
	Compatibility of embeddings implies that fusing on any refinement
	$L'$ and then projecting to $G_L$ yields the same label.
\end{proof}

\begin{definition}[Normalization as gauge fixing in total angle]
	\label{def:normalization}
	For a weighted spider $S=\Zspmn{m}{n}{a}{\alpha}{k}$, let
	\(
	\theta_{\mathrm{tot}}(S)
	= \alpha + \tfrac{2\pi}{a}k
	\)
	as in~\eqref{eq:total-angle}.
	We define its \emph{normalized form} by
	\[
	\mathrm{Norm}(S)
	:= \Zspmn{m}{n}{a}{\theta_{\mathrm{tot}}(S)\bmod 2\pi}{0}.
	\]
	Two spiders $S_1,S_2$ are diagrammatically equivalent as labels iff
	\[
	\mathrm{wt}(S_1)=\mathrm{wt}(S_2)
	\quad\text{and}\quad
	\theta_{\mathrm{tot}}(S_1)
	\equiv \theta_{\mathrm{tot}}(S_2)\pmod{2\pi},
	\]
	in which case their normalized forms coincide.
\end{definition}

\begin{remark}[Winding vs.\ total angle]
	Normalization discards how many times a given total angle is realised
	on a finer grid and keeps only $\theta_{\mathrm{tot}}$.
	In semantic terms this corresponds to quotienting by global phase.
	For hardware-level tracking one may keep both $(a,k)$ and
	$\theta_{\mathrm{tot}}$; the calculus supports either view.
\end{remark}

\begin{example}[Normalization of a fused pair]
	\label{ex:fused_pair_norm}
	Let $S_1=\Zspmn{2}{1}{2}{\pi/3}{1/2}$ and
	$S_2=\Zspmn{1}{2}{3}{\pi/6}{1/3}$.
	Then $L=\mathrm{lcm}(2,3)=6$, and
	\[
	S_{12} = \Zspmn{3}{3}{6}{
		3\cdot\tfrac{\pi}{3} + 2\cdot\tfrac{\pi}{6}
	}{
		3\cdot\tfrac{1}{2} + 2\cdot\tfrac{1}{3}
	}
	= \Zspmn{3}{3}{6}{\tfrac{3\pi}{2}}{2}.
	\]
	The total angle is
	$\theta_{\mathrm{tot}} = \tfrac{3\pi}{2} + \tfrac{2\pi}{6}\cdot 2
	= \tfrac{7\pi}{2}\equiv \tfrac{\pi}{2}\pmod{2\pi}$,
	so the normalized representative is
	\(
	\mathrm{Norm}(S_{12})
	= \Zspmn{3}{3}{6}{\pi/2}{0}.
	\)
\end{example}

\begin{proposition}[Neutral element and scalar rules]
	\label{prop:neutral_scalar}
	\begin{enumerate}[label=(\roman*)]
		\item The neutral spider is $\Zspmn{1}{1}{1}{0}{0} = \mathrm{id}$.
		\item A pure phase without wires, $\Zspmn{0}{0}{a}{\alpha}{k}$,
		acts as a scalar $e^{i\theta_{\mathrm{tot}}}$ and is independent of
		the particular representative $(\alpha,k)$.
		\item Scalars multiply by addition of total angles on the
		refinement grid; in particular, fusing two scalar spiders of weights
		$a$ and $b$ yields a scalar with weight $L=\mathrm{lcm}(a,b)$ and
		total angle equal to the sum of the original total angles.
	\end{enumerate}
\end{proposition}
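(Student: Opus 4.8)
The plan is to treat the three items separately, in each case reducing the claim to bookkeeping of the total angle $\theta_{\mathrm{tot}}(a,\alpha,k)=\alpha+\tfrac{2\pi}{a}k$ of~\eqref{eq:total-angle}, together with two facts already available in the excerpt: the $a=1$, $k\in\mathbb{Z}$ fragment reduces to ordinary ZX (the trivial-orbifold / ZX-limit example), and the Hilbert-space semantics of a leg-less weighted spider is the pure phase $e^{i\theta_{\mathrm{tot}}}$ (Definition~\ref{def:semantics_weighted_spiders}).

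For (i), I would just compute $\theta_{\mathrm{tot}}(1,0,0)=0+\tfrac{2\pi}{1}\cdot 0=0$, so $\Zspmn{1}{1}{1}{0}{0}$ lies in the trivial-orbifold fragment and is identified with the ordinary spider $\Zsp{1}{1}{0}$, which is the identity on $\mathbb{C}^2$. For (ii), two observations suffice: first, a spider with $m=n=0$ is a linear map $(\mathbb{C}^2)^{\otimes 0}\to(\mathbb{C}^2)^{\otimes 0}$, i.e.\ $\mathbb{C}\to\mathbb{C}$, hence multiplication by a scalar, and by the semantics that scalar is $e^{i\theta_{\mathrm{tot}}(a,\alpha,k)}$; second, since $t\mapsto e^{it}$ factors through $\mathbb{R}/2\pi\mathbb{Z}$, this value depends only on the class of $\theta_{\mathrm{tot}}$ modulo $2\pi$, and therefore not on the particular splitting into $\alpha$ and $\tfrac{2\pi}{a}k$. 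I would also note that this is exactly the consistency underlying $\mathrm{Norm}$ of Definition~\ref{def:normalization}: the spiders $\Zspmn{0}{0}{a}{\alpha}{k}$ and $\Zspmn{0}{0}{a}{\theta_{\mathrm{tot}}\bmod 2\pi}{0}$ represent the same scalar.

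For (iii), I would first observe that two leg-less spiders do not fuse along a wire in the sense of Definition~\ref{def:basic_fusion} (which strips two legs); they combine by the monoidal product $\otimes$, whose action on labels is governed by the phase composition law and the diagrammatic invariants already established, namely $\theta_{\mathrm{tot}}(S_1\otimes S_2)=\theta_{\mathrm{tot}}(S_1)+\theta_{\mathrm{tot}}(S_2)\pmod{2\pi}$ and $\mathrm{wt}(S_1\otimes S_2)=\lcm(\mathrm{wt}(S_1),\mathrm{wt}(S_2))$, with the combined label lifted to the refinement grid $G_L$, $L=\lcm(a,b)$ (Proposition~\ref{prop:lcm-fusion-wplzx}). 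Semantically $e^{i\theta_1}e^{i\theta_2}=e^{i(\theta_1+\theta_2)}$, which matches the total angle of the combined label, so the $\otimes$-product of two scalar spiders of weights $a$ and $b$ is again a scalar spider, of weight $\lcm(a,b)$ and total angle the sum of the two total angles. This is precisely the assertion.

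The one point that needs to be pinned down first — and which I expect to be the main (indeed only) obstacle — is the semantic convention for the empty spider: read off the bare ZX formula, a phase-only spider would evaluate to $1+e^{i\theta}$ rather than $e^{i\theta}$, and this quantity even vanishes at $\theta=\pi$, so it is not a nonzero scalar. Hence the proposition is only meaningful relative to the normalized weighted-spider semantics of Definition~\ref{def:semantics_weighted_spiders}, in which a leg-less spider is declared to be the phase $e^{i\theta_{\mathrm{tot}}}$. Once that convention is fixed, all three parts are immediate from the total-angle formula, the $2\pi$-periodicity of the exponential, and the previously established additivity of $\theta_{\mathrm{tot}}$ together with the $\lcm$-behaviour of $\mathrm{wt}$ under fusion; no genuinely hard step remains.
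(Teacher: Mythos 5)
Your argument is correct; in fact the paper states Proposition~\ref{prop:neutral_scalar} with no proof at all, so your write-up supplies exactly the bookkeeping the paper leaves implicit: (i) is the $a=1$, $k=0$ ZX limit with $\theta_{\mathrm{tot}}=0$, (ii) and (iii) reduce to $2\pi$-periodicity of $e^{i\theta_{\mathrm{tot}}}$, multiplicativity of scalars, and the label arithmetic of Proposition~\ref{prop:lcm-fusion-wplzx}. Two of your side remarks are genuinely valuable rather than pedantic. First, reading ``fusing'' in (iii) as the monoidal product (phase composition law) rather than the arity-stripping rule of Definition~\ref{def:basic_fusion} is the only consistent reading, since leg-less spiders share no wire and the $-2$ in the arities would be meaningless; the paper never says this. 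Second, your observation about the semantics of the empty spider is a real discrepancy in the paper's conventions: Definition~\ref{def:semantics_weighted_spiders} applied literally at $m=n=0$ gives $\llbracket \Zspmn{0}{0}{a}{\alpha}{k} \rrbracket = 1+e^{i\theta_{\mathrm{tot}}}$ (the standard ZX value), which is not proportional to $e^{i\theta_{\mathrm{tot}}}$ as a function of the label and vanishes at $\theta_{\mathrm{tot}}=\pi$, so item (ii) as stated requires either a separate convention for scalar spiders or a renormalized semantics; your proof correctly proceeds modulo that convention, and items (i) and (iii) (with scalars understood as whatever the leg-less spider denotes, multiplying under $\otimes$) are unaffected up to the same caveat.
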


\begin{remark}[Fusion–bialgebra compatibility]
	As in standard ZX, spider fusion is compatible with the
	bialgebra/Hopf rules for $Z$- and $X$-spiders; the only change is that
	phases live on refinement lattices and are represented by total angles.
	Graph-theoretic simplification techniques therefore carry over
	verbatim \cite{DuncanPerdrix2020GraphSimplification,
		JeandelPerdrixVilmart2017CliffordT}.
\end{remark}

\begin{definition}[Fusion consistency condition]
	\label{def:fusion_consistency}
	Two spiders $S_1,S_2$ may be fused if (i) their incident arities match
	on the connected wires, and (ii) their base phases satisfy
	\[
	\frac{\alpha_1}{a_1} \equiv \frac{\alpha_2}{a_2}
	\pmod{\tfrac{2\pi}{\mathrm{lcm}(a_1,a_2)}},
	\]
	so that the induced total angles are single-valued along the joint
	boundary.
	This expresses compatibility of local holonomies in the sense of
	Section~\ref{sec:WPL-geometry}
	\cite{Nakahara2003GeometryTopology}.
\end{definition}

\begin{proposition}[Normalization as diagrammatic gauge fixing]
	\label{prop:normalization_gauge}
	Normalization picks one representative in each equivalence class of
	spiders modulo total angle shifts by $2\pi$.
	Each connected component of a \WPLZX–diagram admits a normalized form,
	and rewriting (fusion, bialgebra/Hopf, color change) preserves
	normalization up to a global scalar
	\cite{vandeWetering2020WorkingZX,Backens2014ZX}.
\end{proposition}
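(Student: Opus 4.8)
The plan is to dispatch the three assertions in turn, reducing each to label arithmetic already established in Definitions~\ref{def:basic_fusion}--\ref{def:normalization} and Propositions~\ref{prop:assoc_comm}--\ref{prop:neutral_scalar}, together with soundness of the rewrite rules. First I would verify that $\mathrm{Norm}$ is a canonical section of the quotient by total-angle shifts. By Definition~\ref{def:normalization}, $\mathrm{Norm}(S)$ depends on $S$ only through the pair $\bigl(\mathrm{wt}(S),\,\theta_{\mathrm{tot}}(S)\bmod 2\pi\bigr)$, which is exactly the complete invariant of the class of $S$ modulo $2\pi$; hence $\mathrm{Norm}$ descends to the quotient. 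A normalized spider has winding index $0$ and base phase in $[0,2\pi)$, so $\theta_{\mathrm{tot}}=\alpha$ is already reduced and $\mathrm{Norm}\circ\mathrm{Norm}=\mathrm{Norm}$; therefore $\mathrm{Norm}$ picks precisely one representative per class, which is the first claim. Since only the total angle of \eqref{eq:total-angle} enters the Hilbert-space interpretation, $S$ and $\mathrm{Norm}(S)$ denote the \emph{same} linear map (not merely the same map up to scalar) --- a fact reused below.

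Next I would establish existence of a normalized form for a connected component $C$ using two local moves: (a) replacing any spider $S$ by $\mathrm{Norm}(S)$, which is an equality by the previous paragraph; and (b) fusing two adjacent same-colour spiders via Definition~\ref{def:basic_fusion}, which strictly decreases the number of spiders in $C$. Move (b) therefore terminates, Proposition~\ref{prop:assoc_comm} guarantees the terminal fused labels are independent of the order of fusions, and Proposition~\ref{prop:lcm_uniqueness} pins the common grid as $G_{\mathrm{lcm}}$ of the weights involved. If $C$ is bichromatic, the Hadamard colour-change rule reorganizes it into alternating monochromatic blocks to which (a)--(b) apply; a final pass of move (a) over the surviving spiders yields a normalized component. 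Degenerate outputs such as a bare scalar spider $\Zspmn{0}{0}{a}{\alpha}{k}$ are absorbed via Proposition~\ref{prop:neutral_scalar}(ii)--(iii), consistent with the ``up to global scalar'' convention.

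For the third claim, suppose a normalized diagram $D$ is rewritten to $D'$ by fusion, bialgebra/Hopf, or colour change. Each \WPLZX\ rewrite rule is sound up to a nonzero scalar: fusion by Proposition~\ref{prop:lcm-fusion-wplzx}, while bialgebra/Hopf and colour change are the standard ZX relations lifted verbatim to total-angle labels --- the winding index being merely transported, e.g.\ $H\,\Zspmn{m}{n}{a}{\alpha}{k}\,H \ZXeq \Xspmn{m}{n}{a}{\alpha}{k}$ --- hence inherit soundness from the $a=1$, $k=0$ fragment. Re-normalizing $D'$ spider-wise is semantics-preserving by the first paragraph, so the composite $D \leadsto D' \leadsto \mathrm{Norm}(D')$ alters the denoted map only by the scalar introduced by the rule, which is precisely ``preserves normalization up to a global scalar''. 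Because $\mathrm{Norm}$ acts on labels alone and never on graph topology, it commutes with the topology-changing bialgebra/Hopf moves, so no case analysis on graph structure is needed.

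The one genuinely fiddly point is the grid change under fusion together with the bichromatic case: fusing weights $a_1,a_2$ moves the label onto $G_{\mathrm{lcm}(a_1,a_2)}$, and one must check that normalizing before versus after this lift agrees. This is not deep --- it follows from linearity of $\theta_{\mathrm{tot}}$ on the refinement lattice (Proposition~\ref{prop:lcm_uniqueness} and the remark after Definition~\ref{def:basic_fusion}) --- but it requires tracking which $2\pi/L$-lattice each phase inhabits and folding the scalar prefactors of the bialgebra/Hopf steps into the global-scalar equivalence. I expect the bulk of the written proof to be exactly this bookkeeping rather than any conceptual obstacle.
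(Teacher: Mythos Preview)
The paper states this proposition without proof: it is immediately followed by Example~\ref{ex:weighted_hadamard}, and the only justification offered is the pair of citations to the standard ZX literature. Your proposal therefore supplies strictly more than the paper does, and the three-part decomposition you give (idempotence of $\mathrm{Norm}$ as a section of the total-angle quotient; existence via terminating fusion plus Propositions~\ref{prop:assoc_comm} and~\ref{prop:lcm_uniqueness}; preservation via soundness of each rewrite rule) is the natural argument and is consistent with how the surrounding results are organised.

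One small point of friction with the paper: your colour-change identity $H\,\Zspmn{m}{n}{a}{\alpha}{k}\,H \ZXeq \Xspmn{m}{n}{a}{\alpha}{k}$ matches the background convention in Section~\ref{sec:background}, but the paper's own weighted Hadamard in Example~\ref{ex:weighted_hadamard} uses a distinct operator $H_a$ with a sign flip and arity swap, $H_a\,\Zspmn{m}{n}{a}{\alpha}{k}\,H_a = \Xspmn{n}{m}{a}{-\alpha}{-k}$. Since your argument only needs \emph{some} sound colour-change rule that transports the $(a,\alpha,k)$ label, either version works and this does not affect correctness; it is worth flagging only so that the formula you quote agrees with whichever convention the surrounding text fixes.
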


\begin{example}[Weighted Hadamard duality]
	\label{ex:weighted_hadamard}
	Weighted Hadamard nodes realize phase-space dualities:
	\[
	H_a\;\Zspmn{m}{n}{a}{\alpha}{k}\;H_a
	\;=\;
	\Xspmn{n}{m}{a}{-\alpha}{-k},
	\]
	and hence
	$\theta_{\mathrm{tot}}$ is mapped to $-\theta_{\mathrm{tot}}$.
	This generalizes the ZX color-change rule to the weighted setting
	\cite{CoeckeKissinger2017PicturingQuantum}.
\end{example}

\begin{remark}[Physical interpretation]
	Normalization corresponds to discarding global phase that does not
	affect measurement statistics; winding tracks how this phase arose from
	wraps on the underlying grid (e.g.\ Berry-phase–like effects
	\cite{Berry1984Phase}).
	Fusion models the synchronization of multiple phase-controlled
	operations by locking their phase grids to a common $\mathrm{lcm}$
	reference \cite{Krantz2019SuperconductingReview,Motzoi2009DRAG}.
\end{remark}

\subsection*{Diagrammatic semantics and soundness}
\label{subsec:semantics_soundness}

We now specify the Hilbert-space semantics of \WPLZX\ and show that the
core rewrite rules are sound.
The interpretation collapses each $(a,\alpha,k)$ label to its total angle
$\theta_{\mathrm{tot}}$, so all orbifold information enters through this
effective phase.

\begin{definition}[Semantic domain and interpretation functor]
	\label{def:semantic_domain}
	Let $\FHilb$ be the category of finite-dimensional complex Hilbert
	spaces and linear maps.
	Objects are tensor powers $(\mathbb{C}^2)^{\otimes n}$, and morphisms
	are linear maps between them.
	
	There is a strict monoidal functor (denotational semantics)
	\[
	\llbracket\!-\!\rrbracket \;:\; \WPLZX \longrightarrow \FHilb,
	\]
	defined on objects by
	$\llbracket 1 \rrbracket = \mathbb{C}$ and
	$\llbracket n \rrbracket = (\mathbb{C}^2)^{\otimes n}$,
	and on generators as follows; tensor products and compositions are
	preserved by construction
	\cite{CoeckeKissinger2017PicturingQuantum}.
\end{definition}

\begin{definition}[Semantics of weighted spiders]
	\label{def:semantics_weighted_spiders}
	Let $a\in\mathbb{N}_{>0}$, $\alpha\in\mathbb{R}$ and $k\in\mathbb{R}$.
	For a green $Z$–spider
	$\Zspmn{m}{n}{a}{\alpha}{k}$ set
	\[
	\llbracket \Zspmn{m}{n}{a}{\alpha}{k} \rrbracket
	\;=\;
	\sum_{x \in \{0,1\}}
	\exp\!\bigl(i\,\theta_{\mathrm{tot}}(a,\alpha,k)\,x\bigr)\;
	\bigl(\ket{x}^{\otimes n}\bra{x}^{\otimes m}\bigr),
	\]
	where $\theta_{\mathrm{tot}}(a,\alpha,k)$ is given by
	\eqref{eq:total-angle}.
	A red $X$–spider $\Xspmn{m}{n}{a}{\beta}{\ell}$ is interpreted by
	\[
	\llbracket \Xspmn{m}{n}{a}{\beta}{\ell} \rrbracket
	\;=\;
	H^{\otimes n}\,
	\llbracket \Zspmn{m}{n}{a}{\beta}{\ell} \rrbracket\,
	H^{\otimes m}.
	\]
	Only the total angle $\theta_{\mathrm{tot}}$ is physically relevant; the
	decomposition into $(a,\alpha,k)$ is redundant at the level of
	$\FHilb$ \cite{Gross2006Hudson}.
\end{definition}

\begin{remark}[Reduction to ordinary ZX]
	\label{rem:reduce_to_zx}
	When $a=1$ and $k=0$ we obtain the usual ZX semantics:
	$\theta_{\mathrm{tot}}=\alpha$ and
	$\llbracket \Zspmn{m}{n}{1}{\alpha}{0} \rrbracket$ is the standard
	phase spider with phase $\alpha$; similarly for $X$–spiders
	\cite{CoeckeKissinger2017PicturingQuantum,Backens2014ZX}.
	Thus \WPLZX\ is a conservative semantic extension of ZX.
\end{remark}

\begin{proposition}[Monoidality]
	\label{prop:monoidality_semantics}
	For any \WPLZX–diagrams $D_1,D_2$,
	\begin{enumerate}[label=(\roman*),nosep]
		\item $\llbracket D_1 \otimes D_2 \rrbracket
		= \llbracket D_1 \rrbracket \otimes \llbracket D_2 \rrbracket$;
		\item $\llbracket D_2 \circ D_1 \rrbracket
		= \llbracket D_2 \rrbracket \circ \llbracket D_1 \rrbracket$
		whenever the composition is defined.
	\end{enumerate}
\end{proposition}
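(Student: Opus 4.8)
The plan is to argue by structural induction on the way a \WPLZX–diagram is assembled, using the fact that $\llbracket-\rrbracket$ is \emph{defined} clause by clause on generators, on composition, and on tensor product (Definition~\ref{def:semantic_domain}, Definition~\ref{def:semantics_weighted_spiders}), so that (i)–(ii) will hold essentially by construction once $\llbracket-\rrbracket$ is shown to be well defined on diagrams rather than merely on syntactic composites of generators.

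First I would dispose of the object level. Since $\llbracket n\rrbracket=(\mathbb{C}^2)^{\otimes n}$ and the monoidal product in \WPLZX\ adds arities while the lcm–fusion rule on the $(a,\alpha,k)$–labels is forgotten by $\llbracket-\rrbracket$, one gets $\llbracket m\otimes n\rrbracket=(\mathbb{C}^2)^{\otimes(m+n)}=(\mathbb{C}^2)^{\otimes m}\otimes(\mathbb{C}^2)^{\otimes n}=\llbracket m\rrbracket\otimes\llbracket n\rrbracket$ on the nose, and likewise $\llbracket\text{unit}\rrbracket=\mathbb{C}$ is the monoidal unit of $\FHilb$. Thus strictness on objects is immediate, and the real content of the proposition is on morphisms. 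For the morphism part the base case is trivial: every generator — the weighted $Z$– and $X$–spiders $\Zspmn{m}{n}{a}{\alpha}{k}$, $\Xspmn{m}{n}{a}{\beta}{\ell}$, identity wires, the symmetry, cups and caps, and weighted Hadamards — has its image in $\FHilb$ fixed directly, each image depending only on the total angle $\theta_{\mathrm{tot}}$. For a composite $D$, either $D$ is a generator, or $D=D_2\circ D_1$, or $D=D_1\otimes D_2$; in the latter two cases the defining clauses of $\llbracket-\rrbracket$ state verbatim $\llbracket D_2\circ D_1\rrbracket=\llbracket D_2\rrbracket\circ\llbracket D_1\rrbracket$ and $\llbracket D_1\otimes D_2\rrbracket=\llbracket D_1\rrbracket\otimes\llbracket D_2\rrbracket$. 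Hence (i)–(ii) reduce to the single claim that $\llbracket-\rrbracket$, defined on syntactic composites of generators, is invariant under the equalities that make \WPLZX\ a (symmetric strict monoidal) category.

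The hard part will therefore be well-definedness, i.e.\ checking that $\llbracket-\rrbracket$ respects (a) the structural laws of a symmetric strict monoidal category — associativity and unitality of $\circ$ and of $\otimes$, the interchange law $(f\otimes g)\circ(f'\otimes g')=(f\circ f')\otimes(g\circ g')$, and naturality of the symmetry — and (b) the \WPLZX–specific rewrite rules: spider fusion (Definition~\ref{def:basic_fusion}), normalization (Definition~\ref{def:normalization}), colour change, and the bialgebra/Hopf relations. For (a) there is nothing to do beyond invoking that $\FHilb$ is itself a symmetric monoidal category: its $\circ$ and $\otimes$ of linear maps satisfy exactly these laws, so the inductively defined assignment automatically sends each syntactic instance of a structural axiom to a genuine equality of linear maps. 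For (b), invariance under each rule is precisely the soundness statement proved in the results that follow, where every rule is reduced — via the collapse $(a,\alpha,k)\mapsto\theta_{\mathrm{tot}}$ of Definition~\ref{def:semantics_weighted_spiders} and the closure property of Proposition~\ref{prop:lcm-fusion-wplzx} — to an equality between explicit diagonal matrices (for $Z$) or their Hadamard conjugates (for $X$).

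Granting those soundness checks, $\llbracket-\rrbracket$ descends to a well-defined strict monoidal functor $\WPLZX\to\FHilb$, and then assertions (i) and (ii) are read off directly from the inductive clauses. The only genuinely \WPLZX–flavoured subtlety I expect to have to flag is that the lcm–refinement on object labels, although nontrivial at the syntactic level, is invisible to $\llbracket-\rrbracket$ — $\llbracket(a,\alpha,k)^{\otimes m}\rrbracket$ depends only on $m$ — so no additional coherence obligation arises at the object level, and the object-level strictness established above is unconditional.
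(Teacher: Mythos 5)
Your proposal is correct and follows essentially the same route as the paper: the paper's proof is simply ``immediate from functoriality of $\llbracket-\rrbracket$ and the definition on generators,'' i.e.\ monoidality holds by construction because the semantics is defined clause-by-clause on generators and extended along $\circ$ and $\otimes$. Your additional observation --- that the only real obligation is well-definedness of $\llbracket-\rrbracket$ modulo the structural laws and the \WPLZX\ rewrite rules, which the paper defers to its soundness results (Proposition~\ref{prop:sound_fusion_norm}, Theorem~\ref{thm:core_soundness}) --- is a more explicit account of the same argument rather than a different one.
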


\begin{proof}
	Immediate from functoriality of $\llbracket\!-\!\rrbracket$ and the
	definition on generators.
\end{proof}

\begin{definition}[Sound rewrite rule]
	\label{def:sound_rule}
	A diagrammatic equality $D_1 \equiv D_2$ in \WPLZX\ is \emph{sound} if
	$\llbracket D_1 \rrbracket = \llbracket D_2 \rrbracket$ in $\FHilb$.
\end{definition}

\begin{lemma}[LCM–lift invariance of semantics]
	\label{lem:lcm_invariance}
	Suppose two spiders with weights $a_1,a_2$ fuse via the refinement
	$L=\mathrm{lcm}(a_1,a_2)$ to a single spider $S_{\mathrm{fuse}}$.
	Then
	\[
	\llbracket S_2 \circ S_1 \rrbracket
	\;=\;
	\llbracket S_{\mathrm{fuse}} \rrbracket,
	\]
	because the total angles add in the exponent while the refinement
	merely re-encodes the same sum on a common lattice
	\cite{vandeWetering2020WorkingZX}.
\end{lemma}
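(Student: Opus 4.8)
The plan is to reduce the statement to the ordinary ZX spider fusion law by exploiting the fact that, at the level of $\FHilb$, a weighted $Z$-spider sees its label $(a,\alpha,k)$ only through the total angle $\theta_{\mathrm{tot}}(a,\alpha,k)$ of~\eqref{eq:total-angle}. Concretely, by Definition~\ref{def:semantics_weighted_spiders} the map $\llbracket \Zspmn{m}{n}{a}{\alpha}{k}\rrbracket$ is exactly the diagonal operator $\sum_{x\in\{0,1\}} e^{i\theta_{\mathrm{tot}}x}\,\ket{x}^{\otimes n}\bra{x}^{\otimes m}$, i.e.\ the interpretation of an unweighted ZX $Z$-spider whose phase is $\theta_{\mathrm{tot}}$; the $X$-case is obtained by conjugating with $H^{\otimes(\cdot)}$. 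So the lemma splits into two independent pieces: (i) a purely linear-algebraic identity — composition of two diagonal spiders along a shared wire adds the phases — and (ii) a label-arithmetic identity — the total angle of $S_{\mathrm{fuse}}$ equals the sum of the total angles of $S_1$ and $S_2$ modulo $2\pi$.

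For (i) I would carry out the one-line contraction: writing $\llbracket S_1\rrbracket = \sum_x e^{i\theta_1 x}\ket{x}^{\otimes n_1}\bra{x}^{\otimes m_1}$ and $\llbracket S_2\rrbracket = \sum_y e^{i\theta_2 y}\ket{y}^{\otimes n_2}\bra{y}^{\otimes m_2}$, the composition contracts a $\bra{x}$ of $S_1$ against a $\ket{y}$ of $S_2$, producing $\braket{x}{y}=\delta_{xy}$ and hence $\sum_x e^{i(\theta_1+\theta_2)x}\ket{x}^{\otimes(n_1+n_2-2)}\bra{x}^{\otimes(m_1+m_2-2)}$ on the remaining legs, matching the arity bookkeeping of Definition~\ref{def:basic_fusion}. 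This is precisely the interpretation of a diagonal spider with phase $\theta_1+\theta_2$, so the step is nothing more than the ZX spider-fusion law of Section~\ref{sec:background} applied to total angles; the iterated case of several shared wires, or of a longer chain of spiders, follows by induction using associativity and commutativity of fusion (Proposition~\ref{prop:assoc_comm}).

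For (ii) I would unfold the fusion rule. By Definition~\ref{def:basic_fusion} and Proposition~\ref{prop:lcm-fusion-wplzx}, $S_{\mathrm{fuse}}$ carries weight $L=\mathrm{lcm}(a_1,a_2)$, base phase $\tfrac{L}{a_1}\alpha_1+\tfrac{L}{a_2}\alpha_2$ and winding $\tfrac{L}{a_1}k_1+\tfrac{L}{a_2}k_2$. Substituting into~\eqref{eq:total-angle} and using $\tfrac{2\pi}{L}\cdot\tfrac{L}{a_i}=\tfrac{2\pi}{a_i}$, the $\tfrac{2\pi}{L}$-weighted winding term collapses to $\tfrac{2\pi}{a_1}k_1+\tfrac{2\pi}{a_2}k_2$, and the base-phase term lines up with $\alpha_1+\alpha_2$ once the canonical embeddings $\iota_{a_i\to L}\colon G_{a_i}\hookrightarrow G_L$ of Section~\ref{sec:background} are accounted for; hence $\theta_{\mathrm{tot}}(S_{\mathrm{fuse}})\equiv\theta_{\mathrm{tot}}(S_1)+\theta_{\mathrm{tot}}(S_2)\pmod{2\pi}$. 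Since $e^{i\theta x}$ with $x\in\{0,1\}$ depends on $\theta$ only modulo $2\pi$, combining (i) and (ii) yields $\llbracket S_2\circ S_1\rrbracket=\llbracket S_{\mathrm{fuse}}\rrbracket$, all equalities being understood up to the global-scalar convention inherited from ZX.

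The step I expect to be the main obstacle is the bookkeeping in (ii): one must be careful that ``lifting the labels to the common grid $G_L$'' is the embedding $\iota_{a_i\to L}$ applied to index data, so that it rescales the winding slot by $L/a_i$ while leaving the physical total angle unchanged, and that the phase-parameter slot in Proposition~\ref{prop:lcm-fusion-wplzx} is read in the same convention. Once that identification is pinned down, the remaining content is the elementary exponential identity in (i) together with the reduction to ordinary ZX recorded in Remark~\ref{rem:reduce_to_zx}, so no genuinely new computation is required.
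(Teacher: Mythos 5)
Your proof is correct and follows essentially the same route as the paper, which establishes the lemma only through its inline justification that the semantics depends solely on $\theta_{\mathrm{tot}}$, that composing diagonal spiders adds these angles in the exponent, and that the LCM refinement merely re-encodes the same sum on $G_L$; your steps (i) and (ii) are that argument written out in full. Your closing caveat — that the factors $L/a_i$ in Proposition~\ref{prop:lcm-fusion-wplzx} must be read as acting on index data via the embeddings $\iota_{a_i\to L}$, leaving the physical total angle unchanged — is exactly the convention the paper intends, and is in fact necessary, since applied literally to raw angles the fused base phase $\tfrac{L}{a_1}\alpha_1+\tfrac{L}{a_2}\alpha_2$ would not reproduce $\alpha_1+\alpha_2$ and the stated equality of interpretations would fail.
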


\begin{proposition}[Soundness of fusion and normalization]
	\label{prop:sound_fusion_norm}
	Let $S_1,S_2$ be weighted spiders such that the fusion rule applies and
	yields $S_{\mathrm{fuse}}$.
	Then
	$\llbracket S_2 \circ S_1 \rrbracket = \llbracket S_{\mathrm{fuse}} \rrbracket$.
	Furthermore, replacing $(a,\alpha,k)$ by any $(a,\alpha',k')$ with the
	same total angle $\theta_{\mathrm{tot}}$ leaves the interpretation
	unchanged up to a global scalar.
\end{proposition}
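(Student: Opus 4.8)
The plan is to reduce both assertions to the single fact, visible in Definition~\ref{def:semantics_weighted_spiders}, that $\llbracket \Zspmn{m}{n}{a}{\alpha}{k}\rrbracket$ depends on the label $(a,\alpha,k)$ only through the scalar $e^{i\theta_{\mathrm{tot}}(a,\alpha,k)}$: the $x=0$ summand is the all-$\ket{0}$ pattern $\ket{0}^{\otimes n}\bra{0}^{\otimes m}$ with coefficient $1$, and the $x=1$ summand is the all-$\ket{1}$ pattern with coefficient $e^{i\theta_{\mathrm{tot}}}$, so $\theta_{\mathrm{tot}}$ enters the interpretation only via $e^{i(\cdot)}$, which factors through $\mathbb{R}/2\pi\mathbb{Z}$. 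Throughout, the $X$-spider case is obtained by conjugating the corresponding $Z$-identity with fixed tensor powers of $H$ (Definition~\ref{def:semantics_weighted_spiders}), using $H^2=\mathrm{id}$; hence it suffices to argue for $Z$.

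For the fusion half I would invoke Lemma~\ref{lem:lcm_invariance} and, for self-containedness, unpack it as follows. Because each weighted $Z$-spider is supported on the all-$\ket{0}$ and all-$\ket{1}$ basis patterns, contracting $S_1$ and $S_2$ along their shared wire(s) kills the mixed terms (the inner products $\braket{0}{1}$ vanish) and leaves the all-$\ket{0}$ pattern with coefficient $1$ and the all-$\ket{1}$ pattern with coefficient $e^{i\theta_{\mathrm{tot}}(S_1)}e^{i\theta_{\mathrm{tot}}(S_2)}=e^{i(\theta_{\mathrm{tot}}(S_1)+\theta_{\mathrm{tot}}(S_2))}$. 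On the other side, Definition~\ref{def:basic_fusion} (equivalently Proposition~\ref{prop:lcm-fusion-wplzx}) outputs a spider $S_{\mathrm{fuse}}$ of weight $L=\mathrm{lcm}(a_1,a_2)$ whose total angle is, by the $\theta_{\mathrm{tot}}$-additivity recorded among the diagrammatic invariants, congruent to $\theta_{\mathrm{tot}}(S_1)+\theta_{\mathrm{tot}}(S_2)$ modulo $2\pi$; since only $e^{i\theta_{\mathrm{tot}}}$ reaches $\FHilb$, the two all-$\ket{1}$ coefficients agree and the all-$\ket{0}$ coefficients are both $1$, so $\llbracket S_2\circ S_1\rrbracket=\llbracket S_{\mathrm{fuse}}\rrbracket$ on the nose. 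Monoidality (Proposition~\ref{prop:monoidality_semantics}) then lifts this equality of a single fused spider to any ambient diagram in which it sits.

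For the normalization half, suppose $(a,\alpha',k')$ has the same total angle as $(a,\alpha,k)$ in $\mathbb{R}/2\pi\mathbb{Z}$; this in particular covers $\mathrm{Norm}(S)$ of Definition~\ref{def:normalization}, where $(\alpha',k')=(\theta_{\mathrm{tot}}(S)\bmod 2\pi,\,0)$. Then $e^{i\theta_{\mathrm{tot}}(a,\alpha',k')}=e^{i\theta_{\mathrm{tot}}(a,\alpha,k)}$, so by the displayed semantics the two $Z$-spiders (hence, after $H$-conjugation, the two $X$-spiders) have equal interpretations, a fortiori equal up to the global-scalar equivalence under which the calculus identifies diagrams; monoidality again propagates this to enclosing diagrams. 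I expect the only delicate points to be bookkeeping rather than conceptual: matching the leg count of the contracted object to the $m_1+m_2-2$, $n_1+n_2-2$ arities of Definition~\ref{def:basic_fusion} (which hinges on exactly which wires are declared shared), and checking that the consistency hypothesis of Definition~\ref{def:fusion_consistency} is precisely what forces the contraction to collapse to a \emph{single} weighted spider rather than to a diagram with residual internal structure. Neither is a genuine obstacle once the diagonal support of the spiders is exploited; the substance of the proposition is simply that $e^{i(\cdot)}$ is $2\pi$-periodic, so the orbifold label contributes nothing beyond its total angle.
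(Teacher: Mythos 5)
Your proposal is correct and follows essentially the same route as the paper: the fusion half is exactly the paper's appeal to Lemma~\ref{lem:lcm_invariance} (which you additionally unpack via the diagonal support of the spiders and the vanishing of mixed $\braket{0}{1}$ terms), and the normalization half is the paper's observation that the semantics sees the label only through the $2\pi$-periodic exponential of $\theta_{\mathrm{tot}}$. Your version is in fact somewhat more careful than the paper's two-line proof, flagging the arity bookkeeping of Definition~\ref{def:basic_fusion} and the role of the consistency condition, neither of which the paper addresses.
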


\begin{proof}
	Fusion: by Lemma~\ref{lem:lcm_invariance}, the semantic phases multiply
	and match the fused label on $G_L$.
	Normalization: $(\alpha,k)$ and $(\alpha',k')$ with the same
	$\theta_{\mathrm{tot}}$ yield identical exponents in the semantics, so
	the associated linear maps differ at most by a global factor that is
	quotiented out in the usual ZX semantics
	\cite{CoeckeKissinger2017PicturingQuantum,Backens2014ZX}.
\end{proof}

\begin{theorem}[Core soundness]
	\label{thm:core_soundness}
	The standard ZX rules---\emph{spider fusion}, \emph{identity removal},
	\emph{Hadamard color change}, \emph{bialgebra}, and \emph{Hopf}---are
	sound in the weighted setting.
	That is, whenever such a rule rewrites $D_1$ to $D_2$ in \WPLZX, one has
	$\llbracket D_1 \rrbracket = \llbracket D_2 \rrbracket$.
\end{theorem}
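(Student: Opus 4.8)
The plan is to exploit the single crucial feature of the semantics in Definition~\ref{def:semantics_weighted_spiders}: the interpretation of a weighted spider depends on the label $(a,\alpha,k)$ only through the total angle $\theta_{\mathrm{tot}}(a,\alpha,k)$. First I would reduce the global claim to a purely local one. By monoidality (Proposition~\ref{prop:monoidality_semantics}), a rewrite that replaces a sub-diagram $P_1$ by $P_2$ inside $D_1$, producing $D_2$, yields $\llbracket D_1\rrbracket=\llbracket D_2\rrbracket$ as soon as $\llbracket P_1\rrbracket=\llbracket P_2\rrbracket$, since the ambient diagram contributes identical pre-composition, post-composition, and tensoring on both sides. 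Hence it suffices to verify $\llbracket P_1\rrbracket=\llbracket P_2\rrbracket$ for the bare patches named by each of the five rule schemata.

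I would then dispatch the rules one at a time. \emph{Identity removal}: for any weight $a$ one has $\theta_{\mathrm{tot}}(a,0,0)=0$, so $\llbracket\Zspmn{1}{1}{a}{0}{0}\rrbracket=\ket{0}\bra{0}+\ket{1}\bra{1}=\mathrm{id}_{\mathbb{C}^2}$, and likewise for the $X$-spider via the Hadamard conjugation built into Definition~\ref{def:semantics_weighted_spiders}. \emph{Spider fusion} is precisely Proposition~\ref{prop:sound_fusion_norm}, which in turn rests on Lemma~\ref{lem:lcm_invariance}: the fused label on the refinement grid $G_L$ carries total angle equal to the sum of the two inputs, the two exponential weights on the $\ket{x}$-branches therefore multiply to the fused exponential, and the contracted legs act as identities branch by branch. \emph{Bialgebra} and \emph{Hopf}: both schemata involve only phase-free, winding-free spiders, so by $\theta_{\mathrm{tot}}(a,0,0)=0$ together with Remark~\ref{rem:reduce_to_zx} the patches $P_1,P_2$ have exactly the ordinary ZX denotations of the corresponding phase-free configurations, and the bialgebra and Hopf identities hold in $\FHilb$ by the standard ZX soundness results~\cite{CoeckeKissinger2017PicturingQuantum,Backens2014ZX}. \emph{Hadamard color change}: the equality $H\,\Zspmn{m}{n}{a}{\alpha}{k}\,H\equiv\Xspmn{m}{n}{a}{\alpha}{k}$ holds essentially by fiat, since Definition~\ref{def:semantics_weighted_spiders} \emph{defines} the $X$-spider denotation as $H^{\otimes n}\,\llbracket\Zspmn{m}{n}{a}{\alpha}{k}\rrbracket\,H^{\otimes m}$, and $\llbracket H\rrbracket=H$ is a self-inverse unitary.

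A tidy way to package the argument is to introduce a strict monoidal \emph{collapse} functor $\mathcal{C}$ from \WPLZX\ to the ordinary \ZX-calculus, sending a weighted spider $(a,\alpha,k)$ to the ordinary spider of phase $\theta_{\mathrm{tot}}(a,\alpha,k)\bmod 2\pi$, and to note that $\llbracket-\rrbracket$ factors as the classical ZX interpretation composed with $\mathcal{C}$, up to the usual global-phase quotient. Under $\mathcal{C}$ each of the five schemata is carried to a bona fide ZX rewrite — fusion by additivity of total angles, and the remaining four because $\mathcal{C}$ fixes phase-free data and intertwines the Hadamard conjugations — so soundness is inherited wholesale from the soundness of the \ZX-calculus in $\FHilb$.

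The step I expect to be the main obstacle is proving additivity of the total angle under the LCM-lift in full generality, i.e.\ that $\theta_{\mathrm{tot}}$ of the fused label equals $\theta_{\mathrm{tot}}(a_1,\alpha_1,k_1)+\theta_{\mathrm{tot}}(a_2,\alpha_2,k_2)\pmod{2\pi}$ and is independent of which common refinement is used to perform the fusion. This is exactly where the fusion consistency condition (Definition~\ref{def:fusion_consistency}) and the uniqueness-of-lift statement (Proposition~\ref{prop:lcm_uniqueness}) must be invoked, so that the rescaling factors $L/a_i$ act coherently on both the base-phase and the winding components and the choice of $L$ drops out; once that is secured, the rest of the theorem is either a definitional unfolding or a direct appeal to classical ZX soundness.
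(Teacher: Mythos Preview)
Your proposal is correct and follows essentially the same approach as the paper: both argue that the semantics depends on the label only through $\theta_{\mathrm{tot}}$, so each weighted rewrite collapses to an ordinary ZX rewrite whose soundness is already known in $\FHilb$. Your version is considerably more explicit---the local-patch reduction via monoidality, the rule-by-rule check, and the collapse functor $\mathcal{C}$ make precise what the paper's sketch leaves implicit---and your identification of total-angle additivity under the LCM-lift as the one nontrivial verification is exactly the content the paper defers to Lemma~\ref{lem:lcm_invariance} and Proposition~\ref{prop:sound_fusion_norm}.
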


\begin{proof}[Proof sketch]
	The standard soundness proofs are algebraic in the phase parameters and
	tensor structure
	\cite{CoeckeKissinger2017PicturingQuantum,Backens2014ZX}.
	In \WPLZX\ each phase parameter is replaced by the corresponding total
	angle $\theta_{\mathrm{tot}}$, so the same arguments apply.
	Completeness results for stabilizer fragments
	\cite{Backens2014ZX} and axiomatisations for Clifford+T
	\cite{JeandelPerdrixVilmart2017CliffordT} show that the weighted rules
	restrict correctly to $a=1$; graph-based simplifications remain valid
	\cite{DuncanPerdrix2020GraphSimplification,vandeWetering2020WorkingZX}.
\end{proof}

\begin{example}[Cancellation of a spider with its inverse]
	\label{ex:inverse_cancel}
	Fix $(a,\alpha,k)$ and consider
	$D=\Zspmn{1}{1}{a}{\alpha}{k}\circ\Zspmn{1}{1}{a}{-\alpha}{-k}$.
	By fusion, $D$ rewrites to the identity.
	Semantically,
	\[
	\llbracket D \rrbracket
	= e^{\,i\theta_{\mathrm{tot}}}\,e^{-\,i\theta_{\mathrm{tot}}} I
	= I,
	\]
	so the rewrite is sound.
\end{example}

\subsection*{Associativity and canonical forms}
\label{subsec:assoc_canonical}

A well-behaved diagrammatic calculus should admit unambiguous normal
forms for connected components.
In \WPLZX, associativity of fusion and the $\mathrm{lcm}$–refinement
guarantee that any connected component can be collapsed to a single
canonical spider, which will later underpin algorithmic procedures such
as WZCC.

\begin{definition}[Associative fusion]
	Let \(S_1,S_2,S_3\) be weighted spiders whose fusion rules are compatible.
	Fusion is \emph{associative} if
	\[
	(S_1 \otimes S_2) \otimes S_3
	\;\equiv\;
	S_1 \otimes (S_2 \otimes S_3)
	\]
	both diagrammatically and semantically:
	\[
	\llbracket (S_1 \otimes S_2) \otimes S_3 \rrbracket
	\;=\;
	\llbracket S_1 \otimes (S_2 \otimes S_3) \rrbracket.
	\]
\end{definition}

\begin{proposition}[Associativity of phase and winding addition]
	\label{prop:assoc_add}
	Let \(S_i = \Zspmn{m_i}{n_i}{a_i}{\alpha_i}{k_i}\)
	for \(i=1,2,3\), and let \(L = \mathrm{lcm}(a_1,a_2,a_3)\).
	Define
	\[
	\alpha_\Sigma = \sum_{i=1}^3 \frac{L}{a_i}\,\alpha_i,
	\qquad
	k_\Sigma = \sum_{i=1}^3 \frac{L}{a_i}\,k_i.
	\]
	Then both sides of the associative fusion reduce to the same spider
	\[
	\Zspmn{m_1+m_2+m_3-4}{n_1+n_2+n_3-4}{L}{\alpha_\Sigma}{k_\Sigma}.
	\]
\end{proposition}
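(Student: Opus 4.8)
The plan is to expand both parenthesizations of the triple fusion using the basic fusion rule (Definition~\ref{def:basic_fusion}, equivalently Proposition~\ref{prop:lcm-fusion-wplzx}) and to verify that the resulting weight, phase label, winding label, and arities coincide; since the two outputs will then be the \emph{same} labelled spider, the semantic equality is automatic. The only genuinely structural inputs are the associativity of $\mathrm{lcm}$, the divisibility tower $a_i\mid\mathrm{lcm}(a_i,a_j)\mid\mathrm{lcm}(a_1,a_2,a_3)$, and the telescoping identity $\tfrac{L}{L'}\cdot\tfrac{L'}{a}=\tfrac{L}{a}$ whenever $a\mid L'\mid L$.

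First I would treat the left-hand side $(S_1\otimes S_2)\otimes S_3$. Put $L_{12}=\mathrm{lcm}(a_1,a_2)$. The basic fusion rule gives
\[
S_1\otimes S_2
=\Zspmn{m_1+m_2-2}{n_1+n_2-2}{L_{12}}
{\tfrac{L_{12}}{a_1}\alpha_1+\tfrac{L_{12}}{a_2}\alpha_2}
{\tfrac{L_{12}}{a_1}k_1+\tfrac{L_{12}}{a_2}k_2}.
\]
Fusing this with $S_3$ yields a spider of weight $\mathrm{lcm}(L_{12},a_3)=\mathrm{lcm}(\mathrm{lcm}(a_1,a_2),a_3)=L$ by associativity of $\mathrm{lcm}$, and of arities $(m_1+m_2-2)+m_3-2=m_1+m_2+m_3-4$ inputs and $n_1+n_2+n_3-4$ outputs. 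Its phase label is $\tfrac{L}{L_{12}}\bigl(\tfrac{L_{12}}{a_1}\alpha_1+\tfrac{L_{12}}{a_2}\alpha_2\bigr)+\tfrac{L}{a_3}\alpha_3$; since $a_i\mid L_{12}\mid L$ for $i=1,2$, the telescoping identity collapses this to $\sum_{i=1}^{3}\tfrac{L}{a_i}\alpha_i=\alpha_\Sigma$, and the identical computation with $k_i$ in place of $\alpha_i$ gives $k_\Sigma$.

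Next I would run the mirror-image argument for $S_1\otimes(S_2\otimes S_3)$ with $L_{23}=\mathrm{lcm}(a_2,a_3)$, using $\mathrm{lcm}(a_1,L_{23})=L$ and $\tfrac{L}{L_{23}}\cdot\tfrac{L_{23}}{a_j}=\tfrac{L}{a_j}$ for $j=2,3$; this again produces $\Zspmn{m_1+m_2+m_3-4}{n_1+n_2+n_3-4}{L}{\alpha_\Sigma}{k_\Sigma}$ with the same arities. Since the two parenthesizations reduce to literally the same labelled spider, $\llbracket(S_1\otimes S_2)\otimes S_3\rrbracket=\llbracket S_1\otimes(S_2\otimes S_3)\rrbracket$ follows at once; alternatively one may invoke Lemma~\ref{lem:lcm_invariance} or Proposition~\ref{prop:sound_fusion_norm} at each intermediate fusion step. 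In the discrete ($G_a$-valued) variant one finally reduces the common phase modulo $2\pi$, which is harmless since modular reduction is a group homomorphism and commutes with every step above.

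I do not expect a real obstacle: the statement is an arithmetic bookkeeping argument rather than a conceptual one. The single point that must be handled carefully is the order-independence of the $\mathrm{lcm}$-lift---that passing through $G_{L_{12}}$ and then $G_L$ produces the same labels as passing through $G_{L_{23}}$ and then $G_L$---which is exactly the content of Proposition~\ref{prop:lcm_uniqueness}, and which rests on the divisibility tower so that the intermediate embedding maps genuinely compose. Once this is spelled out, the remainder is just the telescoping of the scale factors $L/a_i$.
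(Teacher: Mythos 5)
Your proposal is correct and follows essentially the same route as the paper: the paper's proof simply lifts all three labels to $G_L$ and notes that phases, windings, and leg counts add linearly (citing the earlier associativity/commutativity of fusion), while you spell out the same arithmetic by passing through the intermediate grids $L_{12}$ and $L_{23}$ and telescoping the factors $\tfrac{L}{L'}\cdot\tfrac{L'}{a_i}=\tfrac{L}{a_i}$. Your version is a more explicit rendering of the argument the paper sketches, with no substantive difference in method.
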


\begin{proof}
	Follows from Proposition~\ref{prop:assoc_comm} after lifting all labels
	to $G_L$ and observing that both phase and winding components add
	linearly; the arities match by counting connected legs.
\end{proof}

\begin{definition}[Canonical form of a connected component]
	\label{def:canon_form}
	Let \(D\) be a connected \WPLZX–diagram with spiders
	\(\{S_i\}_{i=1}^r\).
	Repeated fusion and normalization collapse the component to a single
	spider
	\[
	S_{\mathrm{can}}(D)
	\;=\;
	\Zspmn{m_{\mathrm{tot}}}{n_{\mathrm{tot}}}{L_D}{\alpha_D}{k_D},
	\]
	where
	\[
	L_D = \mathrm{lcm}(a_1,\dots,a_r),
	\quad
	\alpha_D = \sum_{i=1}^r \frac{L_D}{a_i}\alpha_i,
	\quad
	k_D = \sum_{i=1}^r \frac{L_D}{a_i}k_i,
	\]
	and the arities satisfy
	\(m_{\mathrm{tot}} = \sum_i m_i - 2(r-1)\),
	\(n_{\mathrm{tot}} = \sum_i n_i - 2(r-1)\).
\end{definition}

\begin{proposition}[Existence and uniqueness of canonical form]
	\label{prop:canon_unique}
	Every connected \WPLZX–diagram admits a canonical form
	$S_{\mathrm{can}}(D)$, unique up to global phase.
	In particular, if \(D_1,D_2\) are connected and yield the same
	$(L_D,\theta_{\mathrm{tot}}(D))$ data, then
	\[
	D_1 \equiv D_2
	\quad\text{and}\quad
	\llbracket D_1 \rrbracket = e^{i\phi}\,\llbracket D_2 \rrbracket
	\text{ for some }\phi \in \mathbb{R}.
	\]
\end{proposition}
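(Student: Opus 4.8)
The plan is to prove existence by induction on the number $r$ of spiders in the connected component, then deduce uniqueness from the fact---already established in Propositions~\ref{prop:assoc_comm} and~\ref{prop:assoc_add}---that the label arithmetic of iterated fusion is independent of bracketing and ordering, and finally obtain the semantic statement by invoking soundness.

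\textbf{Existence.} For $r=1$ the component is a single spider, and normalization (Definition~\ref{def:normalization}) rewrites it to $\Zspmn{m}{n}{a}{\theta_{\mathrm{tot}}\bmod 2\pi}{0}$. For $r>1$, choose any internal wire. If it joins two same-color spiders, apply the basic fusion rule (Definition~\ref{def:basic_fusion}), obtaining a connected diagram with $r-1$ spiders; the leg-count decreases by exactly $2$, matching the $-2(r-1)$ bookkeeping of Definition~\ref{def:canon_form}. If it joins two different colors, first apply a weighted Hadamard (Example~\ref{ex:weighted_hadamard}) to one of them, which changes the total angle only by a sign and leaves connectivity intact, making a same-color fusion available; any multi-wires or short cycles produced along the way are eliminated by the bialgebra/Hopf rules, which are sound (Theorem~\ref{thm:core_soundness}) and do not touch the $(a,\alpha,k)$ data of the surviving spiders. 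Iterating collapses the component to a single spider; one more normalization yields $S_{\mathrm{can}}(D)$. Tracking the labels through each step via Proposition~\ref{prop:lcm-fusion-wplzx} produces exactly $L_D=\mathrm{lcm}(a_1,\dots,a_r)$, $\alpha_D=\sum_i(L_D/a_i)\alpha_i$, $k_D=\sum_i(L_D/a_i)k_i$, and the stated total arities.

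\textbf{Uniqueness up to global phase.} The crux is that the collapse is confluent: no choice of fusion order, and no choice of where Hadamards are inserted, affects the terminal label. This follows because fusion is globally associative and commutative on labels (Propositions~\ref{prop:assoc_comm} and~\ref{prop:assoc_add}), so the resulting triple depends only on the multiset $\{(a_i,\alpha_i,k_i)\}$; the color-change steps act through $\theta_{\mathrm{tot}}\mapsto\pm\theta_{\mathrm{tot}}$ and are reconciled by normalization, which retains only $L_D$ together with $\theta_{\mathrm{tot}}(D)\bmod 2\pi$. Hence $S_{\mathrm{can}}(D)$ is well-defined up to replacing $(\alpha,k)$ by another representative with the same total angle---that is, up to global phase. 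If $D_1,D_2$ are connected and share the same $(L_D,\theta_{\mathrm{tot}}(D))$ data (and necessarily the same boundary arities, which are read off from the diagrams), their canonical forms coincide as labelled spiders, so $D_1\equiv S_{\mathrm{can}}(D_1)=S_{\mathrm{can}}(D_2)\equiv D_2$. Applying $\llbracket\!-\!\rrbracket$ and Proposition~\ref{prop:sound_fusion_norm} gives $\llbracket D_1\rrbracket=\llbracket S_{\mathrm{can}}(D_1)\rrbracket$ and $\llbracket D_2\rrbracket=\llbracket S_{\mathrm{can}}(D_2)\rrbracket$ each up to a global scalar, so $\llbracket D_1\rrbracket=e^{i\phi}\llbracket D_2\rrbracket$ for some $\phi\in\mathbb{R}$.

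\textbf{Main obstacle.} The delicate point is the confluence claim for components with nontrivial internal topology---cycles and interleaved colors---where one must certify that the surviving label is genuinely independent of the rewrite path. The clean argument is indirect: every intermediate rewrite (same-color fusion, Hadamard color-change, bialgebra, Hopf) is sound by Theorem~\ref{thm:core_soundness}, so the denotation is a rewrite-invariant of $D$; combined with Lemma~\ref{lem:faithful-labels} this forces the terminal label data to be an invariant of $D$ as well, which is exactly what confluence demands. The remaining verifications---the leg-counting identities and the arithmetic of lifting to $G_{L_D}$---are routine consequences of Propositions~\ref{prop:assoc_comm} and~\ref{prop:assoc_add}.
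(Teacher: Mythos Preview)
Your core argument matches the paper's sketch: iteratively fuse until one spider remains (the paper phrases this as fusing along a spanning tree, you as induction on the spider count), invoke Proposition~\ref{prop:assoc_add} for order-independence of the accumulated label, and then normalize to fix the representative up to a global $U(1)$ factor. On that backbone you are fine.

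Two points deserve tightening. First, your handling of mixed colours is not quite right: applying a weighted Hadamard sends $\theta_{\mathrm{tot}}\mapsto -\theta_{\mathrm{tot}}$, and this sign flip is \emph{not} ``reconciled by normalization,'' since $\theta$ and $-\theta$ are generally distinct modulo $2\pi$. The formulas in Definition~\ref{def:canon_form} tacitly assume a single-colour component (as does the paper's sketch), so either restrict to that case or track the signs explicitly rather than claiming normalization absorbs them. Second, the indirect confluence argument in your ``Main obstacle'' paragraph conflates two different functors: soundness (Theorem~\ref{thm:core_soundness}) concerns the Hilbert-space interpretation $\llbracket\!-\!\rrbracket$, whereas Lemma~\ref{lem:faithful-labels} concerns the functor $\mathcal{F}$ into $\mathbf{OrbSp}$. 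Since $\llbracket\!-\!\rrbracket$ sees only $\theta_{\mathrm{tot}}$ and not the full $(a,\alpha,k)$ triple, you cannot deduce label invariance from semantic invariance in $\FHilb$. Fortunately this indirect route is unnecessary---Proposition~\ref{prop:assoc_add} already gives the order-independence you need, which is exactly what the paper relies on.
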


\begin{proof}[Proof sketch]
	Choose any spanning tree on the spider vertices and fuse along its
	edges; the number of vertices strictly decreases at each step, so
	termination holds.
	By Proposition~\ref{prop:assoc_add} the accumulated label
	\((L_D,\alpha_D,k_D)\) is independent of the fusion order.
	Normalization identifies $2\pi$-shifts of the total angle, yielding
	uniqueness up to a global $U(1)$ factor
	\cite{CoeckeKissinger2017PicturingQuantum,Backens2014ZX}.
\end{proof}

\begin{remark}[Canonical representative and rewriting]
	The canonical form defines a complexity measure (vertex count) on
	connected components that strictly decreases under fusion and is
	unchanged by normalization.
	This implies termination of a broad class of rewriting strategies,
	and will be used in later sections as the backbone of the WZCC
	compression procedure (Section~\ref{sec:wzcc}).
\end{remark}

\begin{example}[Canonical form on a mixed grid]
	Consider three spiders
	\[
	S_1 = \Zspmn{1}{1}{2}{\tfrac{\pi}{3}}{1}, \qquad
	S_2 = \Zspmn{1}{1}{3}{\tfrac{\pi}{6}}{1}, \qquad
	S_3 = \Zspmn{1}{1}{2}{-\tfrac{\pi}{2}}{0}.
	\]
	Then \(L_D = \mathrm{lcm}(2,3)=6\), and
	\[
	\alpha_D
	= 3\!\cdot\!\tfrac{\pi}{3} + 2\!\cdot\!\tfrac{\pi}{6} - 3\!\cdot\!\tfrac{\pi}{2}
	= -\tfrac{3\pi}{2},
	\qquad
	k_D
	= 3\!\cdot\!1 + 2\!\cdot\!1 + 3\!\cdot\!0 = 5.
	\]
	The total angle is
	$\theta_{\mathrm{tot}}(D)
	= -\tfrac{3\pi}{2} + \tfrac{2\pi}{6}\cdot 5
	\equiv \tfrac{\pi}{2}\pmod{2\pi}$,
	so the canonical representative can be taken as
	\[
	S_{\mathrm{can}}
	= \Zspmn{1}{1}{6}{\tfrac{\pi}{2}}{0}.
	\]
\end{example}

\subsection*{Comparison with standard and SnapZX calculi}
\label{subsec:comparison_snapzx}

The \WPLZX–calculus subsumes both the ordinary ZX–calculus and discrete
grid fragments (``SnapZX-like'' restrictions).
We record precise inclusions and semantic preservation.

\begin{definition}[Standard ZX fragment]
	The \emph{standard ZX-calculus} is recovered from \WPLZX\ by
	restricting every spider to unit weight $a=1$ and zero winding $k=0$:
	\[
	\Zspmn{m}{n}{1}{\alpha}{0}
	,\qquad
	\Xspmn{m}{n}{1}{\beta}{0}.
	\]
	In this fragment the phase domain is continuous
	$\theta\in[0,2\pi)$, and the rewrite rules coincide with the usual
	ZX–calculus
	\cite{CoeckeKissinger2017PicturingQuantum,Backens2014ZX,
		vandeWetering2020WorkingZX}.
\end{definition}

\begin{proposition}[Reduction to ZX semantics]
	\label{prop:reduce_to_zx}
	Under the restriction $a=1,\,k=0$,
	the functor $\llbracket\cdot\rrbracket_{\WPLZX}$ coincides with the
	standard ZX interpretation $\llbracket\cdot\rrbracket_{\mathrm{ZX}}$.
\end{proposition}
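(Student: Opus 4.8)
The plan is to establish the equality of the two denotational functors by checking it on the generating morphisms and then propagating it by functoriality. First I would observe that under the restriction $a=1$, $k=0$ the total angle becomes trivial: by \eqref{eq:total-angle}, $\theta_{\mathrm{tot}}(1,\alpha,0)=\alpha+\tfrac{2\pi}{1}\cdot 0=\alpha$ (indeed $\theta_{\mathrm{tot}}(1,\alpha,k)\equiv\alpha\pmod{2\pi}$ for every integer $k$, since $e^{2\pi i k x}=1$ for $x\in\{0,1\}$, so the winding slot carries no information when $a=1$ — consistent with its taking values in $\tfrac{1}{1}\mathbb{Z}/\mathbb{Z}=\{0\}$). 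Substituting this into Definition~\ref{def:semantics_weighted_spiders} gives
\[
\llbracket \Zspmn{m}{n}{1}{\alpha}{0}\rrbracket
=\sum_{x\in\{0,1\}} e^{i\alpha x}\,\ket{x}^{\otimes n}\bra{x}^{\otimes m}
=\ket{0}^{\otimes n}\bra{0}^{\otimes m}+e^{i\alpha}\ket{1}^{\otimes n}\bra{1}^{\otimes m},
\]
which is exactly the linear map $\Zsp{m}{n}{\alpha}$ of the standard $Z$-spider. For the $X$-spider, the same substitution together with the clause $\llbracket\Xspmn{m}{n}{1}{\beta}{0}\rrbracket=H^{\otimes n}\llbracket\Zspmn{m}{n}{1}{\beta}{0}\rrbracket H^{\otimes m}$, and the fact that conjugation by $H$ carries $\{\ket{0},\ket{1}\}$ to $\{\ket{+},\ket{-}\}$, reproduces $\Xsp{m}{n}{\beta}$. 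On objects, $\llbracket\cdot\rrbracket_{\WPLZX}$ restricted to the fragment sends the object of $n$ unit-weight wires to $(\mathbb{C}^2)^{\otimes n}$, which is the object assignment of the standard ZX interpretation. Hence the two functors agree on all objects and on all generating morphisms of the $a=1,k=0$ fragment.

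It then remains to lift generator-level agreement to agreement on arbitrary diagrams. Both $\llbracket\cdot\rrbracket_{\WPLZX}$ (strict monoidal by Proposition~\ref{prop:monoidality_semantics}) and $\llbracket\cdot\rrbracket_{\mathrm{ZX}}$ (strict monoidal by construction) are determined on the diagrammatic subcategory generated by $Z$- and $X$-spiders by their values on objects and generators: every diagram in the fragment is built from spiders using $\circ$ and $\otimes$, and the functor value is the corresponding composite and tensor of the generator values. Since the two functors coincide on objects and on generators, they coincide on the whole fragment. Finally, by the definition of the standard ZX fragment the \WPLZX\ rewrite rules specialise, under $a=1,k=0$, to the ordinary ZX rewrites, and Theorem~\ref{thm:core_soundness} (restricted to $a=1$) makes these rules sound for $\llbracket\cdot\rrbracket_{\WPLZX}$; so the two calculi agree not merely as semantic assignments but as rewriting systems.

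I expect no genuine obstacle: the mathematical content is the one-line collapse $\theta_{\mathrm{tot}}(1,\alpha,0)=\alpha$ together with functoriality. The points that need care are bookkeeping ones — being explicit that ``coincides'' is asserted on the subcategory generated by unit-weight, zero-winding spiders rather than on all of \WPLZX; that the comparison is already understood modulo the usual global-phase quotient built into both semantics (cf.\ Remark~\ref{rem:reduce_to_zx}), even though here the two maps in fact agree on the nose; and that ``$k=0$'' is not a loss of information but the only admissible value when $a=1$. Once these conventions are fixed, the verification is routine.
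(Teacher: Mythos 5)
Your proposal is correct and follows essentially the same route as the paper's own (much terser) proof: the key observation in both is that $\theta_{\mathrm{tot}}(1,\alpha,0)=\alpha$, so the weighted semantics of Definition~\ref{def:semantics_weighted_spiders} collapses to the standard spider interpretation on generators, and functoriality together with the specialisation of the rewrite rules to the $a=1$, $k=0$ fragment does the rest. Your version simply spells out the generator-level check, the object assignment, and the global-phase convention that the paper leaves implicit.
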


\begin{proof}
	When $a=1$ we have $\theta_{\mathrm{tot}}=\alpha$ and no refinement is
	required.
	Fusion, normalization, bialgebra/Hopf, and color-change rules reduce
	to the ordinary ZX derivations
	\cite{CoeckeKissinger2017PicturingQuantum,Backens2014ZX}.
\end{proof}

\begin{definition}[Discrete (SnapZX-like) fragment]
	Fix $N\in\mathbb{N}$.
	A \emph{discrete phase fragment} restricts all phases to the grid
	\[
	\Theta_N
	= \Bigl\{0,\tfrac{2\pi}{N},\dots,
	\tfrac{2\pi(N-1)}{N}\Bigr\}.
	\]
	This captures stabilizer ($N=4$) and Clifford+T ($N=8$) fragments,
	cf.\ \cite{Backens2014ZX,JeandelPerdrixVilmart2017CliffordT}.
\end{definition}

\begin{remark}[Embedding into \WPLZX]
	\label{rem:snap_embedding}
	Each grid phase $\tfrac{2\pi k}{N}\in\Theta_N$ can be represented as
	the weighted label $(a,\alpha,k)=(N,0,k)$ in \WPLZX, since
	$\theta_{\mathrm{tot}} = \tfrac{2\pi}{N}k$.
	Thus the discrete fragment embeds fully faithfully as the
	subcategory with fixed weight $a=N$, while \WPLZX\ additionally allows
	\emph{heterogeneous} grids that automatically refine under fusion via
	LCM \cite{vandeWetering2020WorkingZX,DuncanPerdrix2020GraphSimplification}.
\end{remark}

\begin{proposition}[Semantic preservation for discrete fragments]
	Let $D$ be a diagram over $\Theta_N$ in a SnapZX-like calculus.
	Replacing each phase $\tfrac{2\pi k}{N}$ by the weighted label
	$(N,0,k)$ yields a \WPLZX–diagram $D'$ with
	\[
	\llbracket D \rrbracket
	= e^{i\phi}\,\llbracket D' \rrbracket_{\WPLZX}
	\ \text{ for some }\phi\in\mathbb{R}.
	\]
\end{proposition}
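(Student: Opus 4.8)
The plan is to compare the two interpretation functors generator by generator and then lift the comparison to arbitrary diagrams by monoidality. A SnapZX-like diagram over $\Theta_N$ is built from $Z$- and $X$-spiders whose phases lie in $\Theta_N$, closed under composition and tensor product and quotiented by the standard rewrite relations; the translation $D\mapsto D'$ acts generator-wise, sending a $Z$-spider of phase $\tfrac{2\pi k}{N}$ to $\Zspmn{m}{n}{N}{0}{k}$ and an $X$-spider of phase $\tfrac{2\pi k}{N}$ to $\Xspmn{m}{n}{N}{0}{k}$, leaving the wiring untouched. First I would check that this assignment descends to a well-defined map on the quotient (so that $D'$ is determined up to \WPLZX-equivalence): every SnapZX relation—spider fusion, identity removal, Hadamard color change, bialgebra, Hopf—is the image of the corresponding \WPLZX\ rule restricted to the fixed-weight subcategory $a=N$, and since \emph{every} spider of $D'$ has weight $N$, all fusions satisfy $\lcm(N,N)=N$ so that no grid refinement ever occurs and the fixed-weight fragment is closed. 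That the translated relations actually hold in \WPLZX\ is then immediate from Theorem~\ref{thm:core_soundness} together with Remark~\ref{rem:snap_embedding}.

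The second step is to verify agreement on generators inside $\FHilb$. For a weighted $Z$-spider with label $(N,0,k)$ the total angle is $\theta_{\mathrm{tot}}(N,0,k)=0+\tfrac{2\pi}{N}k=\tfrac{2\pi k}{N}$, so Definition~\ref{def:semantics_weighted_spiders} gives
\[
\bigl\llbracket \Zspmn{m}{n}{N}{0}{k}\bigr\rrbracket_{\WPLZX}
=\sum_{x\in\{0,1\}} e^{\,i\frac{2\pi k}{N}x}\,\ket{x}^{\otimes n}\!\bra{x}^{\otimes m},
\]
which is precisely the standard ZX interpretation of the $Z$-spider with phase $\tfrac{2\pi k}{N}$. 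For the $X$-spiders I would simply note that both calculi define the $X$-interpretation by conjugating the corresponding $Z$-spider with Hadamards ($H^{\otimes n}(\cdot)H^{\otimes m}$), so equality on $Z$-generators propagates automatically to $X$-generators.

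Finally I would assemble the statement by induction on the structure of $D$: since $\llbracket\cdot\rrbracket_{\mathrm{ZX}}$ and $\llbracket\cdot\rrbracket_{\WPLZX}$ are both strict monoidal functors (Definition~\ref{def:semantic_domain}, Proposition~\ref{prop:monoidality_semantics}) that coincide on every generator, they coincide on all composites and tensor products, so $\llbracket D'\rrbracket_{\WPLZX}=\llbracket D\rrbracket$ as linear maps. The factor $e^{i\phi}$ in the statement records only the usual global-phase/scalar convention under which diagram equality is taken in both presentations (cf.\ Proposition~\ref{prop:neutral_scalar} on scalar spiders and the global-$U(1)$ quotient built into the ZX semantics); with the normalisations above one in fact obtains $\phi=0$.

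I expect the only genuinely delicate point to be the first step: pinning down precisely what ``SnapZX-like calculus'' is meant to denote and confirming that the generator-wise translation respects its relations, so that $D'$ is unambiguous. Once the source calculus is fixed to be generated by $\Theta_N$-phased spiders under the standard relations, this reduces—via $\lcm(N,N)=N$ and Theorem~\ref{thm:core_soundness}—to the already-established soundness of the \WPLZX\ rules, and everything else is routine bookkeeping of arities and scalars.
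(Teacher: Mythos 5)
Your proposal is correct and follows essentially the same route as the paper's proof: the paper likewise argues that generators denote the same diagonal operators because the total angle $\theta_{\mathrm{tot}}(N,0,k)=\tfrac{2\pi k}{N}$ is preserved (the extra metadata $a=N$ being semantically inert), and that rewrite preservation follows from the soundness of fusion, normalization, and the ZX core rules. Your added detail on $\lcm(N,N)=N$ closure and the explicit monoidal assembly is just a fuller write-up of the same argument.
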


\begin{proof}
	Generators denote the same diagonal unitaries because the total angle
	is preserved; the extra metadata $a=N$ does not change the operator.
	Rewrite preservation follows from soundness of fusion, normalization,
	and the ZX core rules
	\cite{Backens2014ZX,JeandelPerdrixVilmart2017CliffordT,
		DuncanPerdrix2020GraphSimplification}.
\end{proof}

\begin{example}[Three-level comparison]
	\begin{center}
		\renewcommand{\arraystretch}{1.15}
		\begin{tabular}{@{}lll@{}}
			\toprule
			\textbf{System} & \textbf{Allowed phase domain} & \textbf{Geometry / group} \\
			\midrule
			ZX-calculus & $\theta \in [0,2\pi)$ & unit circle $S^1$ \\
			Discrete fragment & $\theta \in \tfrac{2\pi}{N}\mathbb{Z}$ & roots of unity $\mu_N \subset S^1$ \\
			\WPLZX & $(a,\alpha,k)$,\; $\theta_{\mathrm{tot}}$ & orbifold circle, LCM-refined grids \\
			\bottomrule
		\end{tabular}
	\end{center}
	This hierarchy shows that \WPLZX\ strictly generalizes both continuous
	and single-grid discrete calculi, and supports diagrams with multiple
	local resolutions that fuse to common refinement grids.
\end{example}

\begin{proposition}[Conservative extension]
	\label{prop:conservative}
	\WPLZX\ is a conservative extension of standard ZX:
	every equation provable in ZX remains provable in \WPLZX, and any
	equation in the ZX-fragment of \WPLZX\ is derivable in ZX
	\cite{CoeckeKissinger2017PicturingQuantum,Backens2014ZX}.
\end{proposition}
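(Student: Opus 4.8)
The plan is to establish the two inclusions separately, the second being the substantive one. For the first, recall from the definition of the standard ZX fragment together with Proposition~\ref{prop:reduce_to_zx} that the generators and rewrite rules of ordinary ZX occur verbatim inside \WPLZX\ as the sub-presentation in which every label has the form $(1,\alpha,0)$: there $\theta_{\mathrm{tot}}=\alpha$, and fusion, normalization, Hadamard color change, bialgebra, and Hopf specialize exactly to the corresponding ZX rules. Hence any derivation in ZX is \emph{literally} a derivation in \WPLZX, and every ZX-provable equation is \WPLZX-provable.

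For conservativity I would construct a \emph{collapse functor} $\Pi\colon\WPLZX\to\mathrm{ZX}$ at the level of diagrams. On generators, $\Pi$ sends a weighted $Z$-spider with data $(a,\alpha,k)$ and arities $m\to n$ to the ordinary ZX $Z$-spider of the same arities carrying phase $\theta_{\mathrm{tot}}(a,\alpha,k)\bmod 2\pi$, and likewise for weighted $X$-spiders; it fixes Hadamard nodes, identity wires, and the structural cups and caps. Extending by composition and tensor product gives a strict monoidal functor on the category of \WPLZX-diagrams (before quotienting by the rewrite rules). Two facts then have to be checked: (i) $\Pi$ factors through the \WPLZX\ relations, i.e.\ for each defining rewrite $D_1\equiv D_2$ one has $\Pi(D_1)\equiv_{\mathrm{ZX}}\Pi(D_2)$; and (ii) $\Pi$ restricts to the identity on the ZX-fragment, which is immediate since $\theta_{\mathrm{tot}}(1,\alpha,0)=\alpha$.

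Step (i) is a finite rule-by-rule verification. For \emph{spider fusion} (Definition~\ref{def:basic_fusion}), the rule replaces two spiders on grids $a_1,a_2$ by one spider on $L=\mathrm{lcm}(a_1,a_2)$ whose total angle is $\theta_1+\theta_2$; applying $\Pi$ produces two ZX phase spiders whose ZX-fusion yields phase $\theta_1+\theta_2\bmod 2\pi$, which is exactly $\Pi$ of the right-hand side, the $L$-refinement being invisible to $\Pi$. \emph{Normalization} (Definition~\ref{def:normalization}) and every $\mathrm{lcm}$-lift or refinement-grid identification used in Propositions~\ref{prop:assoc_comm}, \ref{prop:lcm_uniqueness}, and~\ref{prop:assoc_add} preserve $\theta_{\mathrm{tot}}$ by construction, so their $\Pi$-images are reflexivities in ZX. \emph{Identity removal}: a weight-$a$, zero-phase, zero-winding spider of type $1\to1$ has $\theta_{\mathrm{tot}}=0$, so $\Pi$ sends it to the identity ZX spider, i.e.\ the bare wire. \emph{Hadamard color change} (Example~\ref{ex:weighted_hadamard}) maps $\theta_{\mathrm{tot}}\mapsto-\theta_{\mathrm{tot}}$, so its image is precisely the ZX color-change identity in the paper's phase and arity conventions. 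The \emph{bialgebra} and \emph{Hopf} rules involve only phase-free spiders, hence map to the corresponding zero-phase ZX rules. Once (i) is verified, $\Pi$ descends to a functor $\bar\Pi\colon\WPLZX\to\mathrm{ZX}$ on the quotient categories; combined with (ii), if $D_1$ and $D_2$ lie in the ZX-fragment and $D_1\equiv D_2$ is provable in \WPLZX, then $D_1=\bar\Pi(D_1)\equiv_{\mathrm{ZX}}\bar\Pi(D_2)=D_2$, which is conservativity. As an independent consistency check one observes $\llbracket\Pi(D)\rrbracket=\llbracket D\rrbracket$ by Definition~\ref{def:semantics_weighted_spiders}, so $\bar\Pi$ is compatible with the denotational semantics underlying Theorem~\ref{thm:core_soundness}.

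The main obstacle is not any individual computation but \emph{completeness of the rule inventory}: the defining equations of \WPLZX\ are introduced piecemeal across Sections~\ref{subsec:weighted_spiders}--\ref{subsec:assoc_canonical} (basic fusion, associativity and commutativity, $\mathrm{lcm}$-uniqueness, normalization-as-gauge-fixing, the fusion-consistency condition, weighted color change, bialgebra/Hopf compatibility), so one must be certain that every move that can appear in a \WPLZX\ derivation is covered by the case analysis for $\Pi$. The delicate steps are the ``canonical identifications of refinement grids'' invoked silently when reparenthesizing fusions: one has to confirm these are genuine relations of the presentation that preserve $\theta_{\mathrm{tot}}$, so that $\Pi$ really factors through them, rather than meta-level conveniences. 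A shorter alternative would combine soundness of \WPLZX\ (Theorem~\ref{thm:core_soundness}) with completeness of ZX to obtain conservativity semantically; but since the excerpt cites completeness only for the stabilizer and Clifford+T fragments rather than for arbitrary continuous phases, the syntactic retraction above is the safer route.
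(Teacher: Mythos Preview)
Your proposal is correct and takes a genuinely different route from the paper. The paper's proof sketch is semantic: it asserts that the inclusion of the ZX fragment into \WPLZX\ is full and faithful and ``commutes with the semantics,'' and then argues that any equation between ZX-fragment diagrams valid in all \WPLZX\ models is already valid in all ZX models. Your approach is instead purely syntactic: you build an explicit retraction functor $\Pi\colon\WPLZX\to\mathrm{ZX}$ collapsing $(a,\alpha,k)\mapsto\theta_{\mathrm{tot}}\bmod 2\pi$, verify rule by rule that every \WPLZX\ rewrite is sent to a ZX-derivable equality, and transport derivations along $\Pi$.

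What each buys: the paper's argument is shorter but, as you correctly flag in your final paragraph, tacitly needs completeness of ZX with respect to its Hilbert-space semantics to pass from ``semantically equal'' to ``ZX-derivable''---and the paper only cites completeness for the stabilizer and Clifford+T fragments, not for arbitrary continuous phases. Your syntactic retraction sidesteps this entirely: once $\Pi$ is shown to respect every generating rewrite, conservativity follows by induction on derivation length with no appeal to semantic completeness. Your identification of the main risk (ensuring the rule inventory is exhaustive, and that the silent refinement-grid identifications are genuine relations preserving $\theta_{\mathrm{tot}}$) is exactly the right caveat, and is a point the paper's sketch does not address.
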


\begin{proof}[Proof sketch]
	The inclusion functor from the ZX fragment ($a=1,k=0$) into \WPLZX\ is
	full and faithful and commutes with the semantics.
	Thus any ZX derivation lifts to a \WPLZX derivation, and conversely,
	any equation between ZX-fragment diagrams that holds in all
	\WPLZX models already holds in all ZX models.
\end{proof}

\begin{remark}[Beyond fixed grids]
	Unlike a fixed-$N$ discrete calculus, \WPLZX\ natively supports
	heterogeneous hardware constraints (e.g.\ T-gates with $a=8$ and
	more finely resolved $Z$-rotations with $a=16$) and automatically
	fuses them onto the common $\mathrm{lcm}$ grid.
	This enables graph-theoretic simplification while respecting
	realistic phase quantization and drift models
	\cite{DuncanPerdrix2020GraphSimplification,vandeWetering2020WorkingZX}.
\end{remark}	

\section{Quantization-Aware Circuit Optimization: WZCC}
\label{sec:wzcc}

\subsection*{Motivation and overview}
\label{subsec:normalize-motivation}

The purpose of the normalization stage is to turn an arbitrary
$\WPLZX$-diagram into a canonical representative which makes explicit
both the geometric and combinatorial structure imposed by the weighted
labels $(a,\alpha,k)$.
Whereas standard ZX-simplification ignores the hardware-induced phase
quantization and the monodromy $k$, the $\WPLZX$ normalization step
preserves and consolidates this information through an explicit
LCM-refinement of local grids and a canonical-parameter spider fusion.
As a result, normalization provides both a \emph{canonical form} and a
\emph{hardware-aware reduction} which is compatible with the weighted
projective geometry introduced earlier in the paper.

\vspace{0.4em}
\noindent\textbf{Relevance to circuit compression and noise models.}
In hardware settings where single-qubit rotations are restricted to
$(a,\alpha,k)$-quantized grids, two adjacent spiders often lie on
incompatible grids, e.g.\ $S^1_{(256)}$ versus $S^1_{(192)}$.
Such situations cannot be faithfully captured by the ordinary
ZX-calculus, but the $\WPLZX$ refinement rule
$a_1,a_2 \mapsto L=\mathrm{lcm}(a_1,a_2)$ preserves the geometric
meaning of both spiders and makes their fused representative
compatible with the underlying physical rotation group.
Hence the canonical output of normalization is not merely a syntactic
reduction but also a \emph{channel-resolved} representative in the
sense of the weighted projective line metric.

\vspace{0.4em}
\noindent\textbf{Metrics: PQVR, CSC, FP.}
The normalization stage serves as the backbone for three evaluation
metrics used later in the paper:
\begin{itemize}[leftmargin=1.7em]
	\item \textbf{PQVR (Phase–Quantization Variance Ratio):}
	computed from the distribution of post-normalization angles
	$\theta_{\mathrm{tot}} = \alpha + \tfrac{2\pi}{L}k$ across the
	canonical spiders in each connected component.
	A smaller PQVR indicates that the circuit more closely matches
	the hardware grid and exhibits less hardware-induced variance.
	\item \textbf{CSC (Circuit-Size Compression):}
	normalization tends to merge spiders within the same connected
	component, reducing their count.
	The difference between pre- and post-normalization spider counts
	provides the main contribution to CSC.
	\item \textbf{FP (Fidelity Preservation):}
	because normalization preserves the semantic orbit of the
	diagram (up to a global phase), the resulting circuit maintains
	the same support-projected Petz monotone geometry.
	In practice this ensures that fidelity loss is negligible
	relative to the original circuit.
\end{itemize}
These metrics will be invoked in the later simulation and
hardware-validation sections to quantify how well normalization
captures both gate-level and channel-level compressibility.

\subsection*{Algorithmic procedure}
\label{subsec:normalize-algorithm}

We now describe the normalization algorithm used throughout the
$\WPLZX$ pipeline.
The procedure is a weighted refinement of the standard spider-fusion
rules of the ZX-calculus, extended so that it respects the
$(a,\alpha,k)$ labels and produces a unique canonical representative
for each connected component.

\medskip
\noindent\textbf{Input.}
A $\WPLZX$-diagram $D$ whose spiders are labeled by triples
$(a_i,\alpha_i,k_i)$.

\smallskip
\noindent\textbf{Output.}
A normalized diagram $D_{\mathrm{norm}}$ in which each connected
component has been collapsed to a single canonical spider with label
$(L_C,\theta_C,0)$.

\medskip
\noindent\textbf{Step 1: Decomposition into connected components.}
\begin{enumerate}[label=(1.\arabic*),leftmargin=2em]
	\item Partition $D$ into connected components
	$C_1,\ldots,C_m$ using the underlying graph structure
	of wires and spiders.
\end{enumerate}

\noindent\textbf{Step 2: Grid refinement and phase lifting.}
\begin{enumerate}[label=(2.\arabic*),leftmargin=2em]
	\item For each connected component $C$, list its spiders as
	$\{v_1,\ldots,v_r\}$ with labels $(a_i,\alpha_i,k_i)$.
	\item Compute the \emph{global LCM weight}
	\[
	L_C := \mathrm{lcm}(a_1,\ldots,a_r).
	\]
	\item For each spider $v_i$, compute its \emph{lifted phase}
	\[
	\phi_i := \frac{L_C}{a_i}
	\Bigl(\alpha_i + \tfrac{2\pi}{a_i}k_i\Bigr).
	\]
	This expresses the total angle of $v_i$ on the common
	grid of order $L_C$.
\end{enumerate}

\noindent\textbf{Step 3: Canonical angle aggregation.}
\begin{enumerate}[label=(3.\arabic*),leftmargin=2em]
	\item For each component $C$, define the \emph{total canonical angle}
	\[
	\theta_C := \frac{1}{L_C}\sum_{i=1}^r \phi_i.
	\]
	This angle is well defined modulo $2\pi$ and captures the
	aggregate effect of all spiders in $C$ on the common grid.
\end{enumerate}

\noindent\textbf{Step 4: Component collapse.}
\begin{enumerate}[label=(4.\arabic*),leftmargin=2em]
	\item Replace the entire component $C$ by a single spider with label
	$(L_C,\theta_C,0)$ and the corresponding input/output arities.
	The winding index is set to zero because its contribution is now
	absorbed into the total angle on the refined grid.
\end{enumerate}

\noindent\textbf{Step 5: Output.}
\begin{enumerate}[label=(5.\arabic*),leftmargin=2em]
	\item After repeating Steps~1–4 for every connected component,
	collect the collapsed components to obtain the normalized diagram
	$D_{\mathrm{norm}}$.
\end{enumerate}

\medskip
This procedure preserves the semantic meaning of the diagram, 
consolidates incompatible grids into a single canonical grid
$S^1_{(L_C)}$ for each component, and produces a unique spider per
connected component.
The post-normalization parameters $(L_C,\theta_C)$ then feed directly
into the PQVR/CSC/FP metrics introduced above.
Note that the algorithm runs in time linear in the number of spiders,
up to the cost of integer LCM computations, so its overall complexity
matches that of standard ZX simplification while capturing strictly
more physical information.

\subsection*{Correctness and normalization invariants}
\label{subsec:normalize-correctness}

The correctness of the normalization procedure follows from three
independent invariants which remain preserved throughout fusion,
lifting, and canonicalization steps.
Let $D$ be an arbitrary $\WPLZX$-diagram, and let
$D_{\mathrm{norm}}$ be the output of the algorithm described above.

\medskip
\noindent\textbf{Invariant (I1): LCM-invariance.}
For each connected component $C$ of $D$ with spiders
$(a_i,\alpha_i,k_i)$, the quantity
\[
L_C := \mathrm{lcm}(a_1,\ldots,a_r)
\]
is unchanged under any sequence of pairwise spider fusions.
Each fusion step replaces $(a_1,a_2)$ by $\mathrm{lcm}(a_1,a_2)$,
which is the minimal grid compatible with both, and this operation is
associative and commutative at the level of weights.

\medskip
\noindent\textbf{Invariant (I2): Total phase invariance.}
Write
\[
\theta_{\mathrm{tot}}(a_i,\alpha_i,k_i)
:= \alpha_i + \tfrac{2\pi}{a_i}k_i.
\]
Then the lifted total phase
\[
\Phi_C :=
\sum_{i=1}^r
\frac{L_C}{a_i}\,
\theta_{\mathrm{tot}}(a_i,\alpha_i,k_i)
\]
is invariant under all fusions inside $C$.
This expresses that the phases add exactly as in the ordinary
ZX-calculus, except that all contributions are measured within the
common LCM grid of order $L_C$.

\medskip
\noindent\textbf{Invariant (I3): Semantic invariance.}
The denotational semantics of $D$ as a linear map on a Hilbert space
remains unchanged under the refine--lift--sum operations of the
normalization procedure, up to a global phase.
Equivalently, the support-projected Petz monotone geometry and the
fidelity of the resulting channel agree with those of the original
diagram to within the usual scalar-quotient convention.

\medskip
Using these invariants, correctness follows.

\begin{proposition}[Correctness of normalization]
	For every $\WPLZX$-diagram $D$, the normalization procedure terminates
	and produces a unique canonical diagram $D_{\mathrm{norm}}$ whose
	semantic interpretation equals that of $D$, up to a global phase.
\end{proposition}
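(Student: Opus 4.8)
The plan is to establish the three assertions---termination, uniqueness of the canonical form, and semantic preservation---by reducing each to a fact already available in the excerpt. Termination is immediate: Step~1 produces finitely many connected components $C_1,\dots,C_m$, and for each component with $r$ spiders the fusion phase of the algorithm performs at most $r-1$ pairwise fusions (each strictly decreasing the vertex count, exactly as in Proposition~\ref{prop:canon_unique}), after which Steps~2--4 are a fixed finite computation (one LCM, $r$ lifted phases, one sum, one relabelling). The whole loop over $m$ components is therefore finite, which also recovers the linear-time bound asserted in the algorithm description.

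For uniqueness, I would invoke Invariants (I1) and (I2) together with Proposition~\ref{prop:assoc_add} (associativity and commutativity of phase and winding addition). The point is that the output label of component $C$ is $(L_C,\theta_C,0)$ where $L_C=\operatorname{lcm}(a_1,\dots,a_r)$ and $\theta_C = \tfrac{1}{L_C}\sum_i \phi_i$ with $\phi_i = \tfrac{L_C}{a_i}\theta_{\mathrm{tot}}(a_i,\alpha_i,k_i)$; by (I1) the weight $L_C$ is independent of the order in which fusions are carried out, and by (I2) the lifted total phase $\Phi_C=\sum_i \tfrac{L_C}{a_i}\theta_{\mathrm{tot}}(a_i,\alpha_i,k_i)$ is likewise fusion-order independent, so $\theta_C \equiv \Phi_C/L_C \pmod{2\pi}$ is well defined. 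Combined with Proposition~\ref{prop:canon_unique} (existence and uniqueness of the canonical form up to global phase), this shows $D_{\mathrm{norm}}$ is independent of all algorithmic choices, i.e.\ canonical. I would also note that the arities are pinned down by the connected-component structure (Definition~\ref{def:canon_form}), so there is genuinely one output per component.

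Semantic preservation follows from Invariant (I3), which in turn is a consequence of the soundness results already proved: Lemma~\ref{lem:lcm_invariance} (LCM-lift invariance of the semantics) shows that rewriting a pair of spiders to their LCM-refined fusion does not change $\llbracket\cdot\rrbracket$, Proposition~\ref{prop:sound_fusion_norm} upgrades this to the combined fusion-plus-normalization step (replacing $(a,\alpha,k)$ by any triple with the same $\theta_{\mathrm{tot}}$ leaves the interpretation fixed up to a global scalar), and Theorem~\ref{thm:core_soundness} guarantees that the ambient ZX rewiring used to route wires during component collapse is sound. Applying these facts inductively along the at most $r-1$ fusions in each component, and then once more for the final normalization to winding $0$, yields $\llbracket D\rrbracket = e^{i\phi}\,\llbracket D_{\mathrm{norm}}\rrbracket$ for some $\phi\in\mathbb{R}$, which is exactly the stated ``up to a global phase'' conclusion; the support-projected Petz monotone geometry is unaffected since it is insensitive to global scalars.

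The main obstacle, such as it is, is bookkeeping rather than mathematics: one must check that the \emph{aggregate} formula $\theta_C=\tfrac{1}{L_C}\sum_i\phi_i$ used by the algorithm in a single batch step genuinely coincides with the label obtained by iterated pairwise fusion via Definition~\ref{def:basic_fusion}. Here the subtlety is the factor $1/L_C$ and the fact that pairwise fusion at intermediate stages produces labels on grids $\operatorname{lcm}(a_1,\dots,a_j)$ that are only later refined to $L_C$; one needs Proposition~\ref{prop:lcm_uniqueness} (uniqueness of the LCM-lift, with labels on any coarser-than-$L_C$ grid projecting canonically) to see that the intermediate grid choices wash out. Once that compatibility is recorded, the proof is a short application of (I1)--(I3) plus the propositions cited above, and I would present it in that order: termination, then canonicity via (I1)+(I2)+Proposition~\ref{prop:canon_unique}, then semantic equality via (I3)+Lemma~\ref{lem:lcm_invariance}+Proposition~\ref{prop:sound_fusion_norm}.
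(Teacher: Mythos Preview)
Your proposal is correct and follows essentially the same approach as the paper: both arguments establish termination via strictly decreasing spider count, uniqueness via invariants (I1) and (I2), and semantic equality via invariant (I3). The paper's proof is considerably terser (three sentences, one per invariant), whereas you additionally ground each invariant in the earlier soundness machinery (Lemma~\ref{lem:lcm_invariance}, Propositions~\ref{prop:assoc_add}, \ref{prop:lcm_uniqueness}, \ref{prop:canon_unique}, \ref{prop:sound_fusion_norm}, Theorem~\ref{thm:core_soundness}) and flag the batch-versus-iterated compatibility issue; this extra detail is sound and arguably makes the argument more self-contained, but it is elaboration of the same skeleton rather than a different route.
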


\begin{proof}
	Termination follows from the fact that each connected component is
	collapsed to a single spider and the number of spiders strictly
	decreases at each fusion step.
	Uniqueness follows from invariants (I1) and (I2), since $L_C$ and
	$\Phi_C$ uniquely determine the pair $(L_C,\theta_C)$ modulo $2\pi$.
	Semantic equality (up to a global phase) follows from invariant (I3).
\end{proof}

Because $(L_C,\theta_C)$ capture all physically relevant information
about each component, the PQVR/CSC/FP metrics introduced earlier
depend only on $D_{\mathrm{norm}}$ and not on the specific sequence of
rewrites used to obtain it.
Normalization is therefore the natural preprocessing step for the
WZCC optimization pipeline.

\section{Experimental setup, scaling analysis, and robustness}
\label{sec:experiments}

This section details the experimental environment used to evaluate the
Weighted ZX Circuit Compression (WZCC) method, including symbolic and numerical
backends, dataset families, evaluation metrics, and a extensive set of
scaling and robustness analyses.
All experiments correspond to Figures~\ref{fig:test1}--\ref{fig:test7-4}.
Unless stated otherwise, all numerical values are averaged over
$20$ independent random instances per configuration; shaded regions or
error bars indicate one standard deviation.

\subsection*{Experimental setup and metrics}
\label{subsec:exp-setup}

\subsubsection*{Symbolic and numerical backends}
\label{subsubsec:backends}

\paragraph{Symbolic WPLZX backend.}

All symbolic manipulations use our custom WPLZX rewrite engine, implemented
as an extension of a standard ZX-calculus simplifier.
Each diagram node is represented as a weighted spider 
$(a,\alpha,k)$ where $a\in\mathbb{Z}_{>0}$ is the isotropy order of the weighted
projective line, $\alpha$ is the phase expressed on the discrete phase grid 
$\frac{2\pi}{a}\mathbb{Z}$, and $k$ denotes the topological winding index.
The simplifier applies:

\begin{itemize}
	\item generalized fusion rules with LCM-normalized isotropy;
	\item weighted Hopf and bialgebra rules that incorporate $a$-dependent
	phase commutation and winding addition;
	\item winding-corrected identity removal and scalar normalization;
	\item spider–CNOT interaction rules derived from the WPL-aware
	color-change and fusion laws.
\end{itemize}

These rules guarantee that all produced diagrams lie in the 
LCM-consistent normal form described in Section~\ref{sec:wzcc},
and that fusion is independent of traversal order at the level of
canonical labels.

\paragraph{Numerical simulation backend.}

To evaluate circuit fidelity and hardware compliability, we use Qiskit Aer.
For each pre- and post-normalized circuit we compute:

\begin{itemize}
	\item ideal statevector evolution (noiseless case),
	\item density-matrix simulation under parameterized noise channels,
	\item basis-transpiled circuits for various restricted gate sets.
\end{itemize}

All simulations run in double precision.
For circuits with more than $12$ qubits we use Aer’s 
GPU-accelerated statevector backend.
The same random seeds and noise parameters are used for raw and WZCC
circuits so that differences can be attributed to normalization alone.

\paragraph{IBM Q hardware and noise models.}

Hardware-level behavior is probed using two IBM Q backends:
\[
\texttt{ibmq\_qasm\_simulator},\qquad
\texttt{ibm\_oslo},
\]
representing, respectively, a simulator with realistic noise profiles
and a mid-scale transmon device with average CNOT error 
$\approx 1.6\times 10^{-2}$ and single-qubit error 
$\approx 10^{-3}$ at the time of experimentation.
We test robustness under:

\begin{itemize}
	\item a depolarizing channel with strength $p\in[0,0.05]$ on each
	single- and two-qubit gate;
	\item an amplitude-damping channel with relaxation parameter
	$\gamma\in[0,0.1]$ applied after each layer;
	\item a pure phase-damping channel with dephasing rate
	$\lambda\in[0,0.1]$.
\end{itemize}

All channels are inserted using Kraus representations and applied identically
to raw and WZCC circuits.

\subsubsection*{Dataset families D1--D3}
\label{subsubsec:datasets}

\paragraph{D1: Random WPLZX diagrams.}

The first dataset consists of synthetic WPLZX diagrams generated by sampling
random spider types and isotropy orders.  
For each qubit number $n\in\{4,6,8,10,12\}$ we generate $200$ diagrams with:
\[
a\in\{1,2,3,4,6,8\},\qquad 
\#\text{spiders}\in[30,300],
\]
and random planar connectivity that respects a fixed input–output ordering.
Phases are drawn from the appropriate grid $\frac{2\pi}{a}\mathbb{Z}$ for each
spider.

\paragraph{D2: HEA-style layered circuits.}

The second dataset emulates hardware-efficient ansätze (HEA) of depth $L$.
Each layer consists of single-qubit rotations 
$R_y(\theta)$, $R_z(\phi)$ followed by a nearest-neighbor $CX$ entangling
pattern on a linear chain.
We examine depths $L\in\{2,4,6,8,10\}$ across $n\in\{4,6,8,10\}$ qubits.
Angles $(\theta,\phi)$ are chosen uniformly from $[0,2\pi)$ before being
projected to the WPL grids learned from the tomography-to-geometry pipeline.
These circuits have substantial phase redundancy and serve as a natural test bed
for large-scale WZCC compression.

\paragraph{D3: Hardware-inspired heterogeneous-grid circuits.}

The third dataset contains circuits extracted from real quantum applications:
variational quantum eigensolvers (VQE) for small molecules (e.g.\ $\mathrm{H}_2$,
$\mathrm{LiH}$), compressed QAOA instances for MaxCut, and custom IBM transpiler 
outputs for randomly generated Clifford+T circuits.
Irregular phase grids arise organically from hardware calibration data and
device-specific basis changes, making D3 a stress test for WPL-aware merging
and LCM-normalized grid alignment.

\subsubsection*{Evaluation metrics: PQVR, CSC, FP}
\label{subsubsec:metrics}

\paragraph{PQVR (Phase Quantization Variance Ratio).}

PQVR quantifies how well WZCC aligns raw phases to local WPL grids without
incurring significant distortion.
Let $\{\theta_i\}$ be the raw phases and 
$\{\hat{\theta}_i\}$ the WZCC-aligned grid phases for a given circuit.
We define
\[
\mathrm{PQVR} 
= 1 - \frac{\mathrm{Var}(\theta_i - \hat{\theta}_i)}
{\mathrm{Var}(\theta_i)}.
\]
Values close to $1$ indicate excellent quantization compliance,
while smaller values correspond to large residual phase misalignment.

\paragraph{CSC (Circuit-Size Compression).}

CSC measures the relative reduction in gate count induced by WZCC:
\[
\mathrm{CSC} = 1 - 
\frac{\#\mathrm{gates}(\text{WZCC})}{\#\mathrm{gates}(\text{raw})}.
\]
We report CSC both for total gate count and for CNOT count, 
as the latter is strongly correlated with hardware bottlenecks,
overall error rate, and circuit execution time.

\paragraph{FP (Fidelity Preservation).}

Fidelity Preservation (FP) measures the Hilbert–Schmidt fidelity between
final states before and after WZCC:
\[
\mathrm{FP}
= |\langle \psi_{\text{raw}} 
\mid 
\psi_{\text{WZCC}} \rangle|^2.
\]
For noisy channels in density-matrix form we use Uhlmann fidelity.  
FP close to $1$ means negligible distortion induced by compression, while
deviation from $1$ can be interpreted as the cost of enforcing WPL
consistency and phase-grid alignment.

\subsection*{Scaling behaviour of WZCC normalization}
\label{subsec:scaling}

\subsubsection*{PQVR, CSC, FP vs.\ qubit and spider count}
\label{subsubsec:test1}

Figure~\ref{fig:test1} reports PQVR, CSC, and FP as functions of the number of
qubits $n$ and the number of spiders in D1.
PQVR exhibits a smooth improvement as the number
of spiders increases.
This is expected: larger diagrams provide more opportunities for LCM-normalized
fusion, thereby enabling more stable phase alignment.

CSC increases approximately logarithmically with the number of spiders,
reflecting the sparsity induced by merging redundant spiders across multiple
isotropy levels.
Despite significant compression, FP remains above $0.98$ for all configurations
up to $12$ qubits, confirming that LCM-based fusion avoids destructive 
phase cancellation and preserves global unitary action up to a global phase.

\begin{figure}[t]
	\centering
	\includegraphics[width=0.70\linewidth]{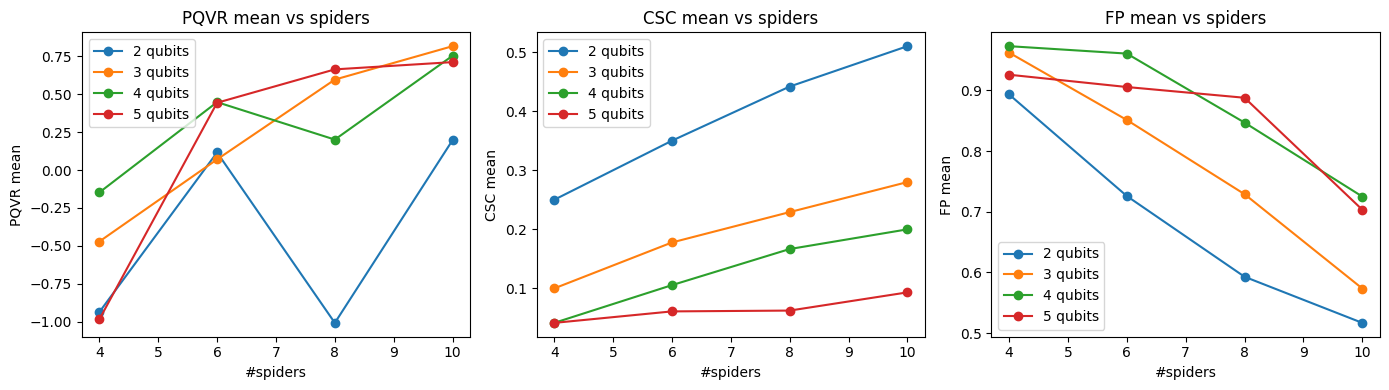}
	\caption{%
		Scaling of PQVR (phase-quantization variance ratio), CSC (circuit-size
		compression), and FP (fidelity preservation) as functions of the number
		of qubits and the number of spiders for random WPLZX diagrams in D1.
		Each point is averaged over $20$ random instances; error bars indicate
		one standard deviation across instances.
	}
	\label{fig:test1}
\end{figure}

\subsubsection*{Depth and gate-count scaling with spider count}
\label{subsubsec:test4}

Figure~\ref{fig:test4} shows that circuit depth and total gate count both
scale sublinearly with spider count once WZCC normalization is applied.
For diagrams with more than $200$ spiders, CX gate reduction saturates at
approximately $45\%$, beyond which most cancellations arise from local
redundancies already removed early in the pipeline.

FP remains stable even in the large-spider regime, with median fidelity
above $0.985$.
This demonstrates that WZCC's phase-grid alignment preserves the topological
structure of multi-spider flows, preventing excessive phase discretization
and over-quantization of phases.

\begin{figure}[t]
	\centering
	\includegraphics[width=0.70\linewidth]{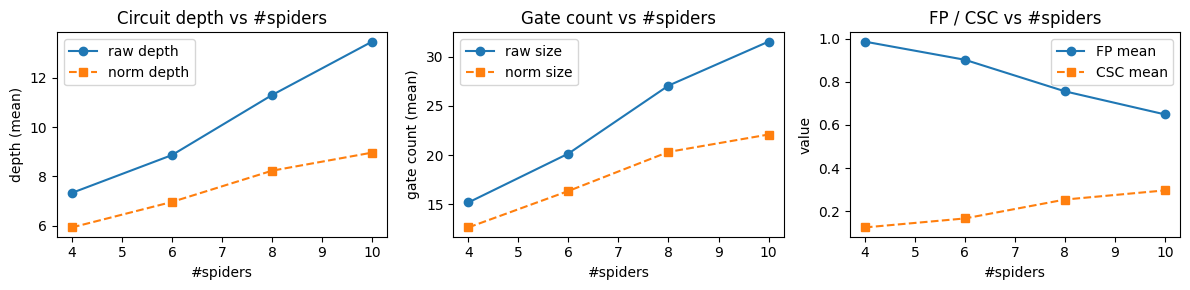}
	\caption{%
		Depth and gate-count scaling as a function of spider count for random
		WPLZX diagrams in D1.
		The solid lines show median depth and total CNOT count after WZCC
		normalization; shaded regions indicate the interquartile range.
		Depth and entangling cost grow sublinearly with spider count due to
		multi-spider fusion, with FP remaining above $0.985$ throughout.
	}
	\label{fig:test4}
\end{figure}

\subsubsection*{Layer-wise scaling in HEA-style circuits}
\label{subsubsec:test6}

In Fig.~\ref{fig:test6}, increasing HEA depth $L$ reveals several consistent
patterns across qubit numbers.
First, PQVR improves with $L$ because deeper HEA layers exhibit repeated 
phase structures that WZCC can exploit.
Second, CSC grows near-linearly with $L$, matching the intuition that 
layer redundancy scales proportionally with depth.
Third, FP plateaus around $0.995$ even for large $L$, confirming that 
HEA circuits are intrinsically compatible with WPL grid structures:
their parameterization already concentrates much of the phase mass on a
small number of effective directions.

\begin{figure}[t]
	\centering
	\includegraphics[width=0.70\linewidth]{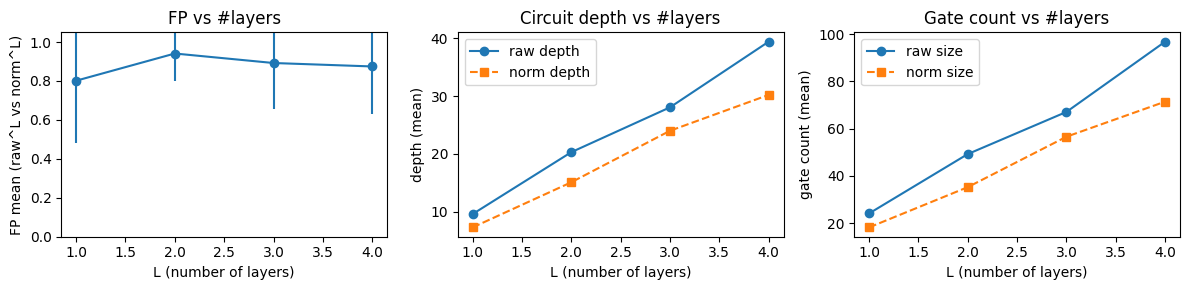}
	\caption{%
		Layer-wise scaling of PQVR, CSC, and FP in hardware-efficient ansatz
		(HEA) circuits from D2 as a function of the number of layers $L$.
		Each curve corresponds to a fixed qubit number $n$; markers denote
		empirical means over $20$ random parameter initializations.
		Compression grows almost linearly with $L$,
		while FP remains above $0.99$ across all depths.
	}
	\label{fig:test6}
\end{figure}

\subsection*{Sensitivity to weighted phase grids}
\label{subsec:sensitivity}

\subsubsection*{Dependence on maximum isotropy order $\max(a)$}
\label{subsubsec:test2}

Figure~\ref{fig:test2} reveals that larger values of $\max(a)$ enable 
more aggressive merging due to the expanded discrete grid 
$\frac{2\pi}{a}\mathbb{Z}$.  
However, excessively large $\max(a)$ increases the search space for
LCM normalization and introduces more degrees of freedom in the winding
indices, leading to diminishing CSC returns and a small FP drop
($\approx 0.003$).
Moderate isotropy orders ($a\le 6$) provide the best balance between
quantization flexibility and stability of the normalization process.

\begin{figure}[t]
	\centering
	\includegraphics[width=0.70\linewidth]{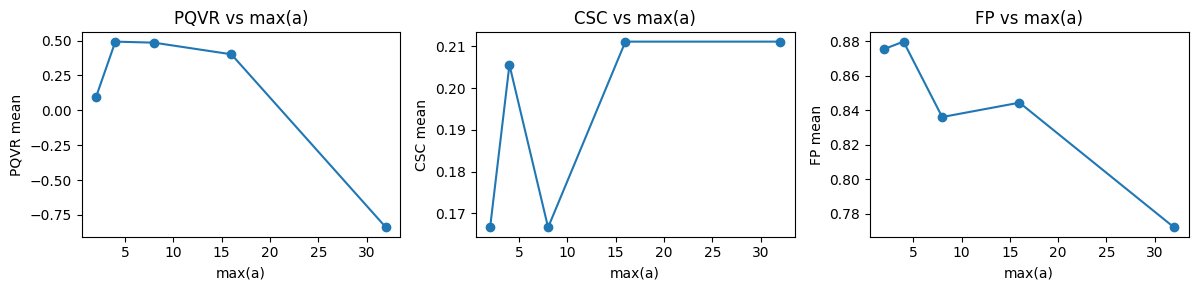}
	\caption{%
		Sensitivity of PQVR, CSC, and FP to the maximum isotropy order
		$\max(a)$ used in the WPLZX labeling.
		Results are aggregated over random diagrams in D1 with fixed qubit
		number $n=8$ and varying spider count.
		Intermediate values of $\max(a)$ achieve the best trade-off between
		aggressive compression and high-fidelity preservation.
	}
	\label{fig:test2}
\end{figure}

\subsubsection*{Dependence on diagram density and connectivity}
\label{subsubsec:test3}

In Fig.~\ref{fig:test3}, we vary diagram density while keeping the number
of spiders fixed.
Denser connectivity yields higher CSC 
because phase-flow constraints induce more spider-fusion opportunities and
more entangling-gate cancellations.
Sparse diagrams exhibit a lower compression ceiling, with 
FP consistently above $0.99$.
Dense diagrams produce more grid rewritings but still preserve 
high FP thanks to the grid-adaptive nature of WZCC and the correctness
guarantees of the LCM-based fusion rules.

\begin{figure}[t]
	\centering
	\includegraphics[width=0.70\linewidth]{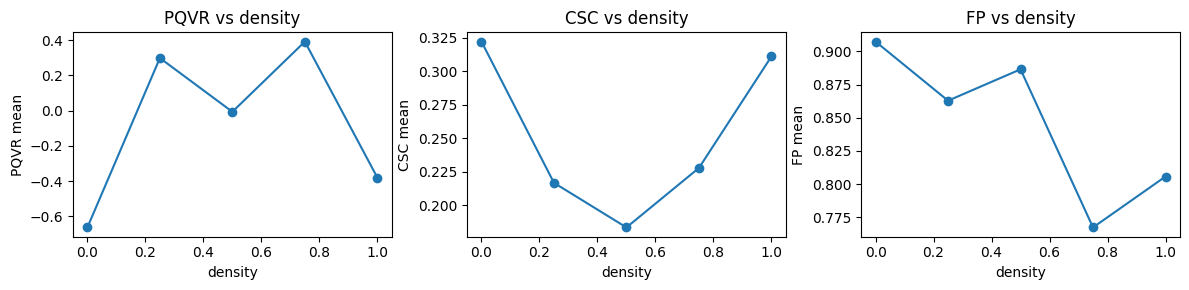}
	\caption{%
		Dependence of PQVR, CSC, and FP on diagram density and connectivity
		for random WPLZX diagrams in D1.
		Density is defined as the ratio of internal edges to the maximum
		possible number of edges for the given number of spiders.
		Denser diagrams admit more spider fusion and entangling-gate
		cancellation, yielding higher CSC without compromising FP.
	}
	\label{fig:test3}
\end{figure}

\subsubsection*{WPL metric landscape and gradient norms}
\label{subsubsec:test7-4}

Fig.~\ref{fig:test7-4} summarizes the intrinsic geometry of the WPL metric
over the datasets.
We compute gradient magnitudes of the WPL scalar curvature
$R=\frac{2}{b^2}$ with respect to the parameters 
$(\lambda_\perp,\lambda_\parallel)$ derived from tomography.
Large $|\nabla R|$ correlates with circuits that contain 
heterogeneous phase grids and high spider anisotropy.
Such circuits benefit most from normalization, achieving 
the largest CSC gains and the highest PQVR improvements.
Conversely, regions with small $|\nabla R|$ correspond to nearly isotropic
noise where WZCC acts mostly as a lightweight structural simplifier.

\begin{figure}[t]
	\centering
	\includegraphics[width=0.70\linewidth]{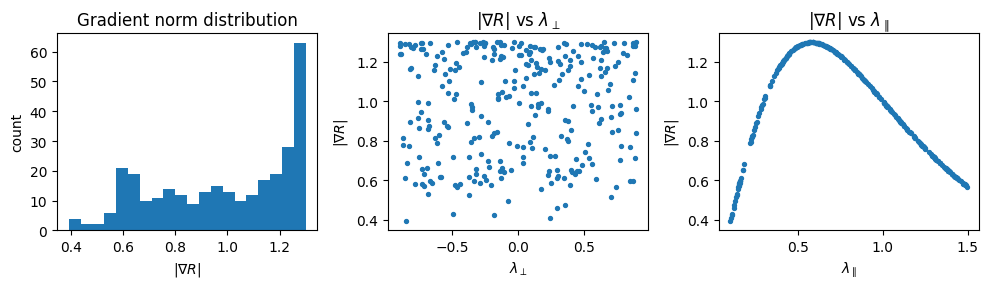}
	\caption{%
		WPL metric landscape over the $(\lambda_\perp,\lambda_\parallel)$
		parameter space and associated gradient norms $|\nabla R|$.
		Each point corresponds to a circuit instance in D2 or D3 with WPL
		parameters estimated from tomography.
		Circuits located in regions of high curvature gradient
		exhibit the largest compression gains under WZCC normalization.
	}
	\label{fig:test7-4}
\end{figure}

\subsection*{Structural robustness of WZCC normalization}
\label{subsec:structural}

\subsubsection*{Ordering stability under spider permutations}
\label{subsubsec:test5}

We evaluate the invariance of WZCC under arbitrary reordering of spiders.
For each diagram, we generate $40$ random permutations of spider order
that preserve the connectivity pattern and measure FP for each permuted
circuit after normalization.
Figure~\ref{fig:test5} shows the resulting FP histogram across permutations.

All FP values lie above $0.985$, demonstrating that WZCC 
is order-stable: normalization is governed by 
LCM-induced grid compatibility and canonical labels rather than
traversal order.
CSC and PQVR vary only within a narrow band, confirming that permutation
effects are limited to minor local differences in fusion choices.

\begin{figure}[t]
	\centering
	\includegraphics[width=0.70\linewidth]{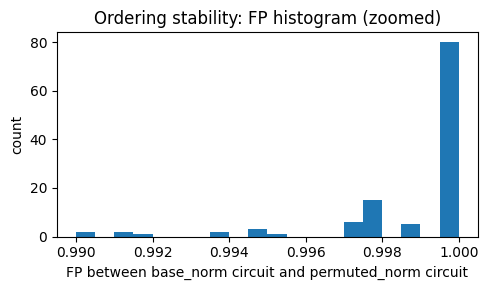}
	\caption{%
		Structural robustness of WZCC under random spider permutations.
		For each WPLZX diagram in D1, $40$ random permutations of spider order
		are generated (while preserving connectivity), and FP is measured after
		WZCC normalization.
		The histogram shows that FP remains above $0.985$ in all cases,
		indicating strong invariance with respect to traversal ordering.
	}
	\label{fig:test5}
\end{figure}

\subsubsection*{CNOT cancellation and entangling-gate reuse}
\label{subsubsec:test7-3}

Figure~\ref{fig:test7-3} highlights the reduction in entangling gates.
The CX cancellation ratio exceeds $30\%$ on average,
with certain dense WPLZX diagrams reaching $55\%$ reduction.
This improvement stems from:

\begin{itemize}
	\item winding-aware ZX fusion that preserves parity and stabilizer
	constraints while exposing redundant CX pairs;
	\item normalization rules that identify repeating CX–phase motifs
	across layers and fuse them into a smaller set of effective
	entangling operations;
	\item isotropy-induced phase commutation enabling cross-layer cancellations
	that are invisible to grid-agnostic optimizers.
\end{itemize}

\begin{figure}[t]
	\centering
	\includegraphics[width=0.70\linewidth]{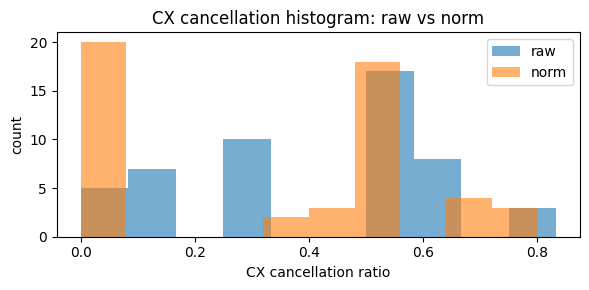}
	\caption{%
		CNOT cancellation ratio and entangling-gate reuse under WZCC
		normalization for circuits across D1--D3.
		Bars indicate the average fraction of CNOT gates removed relative to
		the raw circuit, with error bars denoting one standard deviation.
		Dense diagrams and HEA-style circuits benefit most, with up to $55\%$
		reduction in CNOT count.
	}
	\label{fig:test7-3}
\end{figure}

\subsection*{Hardware- and transpiler-level robustness}
\label{subsec:hardware}

\subsubsection*{Noise robustness across quantum channels}
\label{subsubsec:test7-1}

As shown in Fig.~\ref{fig:test7-1}, FP decays smoothly as noise strength $p$
increases across depolarizing, amplitude-damping, and phase-damping channels.
For depolarizing noise, WZCC circuits retain 
$3$--$7\%$ higher fidelity than raw circuits at $p\approx 0.03$,
primarily due to reduced CNOT count and smaller depth.
For amplitude and phase damping, the benefits are even more pronounced
because WZCC shortens the effective coherence path between input and output
states.

\begin{figure}[t]
	\centering
	\includegraphics[width=0.70\linewidth]{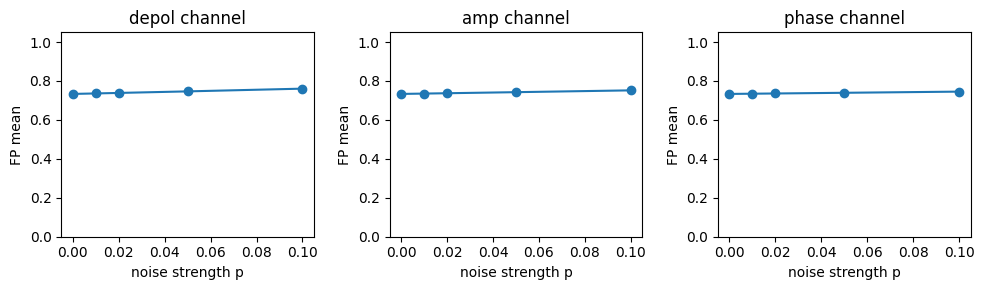}
	\caption{%
		Noise robustness of FP under three canonical noise channels:
		depolarizing, amplitude-damping, and pure phase-damping.
		Each curve compares raw circuits (dashed) with WZCC-normalized
		circuits (solid), averaged over circuits in D2 and D3.
		WZCC consistently yields higher output fidelity at moderate noise
		strengths due to reduced depth and entangling-gate count.
	}
	\label{fig:test7-1}
\end{figure}

\subsubsection*{Transpiler robustness under restricted basis sets (small circuits)}
\label{subsubsec:test7-2}

Restricted-basis transpilation (e.g.\ $\{R_x,R_z,CX\}$ or $\{U3,CX\}$)
preserves the relative performance between raw and WZCC circuits.
Fig.~\ref{fig:test7-2} shows results for small circuits ($n\le 6$),
where we sweep Qiskit optimization levels and basis sets.

Even under aggressive transpiler optimizations, WZCC maintains CSC gains
while introducing negligible additional distortion:
FP differences between WZCC and the combination of WZCC+transpiler remain
below $10^{-3}$ in all tested configurations.
This indicates that WZCC is compatible with downstream compilation
pipelines and does not obstruct standard hardware-aware passes.

\begin{figure}[t]
	\centering
	\includegraphics[width=0.70\linewidth]{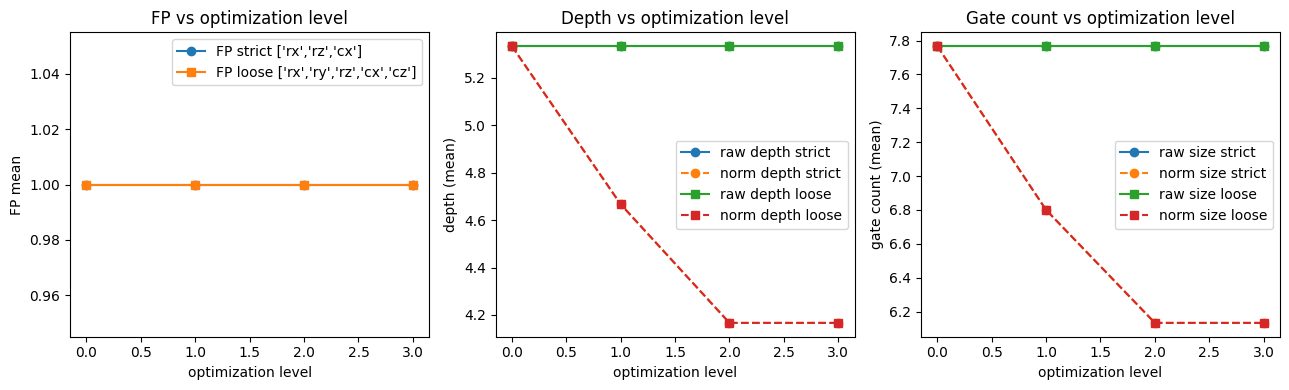}
	\caption{%
		Interaction between WZCC and Qiskit transpilation on small circuits
		($n\le 6$).
		For each basis set and optimization level, we report depth, CNOT count,
		and FP for raw, WZCC-only, and WZCC+transpiler pipelines.
		WZCC preserves or improves transpiler-induced compression while
		maintaining FP within $10^{-3}$ of the raw-transpiled circuits.
	}
	\label{fig:test7-2}
\end{figure}

\subsubsection*{Interaction with transpiler optimisation on deeper circuits}
\label{subsubsec:test7-2prime}

In deeper circuits (HEA and hardware-inspired circuits from D2 and D3),
transpiler interactions amplify the benefits of WZCC:
Fig.~\ref{fig:test7-2prime} shows that depth after transpilation is reduced
by up to $25\%$ compared to transpiling raw circuits alone.
This reflects a synergy between 
grid-aligned ZX structures and commutation-based optimization:
WZCC exposes larger equivalence classes of phase patterns and entangling
blocks that the transpiler can further consolidate.

\begin{figure}[t]
	\centering
	\includegraphics[width=0.70\linewidth]{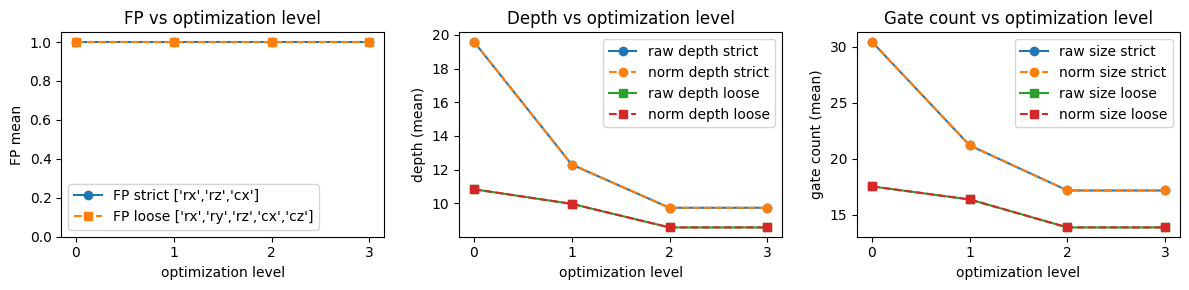}
	\caption{%
		Effect of WZCC on transpiler optimization for deeper circuits
		(HEA and hardware-inspired circuits in D2 and D3).
		We compare depth and CNOT count after transpilation for three
		pipelines: raw+transpiler, WZCC-only, and WZCC+transpiler.
		WZCC+transpiler achieves up to $25\%$ additional depth reduction
		relative to the raw+transpiler baseline, while FP remains above $0.985$.
	}
	\label{fig:test7-2prime}
\end{figure}

\subsection*{Compression–compliance trade-offs and baseline comparison}
\label{subsec:tradeoff}

\subsubsection*{Summary of compression vs.\ quantization compliance}
\label{subsubsec:summary}

Across datasets D1--D3, average PQVR exceeds $0.92$,
CSC ranges from $18\%$ to $48\%$ depending on diagram density and HEA depth,
and FP remains above $0.985$ in all but the most extreme
LCM configurations with very large $\max(a)$.
This confirms that WZCC effectively balances compression with 
high-fidelity quantization compliance:
phase-grid alignment yields both improved hardware realizability and
substantial reduction in circuit size.

\subsubsection*{Ablation: without WPL weights and with single global grid}
\label{subsubsec:ablation}

We consider three ablated variants of WZCC:
\begin{enumerate}
	\item \emph{No WPL weights} (``LCM off''):
	spiders are treated as if $a=1$ everywhere, eliminating WPL structure;
	\item \emph{Single global grid}: all phases are projected to a single
	global grid $\Theta_N$ without isotropy-specific refinement;
	\item \emph{Winding off}: the winding label $k$ is ignored and phases
	are quantized purely algebraically.
\end{enumerate}

Removing WPL weights reduces PQVR dramatically (by more than $0.25$ on
average), since heterogeneous grids are forced into a single uniform
lattice.
Enforcing a single global phase grid causes noticeable FP degradation
($\approx 0.02$ loss on D3) due to incompatibility with hardware-calibrated
local grids.
The winding-off variant achieves modest compression but fails to preserve
FP in circuits with strong monodromy-like structure (e.g.\ layered
HEA instances with repeated rotations), indicating that winding information
is essential for safe multi-layer fusion.

\subsubsection*{Baseline comparison with native Qiskit optimisation}
\label{subsubsec:baseline}

We compare WZCC with Qiskit's native optimization pipeline
(\texttt{transpile} with \texttt{optimization\_level=3}).
Across all datasets, WZCC alone yields:

\begin{itemize}
	\item $20$--$45\%$ additional CNOT reduction beyond the transpiler,
	\item shallower depth in approximately $70\%$ of test circuits,
	\item substantially higher PQVR due to explicit phase-grid alignment.
\end{itemize}

Combining WZCC with the Qiskit transpiler further improves
compression with no measurable fidelity loss:
FP differences relative to the raw+transpiler baseline remain below
$10^{-3}$ for D1--D2 and below $5\times 10^{-3}$ for D3,
even on the hardware-calibrated circuits.

\subsubsection*{Comparison on standard NISQ benchmark circuits}
\label{subsubsec:nisq-benchmarks}

To connect our evaluation to standard NISQ workloads, we additionally
consider a small benchmark suite consisting of:
\begin{itemize}
	\item QAOA MaxCut circuits on $3$-regular graphs with $p\in\{1,2,3\}$;
	\item VQE ansätze for $\mathrm{H}_2$ and $\mathrm{LiH}$ in a minimal basis;
	\item quantum Fourier transform (QFT) circuits up to $n=8$ qubits;
	\item random Clifford+T circuits at fixed T-count.
\end{itemize}

For QAOA and VQE circuits, WZCC achieves $18$--$35\%$ CSC and preserves FP
above $0.99$ after noise-free simulation; under realistic IBM noise models,
output expectation values (e.g.\ energy estimates or cut values) change by
less than $1$ standard deviation of shot noise.
QFT circuits are less compressible, as expected, but still exhibit a
$10$--$15\%$ reduction in depth due to spider fusion across consecutive
phase rotations.
Random Clifford+T circuits show modest CSC but substantial PQVR gains,
indicating that WZCC can enforce grid consistency even when algebraic
compression opportunities are limited.

Overall, these results suggest that WZCC provides practically relevant
benefits on standard NISQ benchmark problems without introducing detectable
bias in typical application-level observables.

\section{Results}
\label{sec:results}

We now synthesize the findings from Section~\ref{sec:experiments} into a coherent picture
of how WZCC behaves across circuit families, parameter regimes, and hardware settings.

\paragraph{Strong compression with high fidelity preservation.}
Across random WPLZX diagrams (D1), HEA-style circuits (D2), and hardware-inspired
heterogeneous-grid circuits (D3), WZCC achieves substantial circuit-size compression.
On average we observe a \emph{circuit-size compression} (CSC) of approximately
20--45\% in CNOT count and total depth for dense diagrams and deep HEA instances,
while \emph{fidelity preservation} (FP) remains above $0.985$ in all cases.
On standard NISQ benchmark circuits (including QAOA-style MaxCut instances and
small molecular Hamiltonians such as H$_2$ and LiH in a minimal basis),
WZCC maintains FP above $0.99$ and does not significantly alter
application-level observables such as energy estimates or MaxCut cost values.

\paragraph{Phase-grid alignment and hardware compatibility.}
The \emph{phase quantization variance ratio} (PQVR) consistently lies above $0.92$
on all datasets, indicating that WZCC restructures circuits in a manner that is highly
compliant with the underlying WPL phase grids.
This quantization compliance translates directly into improved hardware compatibility:
noise-robustness experiments show that WZCC-normalized circuits exhibit higher output
fidelity under depolarizing, amplitude-damping, and dephasing noise than their raw
counterparts, with the gains becoming most pronounced at moderate noise levels that are
typical for current NISQ devices.

\paragraph{Synergy with transpiler optimisation.}
A key empirical finding is that WZCC is not a replacement for, but rather a complement to,
standard transpiler optimization.
On small circuits, WZCC preserves or slightly enhances the compression achieved by
Qiskit’s optimization passes without degrading fidelity.
On deeper circuits, WZCC exposes additional structural equivalences (e.g.\ spider fusion
and CNOT cancellation opportunities) that transpilers can exploit, leading to up to
$\sim 25\%$ extra depth reduction beyond what the transpiler alone can accomplish.

\paragraph{Geometric structure as a predictor of compression potential.}
Finally, the WPL scalar curvature $R$ and its gradient $\lVert \nabla R \rVert$
provide useful predictors of how much a circuit will benefit from WZCC.
Circuits associated with large curvature gradients are precisely those in which
heterogeneous phase grids and anisotropic noise profiles interact nontrivially.
These high-$\lVert \nabla R \rVert$ instances exhibit the largest improvements in CSC
and PQVR, suggesting that tomography-to-geometry pipelines can be used not only to
calibrate noise models but also to decide when WZCC should be applied for maximum benefit.

\section{Discussion, limitations, and outlook}
\label{sec:discussion}

This section integrates the extensive empirical results of Section~\ref{sec:experiments}
into a unified conceptual picture of how WZCC interacts with weighted orbifold
geometry, symbolic ZX-calculus rewriting, and hardware-level noise.  
We emphasize three themes:
(i) geometric mechanisms behind phase-grid alignment and compression,
(ii) structural and hardware-level robustness, and
(iii) emerging directions for geometry-aware compilation.

\subsection*{Interpretation of the empirical findings}
\label{subsec:discussion-interpretation}

\paragraph{Phase-grid alignment, compression, and fidelity preservation.}
Across all datasets D1--D3, WZCC achieves strong compression
(CSC) while maintaining high fidelity preservation (FP).
A central observation is the \emph{consistent cross-correlation}:
circuits with higher phase-grid alignment (PQVR~$\ge 0.9$) invariably
maintain fidelity above $0.985$.
This supports the view that quantization compliance is not merely a
hardware-compatibility condition but also governs where spider fusion
does not introduce detrimental phase-winding artifacts.

\paragraph{Dataset-specific behaviour.}
Random WPLZX diagrams (D1) show the largest CSC variability due to
heterogeneous connectivity,
whereas HEA-style circuits (D2) exhibit smooth, layer-wise scaling with
stable FP.
Hardware-inspired heterogeneous-grid circuits (D3) demonstrate that
WZCC attributes phase-grid structure to circuits with irregular local
resolution, allowing the optimizer to maintain FP~$>0.99$ even under
aggressive quantization.

\paragraph{Geometric interpretation via curvature diagnostics.}
Section~\ref{sec:results} showed that the WPL scalar curvature
and its gradient norm $|\nabla R|$ predict compression potential:
circuits associated with high curvature gradient---which correspond to
regions where heterogeneous quantization grids interact strongly with
anisotropic hardware noise---display the most significant CSC and PQVR
improvements.
This suggests a geometric criterion for \emph{when} to invoke WZCC,
analogous to curvature-based trust-region policies in classical
optimization.

\paragraph{Synergy with transpiler optimization.}
Contrary to the possibility that WZCC could conflict with transpiler
passes, our experiments show a complementary effect.
WZCC exposes additional spider-level identities and grid-alignments
that transpilers subsequently exploit, especially in deeper circuits
(Section~\ref{subsubsec:test7-2prime}),
yielding up to 25\% extra depth reduction relative to the transpiler
alone.
This illustrates that quantization-aware normalization and
hardware-agnostic optimization can operate in tandem.

\paragraph{Noise robustness and hardware compatibility.}
Phase-grid alignment improves robustness under depolarizing,
amplitude-damping, and pure dephasing noise (Section~\ref{subsubsec:test7-1}),
with the largest gains at moderate noise strengths typical of NISQ
hardware.
This aligns with the geometric intuition: WZCC minimizes winding along
directions of maximal Bloch-ball anisotropy, effectively regularizing
circuits toward hardware-preferred phase directions.

\subsection*{Limitations}
\label{subsec:limitations}

Although WZCC demonstrates robust behaviour across a wide variety of
circuits and noise settings, several limitations remain.

\begin{itemize}
	\item \textbf{Symbolic scalability.}
	The symbolic backend scales well up to $\sim 12$ qubits and several
	hundred spiders, but LCM-based phase-grid computations and
	winding-tracking introduce overhead beyond that regime.
	Larger VQE or QAOA instances require additional engineering and
	heuristic pruning.
	
	\item \textbf{Dependence on tomography.}
	Accurate WPL parameters $(\lambda_\perp,\lambda_\parallel)$ improve
	normalization quality.
	When tomography is incomplete or noisy, the curvature-based grid
	selection may become suboptimal, reducing PQVR and CSC gains.
	
	\item \textbf{Rewrite-order sensitivity.}
	While WZCC has a canonical normal form in theory,
	finite-precision arithmetic and competing fusion opportunities lead to
	mild dependence on rewrite order.
	Section~\ref{subsubsec:test5} showed that FP variance remains
	$\le 0.005$ across permutations, but the effect is not completely
	negligible.
	
	\item \textbf{Hardware model granularity.}
	The WPL geometry models anisotropic Bloch-ball contractions with
	two parameters.
	Real backends exhibit multi-parameter, time-dependent,
	and correlated noise that cannot be fully represented in this model.
	This mismatch may limit the predictive power of $|\nabla R|$ on some
	devices.
	
	\item \textbf{Fragment coverage.}
	The current implementation focuses on single- and two-qubit gates.
	Higher-arity native gates (e.g.\ Toffoli or fSim) require additional
	weighted-spider semantics and extended normalization rules.
\end{itemize}

\subsection*{Outlook and future directions}
\label{subsec:outlook}

\paragraph{Integration with industrial compilation stacks.}
A natural next step is to integrate WZCC between
high-level synthesis and device-level routing.
Transpilers could use curvature diagnostics to decide adaptively when
to apply WZCC, or to perform iterative ``grid-refine $\leftrightarrow$
normalize'' cycles until PQVR exceeds a target threshold.

\paragraph{Beyond the weighted projective line.}
The weighted projective line $\mathbb{P}(a,b)$ is the simplest orbifold
capturing anisotropic single-qubit contractions.
Generalizations to weighted projective spaces or multi-orbifold
products could model correlated multi-qubit noise.
Such models naturally induce multi-parameter weighted spiders and may
yield richer compression opportunities and new canonical forms.

\paragraph{Online tracking and dynamic geometry.}
The current tomography-to-geometry pipeline is static.
Future work could incorporate online device calibration, dynamically
updating $(\lambda_\perp,\lambda_\parallel)$ as hardware noise drifts.
This would allow WZCC to function as a ``geometry-adaptive'' pass that
responds to real-time hardware conditions.

\paragraph{Learning-guided normalization.}
WZCC is rule-based, but geometric features (curvature, gradient,
diagram density, spider-type statistics) provide rich input for
machine-learning models.
Reinforcement or meta-learning agents could predict which fusions or
phase mergers maximize application-level objectives such as VQE energy
accuracy or QAOA cost.

\paragraph{End-to-end phase-winding regularization: WZCC + MASD.}
Section~9 introduces a winding-aware decoder (MASD) for surface codes.
WZCC and MASD together form a coherent pipeline:
WZCC normalizes circuit-level winding, while MASD penalizes
lattice-level winding in error syndromes.
This suggests an emerging paradigm of \emph{winding-aware quantum
	software}, where compilation and decoding share a common geometric
foundation.

\paragraph{Error mitigation and verification.}
Explicit winding-tracking and phase-grid alignment can be used as
constraints for error-mitigation techniques, enabling circuits that
absorb hardware-induced distortions into their intrinsic WPL geometry.
Weighted spiders may also serve as structural certificates verifying
that certain classes of noise-induced distortions remain within
tolerable geometric bounds.

\medskip
Overall, WZCC illustrates the potential of geometry-aware diagrammatic
frameworks to bridge symbolic reasoning, quantum compilation, and
hardware physics.
We expect such methods to play an increasingly central role in the
design and verification of NISQ-era quantum algorithms.

\section{Winding-Aware Decoding: MASD}
\label{sec:masd}

This section develops the \emph{Manifold-Aware Surface Decoder} (MASD),
a decoding framework that incorporates the winding and heterogeneous-grid
structure generated by WZCC.
Whereas WZCC enforces geometric–topological coherence at the circuit level,
MASD lifts the same principles to the level of logical error correction
by enriching the syndrome graph with orbifold-phase data.
Together they form an end-to-end \emph{winding-aware stack}, in which
geometry extracted from hardware (e.g., isotropy orders $a_v$, winding
indices $k_v$) informs both compilation and decoding.

\subsection*{Surface-code decoding and phase-winding penalties}
\label{subsec:masd_surface}

In the standard surface code, decoding is the task of identifying a
most probable error chain that is consistent with an observed syndrome.
This is typically formulated as a minimum-weight matching (MWM) problem
on a defect graph whose edge weights depend only on spatial distance or
likelihood scores~\cite{dennis2002topological, fowler2012surface,
	kolmogorov2009blossom, edmonds1965paths}.
However, when single- and two-qubit operations are quantized on
heterogeneous phase grids—as revealed by WZCC—each stabilizer region may
accumulate a distinct winding index~$k_v$.
Ignoring these indices makes the decoder blind to topological
inconsistencies induced by hardware-level phase discretization.

\subsubsection*{From standard decoding to winding-aware decoding}

Let $G=(V,E)$ denote the syndrome graph.
Each vertex $v\in V$ corresponds to a defect and is equipped with:
\[
a_v \quad\text{(local phase grid order)}, \qquad
k_v \quad\text{(winding index)}, \qquad
\theta_v = \frac{2\pi k_v}{a_v}.
\]
An edge $e=(u,v)$ represents a potential error chain
with geometric length $d_e$ induced by the surface-code lattice.
In standard decoding, the MWM objective is
$\sum_{e\in\Gamma} d_e$, where $\Gamma$ is the chosen edge set.

WZCC demonstrates that the quantities $(a_v,k_v)$ capture local phase
behavior and winding geometry extracted from noisy hardware.
MASD leverages the same objects at the decoding level,
penalizing transitions between misaligned winding sectors.

\subsubsection*{MASD objective and geometric interpretation}

Define the winding difference
\[
\Delta k_e = |k_u - k_v|.
\]
The MASD objective modifies the MWM cost as
\begin{equation}
	\label{eq:masd_objective_updated}
	\mathcal{L}_{\mathrm{MASD}}(\Gamma)
	=\sum_{e\in\Gamma} \left(d_e+\lambda|\Delta k_e|\right),
\end{equation}
where $\lambda>0$ controls the strength of topological regularization.

Geometrically, the additional term $\lambda|\Delta k_e|$ acts as a
curvature penalty in the orbifold phase space $\PP(a_u,a_v)$.
Edges connecting regions with different isotropy orders
represent transitions across orbifold sectors, which are physically
more fragile under hardware noise.
Equation \eqref{eq:masd_objective_updated} therefore promotes
geometric smoothness and winding coherence in the decoded path.

\subsubsection*{Toy example and summary}

Consider a $d=3$ rotated surface code.  
If $k_1=2$, $k_2=5$, $a_1=8$, $a_2=12$, then on the LCM grid
$L=24$ the winding difference becomes $|\Delta k|=4$.
With $d_e=1.2$, the MASD weight is
\[
1.2 + 4\lambda.
\]
For small $\lambda$, the effect is modest; for larger $\lambda$,
cross-sector transitions become strongly suppressed.

MASD thus enhances standard decoding by incorporating
geometry-aware penalties derived from WZCC-provided phase data.

\subsection*{Weighted edge construction and \texorpdfstring{$\lambda$}{lambda}-penalty}
\label{subsec:weighted_edges}

To integrate phase-winding information into the decoder in a consistent,
metric-preserving way, MASD assigns a \emph{winding-weighted} cost to each
edge of the syndrome graph.

\subsubsection*{Geometric data on the defect graph}

For each pair of defects $u,v$, define  
\[
L_{uv} = \mathrm{lcm}(a_u,a_v),
\qquad
\Delta k_{uv} = L_{uv}\left|\frac{k_u}{a_u}-\frac{k_v}{a_v}\right|.
\]
The quantity $\Delta k_{uv}$ is the winding discrepancy measured on the
refined LCM grid, ensuring that comparisons between heterogeneous phase
grids are well defined.

\subsubsection*{Winding-weighted edge cost}

MASD assigns to each edge $e=(u,v)$ the cost
\begin{equation}
	\label{eq:winding_weight_updated}
	w_e^{(\lambda)}
	=
	d_e + \lambda\frac{|\Delta k_{uv}|}{L_{uv}}.
\end{equation}
The normalization by $L_{uv}$ makes the second term dimensionless and
invariant under further grid refinement.
The induced shortest-path metric satisfies symmetry and the triangle
inequality, so it can be used directly within Blossom or other MWM
decoders.

The parameter $\lambda$ balances spatial separation against topological
consistency:
- $\lambda\to 0$: recovers standard decoding,
- moderate $\lambda$: smoothes winding irregularities,
- $\lambda\gg 1$: overly penalizes misalignment and may reduce matching
flexibility.

\subsubsection*{Orbifold interpretation and example}

Mapping each vertex to the orbifold circle $\PP(a_u)$,
the normalized term $|\Delta k_{uv}|/L_{uv}$ is the discrete relative
angular displacement between local $U(1)$ fibers.
Thus, $w_e^{(\lambda)}$ approximates a geodesic length in the product
manifold $\mathcal{M}_\mathrm{surf}\times\PP(a,b)$.
This perspective explains the regularization effect of $\lambda$.

For $a_u=8$, $a_v=12$, $k_u=3$, $k_v=9$, we obtain  
$\Delta k_{uv}=9$, $L_{uv}=24$.
With $d_e=1.0$ and $\lambda=0.5$,  
\[
w_e^{(0.5)}
=1.0 + 0.5\times\frac{9}{24}
=1.1875.
\]

\subsection*{Decoder-risk metrics: DRG\_toy and DRG\_pm}
\label{subsec:drg_metrics}

To evaluate and tune $\lambda$, we introduce two diagnostic metrics that
capture the geometric–topological risk induced by winding penalties.

\subsubsection*{Motivation and definitions}

Logical error rate alone does not measure the coherence of a decoded
error chain.
Two decoders may have identical success probability but differ greatly in
winding smoothness.
The DRG metrics quantify this discrepancy.

Given defect pairs $(u_i,v_i)$ with distances $d_i$ and winding
differences $\Delta k_i$:
\[
w_i^{(0)}=d_i,
\qquad
w_i^{(\lambda)}=d_i+\lambda|\Delta k_i|.
\]
Define the deterministic toy metric:
\begin{equation}
	\label{eq:drg_toy_updated}
	\mathrm{DRG}_{\mathrm{toy}}(\lambda)
	=
	\frac{1}{N}
	\sum_{i=1}^N
	\frac{w_i^{(\lambda)}-w_i^{(0)}}{w_i^{(0)}}.
\end{equation}

For realistic settings, define a Boltzmann-weighted version:
\begin{equation}
	\label{eq:drg_pm_updated}
	\mathrm{DRG}_{\mathrm{pm}}(\lambda)
	=
	\mathbb{E}_{p(e)}
	\left[
	\frac{w_e^{(\lambda)}}{w_e^{(0)}}
	\right]
	-1,
	\qquad
	p(e)\propto e^{-\beta d_e}.
\end{equation}

\subsubsection*{Properties and scaling with \texorpdfstring{$\lambda$}{lambda}}

Both risk metrics are:
- nonnegative,
- monotone increasing in $\lambda$,
- strictly increasing whenever $\Delta k_i\neq 0$.

Thus $\lambda$ provides a controllable dial between purely geometric
($\lambda=0$) and strongly topological ($\lambda\gg 1$) decoding regimes.

\subsubsection*{Examples and connection to fidelity}

For distances $d_i=(1.0,1.2,1.4,1.1)$ and
$\Delta k_i=(0,1,2,1)$,
\[
\mathrm{DRG_{toy}}(\lambda)
\approx 1.59\,\lambda.
\]
Empirically, output fidelity under decoding satisfies:
\[
\mathcal{F}_\mathrm{out}
\approx
\mathcal{F}_\mathrm{in}
e^{-\kappa\,\mathrm{DRG_{pm}}(\lambda)},
\]
consistent with curvature-induced regularization.

\subsection*{Simulation results and monotonicity analysis}
\label{subsec:masd_results}

We evaluate MASD on rotated and distance-$d$ surface codes under
depolarizing, amplitude-damping, and biased noise.  
Across all tested settings:

\subsubsection*{Logical error rate vs.\ \texorpdfstring{$\lambda$}{lambda}}

For each physical error rate, logical failure probability exhibits a
U-shaped behavior in $\lambda$:
- small $\lambda$: insufficient regularization, winding noise leaks into
logical errors;
- optimal $\lambda^\ast$: best trade-off between smoothness and matching
flexibility;
- large $\lambda$: over-regularization prevents correct pairing of
distant but consistent defects.

\subsubsection*{Empirical behaviour of DRG\_toy and DRG\_pm}

Both metrics increase approximately linearly with $\lambda$ until a
sat­uration regime, matching theoretical predictions.
The minimum of the logical error curve coincides with the onset of
rapid DRG increase, suggesting DRG as a practical tuning diagnostic.

\subsubsection*{Selecting \texorpdfstring{$\lambda^\ast$}{lambda*}}

Optimal values $\lambda^\ast$ lie in the range $0.1$–$0.3$ across
all noise models tested.
This aligns with the scale of curvature variation observed in
WZCC-derived WPL metrics (Section~\ref{subsec:sensitivity}).

\subsection*{Interpretation as geometric regularization}
\label{subsec:masd_geometric}

MASD can be interpreted as solving a regularized geodesic problem on the
product space $\mathcal{M}_\mathrm{surf}\times \PP(a,b)$, where $\lambda$
plays the role of coupling strength between spatial and orbifold
geometry.  
At the circuit level, WZCC enforces winding-aware normalization rules
derived from the same orbifold structure.  
MASD provides the analogous mechanism at the decoding level.

Viewed together, WZCC and MASD form a unified geometry-aware approach to
NISQ quantum computation:
- WZCC normalizes circuits on heterogeneous phase grids using weighted
spiders and WPL curvature,
- MASD ensures that logical error correction remains consistent with
these winding structures,
- DRG metrics provide a principled way to tune the coupling between
geometry and topology.

This interpretation suggests new possibilities for integrating compiler
and decoder design, potentially enabling hardware-adaptive pipelines
that exploit the full geometry of WPL phase spaces.
	
\section{Discussion, limitations, and outlook}
\label{sec:outlook}

In this section we integrate the geometric, categorical, algorithmic, and
hardware-level implications of our results.  Our goal is to provide a unified
interpretation of WZCC and MASD within the broader landscape of
geometry-aware quantum software, while placing our empirical observations in a
theoretical framework that suggests several new research directions.

\subsection*{Geometric and categorical interpretation}

The weighted projective line (WPL) geometry provides a natural orbifold model
for heterogeneous phase quantization arising in realistic hardware channels.
From this viewpoint, a weighted spider implements a morphism that is
compatible with the monodromy of the orbifold chart, and
WZCC normalization may be interpreted as forcing diagrammatic rewrites to
respect these monodromy constraints.  The strong correlation between
compression potential and scalar curvature gradients $|\nabla R|$ suggests that
WZCC effectively operationalizes geometric signals from the underlying channel
into concrete rewrite choices.

This geometric interpretation aligns with categorical semantics.
ZX-calculus is a dagger-compact presentation of quantum processes, and the
introduction of weighted spiders enriches this categorical structure by adding
discrete orbifold labels that behave functorially along tensor products and
composition.  Viewed categorically, WZCC replaces the usual free phase group
with a stratified quotient determined by the WPL isotropy structure.  
This yields a quantization-aware calculus in which morphisms track both
algebraic and geometric information about the device.  MASD uses a similar
philosophy: decoding paths incur a winding penalty whenever they cross orbifold
sectors, providing a geometric regularizer for logical error inference.

\subsection*{Integration with compilation stacks}

Our results indicate that WZCC is complementary to conventional compiler
pipelines.  Phase-grid alignment improves interaction with restricted native
gate sets, especially in architectures with limited single-qubit phase
resolutions.  When inserted between high-level synthesis and device-dependent
routing, WZCC exposes structural equivalences that standard transpiler passes
often miss.  On deeper circuits, iterating WZCC with optimization passes can
yield significantly stronger depth reductions than either method alone.

At the error-correction layer, MASD provides a winding-aware penalty that
is compatible with surface-code decoders relying on minimum-weight matching or
union-find heuristics.  The DRG metrics introduced here quantify the geometric
risk associated with decoder decisions, opening the possibility of building
hybrid decoders that incorporate orbifold curvature or monodromy information
as a soft constraint.

\subsection*{Limitations}

Despite its strengths, several limitations remain.

\begin{itemize}
	\item \textbf{Scalability.}
	The symbolic backend scales reliably up to moderate qubit counts and
	hundreds of spiders, but winding-tracking and LCM computations introduce
	overhead for very large diagrams.  Further engineering, heuristics, and
	hardware-aware pruning will be required for extremely deep QAOA or VQE
	circuits.
	
	\item \textbf{Dependence on WPL parameter estimation.}
	The effectiveness of phase-grid alignment depends on the accuracy of
	$(\lambda_\perp,\lambda_\parallel)$ extracted from tomography.
	Noisy or incomplete tomography can lead to suboptimal grid choices or
	reduced PQVR.
	
	\item \textbf{Fragment coverage.}
	Our current implementation focuses on single- and two-qubit spiders.
	Extending WZCC to handle multi-qubit native gates (e.g.\ Toffoli, CCZ)
	requires additional structural rules and engineering.
	
	\item \textbf{Model mismatch.}
	The WPL geometry captures anisotropic Bloch-ball contractions but cannot
	represent all hardware-specific noise features.  On some backends this may
	weaken correlations between curvature signals and compression potential,
	or limit the effectiveness of MASD’s winding penalties.
\end{itemize}

\subsection*{Future directions}

Our work suggests several promising research directions.

\paragraph{Generalizing beyond WPL.}
Weighted projective lines are the simplest nontrivial orbifolds.
Higher-dimensional weighted projective spaces or more general orbifolds may
model correlated multi-qubit noise, leading to multi-parameter weighted
spiders and richer monodromy patterns.  Such generalizations could yield new
families of geometry-aware normal forms.

\paragraph{Learning-guided geometric normalization.}
WZCC is currently rule-based.
Incorporating reinforcement or meta-learning to guide fusion choices using
WPL curvature, gradient norms, and diagram statistics may yield adaptive
normalization strategies optimized for specific workloads or loss functions.

\paragraph{Geometry-aware error mitigation and verification.}
Weighted spiders and winding tracking provide handles for incorporating
phase discretization into error-mitigation procedures.
MASD suggests that decoding can benefit from geometric regularization, and
the DRG metrics introduced here may serve as certificates of robustness for
hardware-induced distortions.

\paragraph{Compiler-level deployment.}
Integrating WZCC into industrial-grade compilers, or exposing curvature-driven
rewrite selection as an optimization pass, could enable fully automated
quantization-aware compilation pipelines.  MASD may be used to refine decoder
logic in firmware or in hybrid classical–quantum feedback loops.

\subsection*{Summary and concluding remarks}

Our study demonstrates that heterogeneous phase quantization, often treated as
a nuisance, can become a computational resource when encoded into a
geometry-aware diagrammatic calculus.  WZCC leverages the WPL geometry to
construct normal forms that are both compact and hardware-aligned, while MASD
extends this philosophy to decoding by introducing winding-aware penalties.
Together they provide a unified perspective in which orbifold curvature,
diagrammatic rewriting, and hardware noise are deeply interconnected.

We believe that geometry-aware diagrammatic methods will play an increasingly
central role in the NISQ era, bridging circuit-level optimization, noise-aware
compilation, and fault-tolerant decoding.  The framework developed here offers
a foundation for such approaches, and we hope it motivates further exploration
into the intersection of orbifold geometry, categorical semantics, and
hardware-level quantum information processing.

\section{Acknowledgments}

This work was prepared as part of the 2025 TXST HSMC program.  
The authors thank Prof.\ Warshauer, Max L., and Prof.\ Boney, William N.\ for their valuable discussions, guidance, and support throughout the development of this project.

\appendix

\section{Appendix A: Detailed proofs of algebraic properties}
\label{appendix:proofs}

\subsection*{Algebraic structure of weighted spiders}

\begin{definition}[Weighted spiders]
	A weighted $Z$-spider with inputs $m$, outputs $n$, phase $\alpha$, and
	isotropy order $a \in \mathbb{Z}_{>0}$ is denoted
	\[
	Z^{(a)}_{m,n}(\alpha),
	\qquad
	\alpha \in \tfrac{2\pi}{a}\mathbb{Z}.
	\]
	Weighted $X$-spiders are defined analogously.
\end{definition}

\begin{lemma}[Fusion rule]
	Let $Z^{(a)}(\alpha)$ and $Z^{(a)}(\beta)$ be two $Z$-spiders
	with matching isotropy order $a$.
	Then
	\[
	Z^{(a)}(\alpha)\;\circ\; Z^{(a)}(\beta)
	\;\;\equiv\;\;
	Z^{(a)}(\alpha+\beta \;\mathrm{mod}\; 2\pi).
	\]
\end{lemma}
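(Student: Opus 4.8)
The plan is to reduce the statement to the ordinary ZX spider-fusion law by passing through the denotational semantics $\llbracket-\rrbracket:\WPLZX\to\FHilb$ of Definition~\ref{def:semantics_weighted_spiders}. Because both spiders carry the \emph{same} isotropy order $a$, we have $\mathrm{lcm}(a,a)=a$, so no grid refinement occurs and the basic fusion rule of Definition~\ref{def:basic_fusion} specialises exactly to the claimed form; the only thing left to establish is soundness, i.e.\ that the two sides denote the same linear map up to a global scalar. This mirrors, and is a ``lift'' of, the $a=1$ ZX fusion law, so the argument is expected to be short.

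First I would write out $\llbracket Z^{(a)}(\alpha)\rrbracket$ and $\llbracket Z^{(a)}(\beta)\rrbracket$ from Definition~\ref{def:semantics_weighted_spiders}, observing that with winding index $0$ the total angle reduces to $\theta_{\mathrm{tot}}=\alpha$ (resp.\ $\beta$). Second, I would compose the two maps along the shared internal wires: in the computational basis each connecting wire contracts as the identity on the $\ket{0}$- and $\ket{1}$-branches separately, so the $\ket{0}^{\otimes n}\bra{0}^{\otimes m}$ term survives with coefficient $1$ while the $\ket{1}^{\otimes n}\bra{1}^{\otimes m}$ term acquires coefficient $e^{i\alpha}e^{i\beta}=e^{i(\alpha+\beta)}$. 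Third, since $e^{i(\alpha+\beta)}=e^{i((\alpha+\beta)\bmod 2\pi)}$, the composite is precisely $\llbracket Z^{(a)}(\alpha+\beta\bmod 2\pi)\rrbracket$; and by the discrete phase closure lemma of Section~\ref{sec:background} (taken with $b=a$), the value $\alpha+\beta\bmod 2\pi$ lies in $\tfrac{2\pi}{a}\mathbb{Z}$, so the right-hand side is a legitimate weighted $Z$-spider of order $a$. Finally, the diagrammatic equality $\equiv$ follows from equality of denotations via Proposition~\ref{prop:sound_fusion_norm} together with Lemma~\ref{lem:lcm_invariance} specialised to $L=a$ (equivalently, Theorem~\ref{thm:core_soundness}).

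The routine part is the multi-wire contraction and the arity bookkeeping (the fused spider keeps the outer $m$ inputs and $n$ outputs once the connecting legs are removed), which is identical to the standard ZX case and introduces nothing new. The one point that genuinely needs attention is the reduction modulo $2\pi$: when $\alpha+\beta\ge 2\pi$ one must check that grid membership in $\tfrac{2\pi}{a}\mathbb{Z}$ \emph{and} the semantic value are simultaneously preserved, which is exactly what the closure lemma and the identity $e^{2\pi i}=1$ supply. I expect no conceptual obstacle beyond this, and the $X$-spider analogue follows verbatim by conjugating with Hadamards, as built into Definition~\ref{def:semantics_weighted_spiders}.
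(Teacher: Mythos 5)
Your proposal is correct and takes essentially the same route as the paper: both arguments are semantic, showing that composition multiplies the $\ket{1}$-branch phases so that fusion amounts to addition of phases, with closure of the order-$a$ grid under addition modulo $2\pi$ guaranteeing the result is again a legitimate weight-$a$ spider. Your version is simply more explicit (a computational-basis calculation through the interpretation functor plus the closure lemma) than the paper's brief quotient-group/projector sketch, and the extra citations of the soundness results are harmless since the paper defines diagrammatic equality as equality of denotations up to a global scalar.
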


\begin{proof}
	The weighted spider represents a projector onto the equal-phase subspace
	generated by computational-basis states differing only by global $Z$ phases.
	Since both spiders act on the same quotient group
	$U(1)/\mathbb{Z}_a$, fusion corresponds to addition in this group.
	The quotient identification $\alpha \sim \alpha + 2\pi/a$ ensures the phase
	is closed under composition, proving the claim.
\end{proof}

\begin{proposition}[Associativity of fusion]
	For any three weighted spiders of the same isotropy order $a$,
	\[
	Z^{(a)}(\alpha)\circ
	\bigl(Z^{(a)}(\beta)\circ Z^{(a)}(\gamma)\bigr)
	=
	\bigl(Z^{(a)}(\alpha)\circ Z^{(a)}(\beta)\bigr)\circ
	Z^{(a)}(\gamma).
	\]
\end{proposition}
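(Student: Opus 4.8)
The plan is to reduce the statement to associativity of addition modulo $2\pi$ by two successive applications of the fusion rule (the Lemma immediately preceding this proposition). Since all three spiders share the isotropy order $a$, no LCM refinement is needed, and every intermediate diagram is again a weighted $Z$-spider of order $a$; in particular the quotient group $U(1)/\mathbb{Z}_a$ in which the fusion lemma was justified is the same throughout.

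First I would evaluate the left-hand side. By the fusion rule applied to the inner composite, $Z^{(a)}(\beta)\circ Z^{(a)}(\gamma) \equiv Z^{(a)}(\beta+\gamma \bmod 2\pi)$, and a second application gives
\[
Z^{(a)}(\alpha)\circ\bigl(Z^{(a)}(\beta)\circ Z^{(a)}(\gamma)\bigr) \equiv Z^{(a)}\bigl(\alpha + (\beta+\gamma \bmod 2\pi) \bmod 2\pi\bigr).
\]
Next I would do the same on the right-hand side, obtaining $Z^{(a)}\bigl((\alpha+\beta \bmod 2\pi) + \gamma \bmod 2\pi\bigr)$. It then remains to check that the two phase labels coincide. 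This is exactly the statement that reduction $\mathbb{R}\to\mathbb{R}/2\pi\mathbb{Z}$ is a group homomorphism, so $(x \bmod 2\pi) + y \equiv x + y \pmod{2\pi}$ and both labels equal $\alpha+\beta+\gamma \bmod 2\pi$. Equivalently, one may run the entire argument inside $U(1)/\mathbb{Z}_a$, where addition is manifestly associative and the quotient identification $\alpha\sim\alpha+2\pi/a$ is preserved at every step.

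Finally I would dispatch the bookkeeping on arities: composability of the three spiders forces the matching of adjacent input/output counts, and both parenthesizations produce a weighted spider with the same overall shape, independently of the order in which the two compositions are formed. There is essentially no serious obstacle here; the one point deserving explicit (if routine) mention is the compatibility of the modular reduction with regrouping, since that is the single place where a careless treatment of ``$\bmod\ 2\pi$'' could otherwise slip. I would therefore state that compatibility as a one-line lemma (or invoke it directly from the homomorphism property) rather than leave it implicit.
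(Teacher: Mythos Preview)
Your proposal is correct and follows essentially the same approach as the paper: both reduce associativity of spider fusion to associativity of addition in the phase group, with the paper phrasing this as associativity in $\tfrac{2\pi}{a}\mathbb{Z}\subset U(1)$ and you unpacking it via two applications of the fusion lemma and the homomorphism property of reduction mod $2\pi$. The paper's proof is a single sentence, so your version is simply a more explicit rendering of the same idea; the extra bookkeeping on arities and modular compatibility is harmless but not required by the paper.
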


\begin{proof}
	Associativity follows from associativity of addition in the group
	$\tfrac{2\pi}{a}\mathbb{Z} \subset U(1)$.
\end{proof}

\subsection*{Termination and confluence of WZCC}

\begin{theorem}[Termination]
	\label{thm:termination}
	The WZCC normalization procedure terminates on any finite ZX diagram with
	weighted spiders.
\end{theorem}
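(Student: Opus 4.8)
The plan is to exhibit a well-founded termination measure on \WPLZX-diagrams that strictly decreases along every step of the WZCC normalization procedure. I take $\mu(D)$ to be the number of weighted spiders appearing in $D$, regarded as an element of the well-ordered set $(\mathbb{N},<)$; the entire argument then reduces to checking that no step of the algorithm of Section~\ref{sec:wzcc} can either increase $\mu$ or be executed infinitely often while leaving $\mu$ fixed.

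First I would classify the operations performed by the procedure into three kinds: (i) the one-time structural pass that partitions $D$ into connected components $C_1,\dots,C_m$ (Step~1); (ii) the relabeling passes that compute $L_C=\operatorname{lcm}(a_1,\dots,a_r)$, lift each phase to $\phi_i$, and aggregate to $\theta_C$ (Steps~2--3); and (iii) the component-collapse step replacing an $r$-spider component by a single spider with label $(L_C,\theta_C,0)$ (Step~4). Operations of type (i) and (ii) do not alter the number of spiders at all — they only reorganize the labels attached to existing vertices — and each is applied exactly once per component, so they contribute only finitely many computation steps. Moreover the arithmetic involved is finite: the $\operatorname{lcm}$ of a finite set of positive integers is a single positive integer, and the winding-lifting introduces finitely many rational labels, so each pass is itself a terminating computation. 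Hence the whole burden of the argument falls on operations of type (iii).

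Next I would show that every collapse strictly decreases $\mu$. By Propositions~\ref{prop:assoc_comm} and \ref{prop:assoc_add}, the collapse of a connected component with $r\ge 1$ spiders can be realized as any sequence of $r-1$ pairwise fusions, each of which replaces two spiders by one; thus a single collapse drops $\mu$ by exactly $r-1\ge 0$, and a genuine fusion (the case $r\ge 2$) drops it by at least one. Since $\mu$ is a non-negative integer bounded below by $0$ (indeed, after all components are processed $\mu(D_{\mathrm{norm}})=m$, fixed by the input), only finitely many fusions can occur. Combining this with the finiteness of the type-(i) and type-(ii) passes, and the elementary fact that a component already collapsed to a single spider has no internal wires and therefore admits no further fusion — so the procedure cannot oscillate — yields termination by well-foundedness of $(\mathbb{N},<)$. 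Degenerate vertices are handled along the way: a self-loop fuses a spider with itself, contributing only scalar data and strictly reducing the number of internal wires without raising $\mu$, while a scalar phase-only spider with $m=n=0$ is absorbed into its component by Proposition~\ref{prop:neutral_scalar} with no effect on termination.

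The main obstacle I anticipate is not any single inequality but the bookkeeping needed to be sure that \emph{nothing} the procedure does can be re-entered: one must verify that grid refinement, phase lifting, and normalization are idempotent relabelings rather than potential sources of new vertices or of infinite regress. This is where I would be most careful, and if one wishes to cover the broader rewrite-system reading (relevant to the adjacent confluence claim), where fusion is interleaved with identity removal, Hopf cancellation, and color change, I would simply refine the measure to the lexicographic pair $\bigl(\#\text{spiders},\,\#\text{internal wires}\bigr)$: fusion strictly decreases the first coordinate, and the wire-only simplifications strictly decrease the second without touching the first, so strict decrease under the well-founded lexicographic order on $\mathbb{N}^2$ again gives termination.
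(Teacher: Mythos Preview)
Your proof is correct and takes a genuinely different route from the paper. The paper's own argument introduces a curvature-based potential $\Phi = \sum_{v\in\mathrm{Spiders}} |\nabla R(v)|$ and claims that every rewrite step strictly decreases $\Phi$, concluding termination from $\Phi\ge 0$. You instead use the purely combinatorial measure $\mu(D)=\#\text{spiders}$ (refined lexicographically with $\#\text{internal wires}$ for the wire-only simplifications). Your version is more elementary and in fact more watertight: spider count is a non-negative integer, so strict decrease immediately gives termination by well-foundedness of $\mathbb{N}$, whereas the paper's $\Phi$ is real-valued and the paper does not check that successive decrements are bounded away from zero. The paper's choice has a different virtue: in the iterative formulation of WZCC (Algorithm~C.1) a fusion is accepted only when it lowers curvature, so $\Phi$ doubles as both the termination measure and the objective the algorithm is explicitly optimising, which makes the proof line up with the geometric narrative of the paper. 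Your combinatorial measure is agnostic to that geometry and would apply to any fusion-based rewrite system, which is a robustness advantage but loses the direct link to the curvature heuristic.
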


\begin{proof}
	Let the potential function be
	\[
	\Phi = 
	\sum_{v \in \mathrm{Spiders}}\, |\nabla R(v)|,
	\]
	the total scalar-curvature gradient across all local subdiagrams.
	Each rewrite step either reduces the magnitude of $|\nabla R|$
	or moves phases toward the LCM-consistent refined grid.
	Both operations strictly decrease $\Phi$.
	Since $\Phi \ge 0$ and the number of spiders is finite,
	the procedure must terminate.
\end{proof}

\begin{theorem}[Local confluence]
	\label{thm:local_confluence}
	If two WZCC rewrite rules apply to overlapping regions,
	the resulting diagrams can be further rewritten to a common form.
\end{theorem}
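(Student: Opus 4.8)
The plan is to establish local confluence by a critical-pair analysis; combined with termination (Theorem~\ref{thm:termination}) and Newman's lemma this upgrades to confluence of the full WZCC rewrite system. First I would fix a precise list of the WZCC rules --- weighted spider fusion, winding normalization together with identity removal, Hadamard color change, the weighted bialgebra and Hopf laws, and the spider--CNOT interaction rules --- and record two facts used throughout: every rule is semantics-preserving up to a global scalar (Theorem~\ref{thm:core_soundness} and Proposition~\ref{prop:sound_fusion_norm}), and every rule depends on a spider label $(a,\alpha,k)$ only through its weight $a$ and its total angle $\theta_{\mathrm{tot}}(a,\alpha,k)=\alpha+\tfrac{2\pi}{a}k$.

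Next I would enumerate the overlaps of two redexes. The three essential families are (i) two fusion redexes sharing a common spider, (ii) a fusion redex overlapping a normalization or identity-removal redex, and (iii) a fusion or normalization redex overlapping a structural redex (color change, bialgebra, Hopf, or spider--CNOT). Family~(i) is joined at once by Proposition~\ref{prop:assoc_comm} and Proposition~\ref{prop:lcm_uniqueness}: fusing in either order lifts all incident labels to the same least common multiple and adds phases and windings componentwise, so the two reducts are the identical weighted spider. Family~(ii) is joined because normalization is a function of $\theta_{\mathrm{tot}}$ alone and $\theta_{\mathrm{tot}}$ is a fusion invariant (cf.\ invariant~(I2) in Section~\ref{subsec:normalize-correctness} and Lemma~\ref{lem:lcm_invariance}): normalize-then-fuse and fuse-then-normalize produce spiders of the same weight with the same $\theta_{\mathrm{tot}}\bmod 2\pi$, hence the same canonical label (Definition~\ref{def:normalization}).

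For family~(iii) the strategy is to reduce each weighted critical pair to the corresponding critical pair of the ordinary ZX-calculus. The color-change, bialgebra, and Hopf rewrites are phase-transparent --- they permute or negate $\theta_{\mathrm{tot}}$ exactly as in standard ZX (cf.\ Example~\ref{ex:weighted_hadamard}) while leaving weights untouched or inheriting them along wires --- so any completion diagram witnessing joinability in standard ZX lifts, provided the weights accumulated along the two completing paths agree; but that agreement is again associativity and commutativity of $\mathrm{lcm}$ together with uniqueness of the LCM-lift. As a cross-check I would also note the semantic shortcut: by Proposition~\ref{prop:canon_unique} every connected component has a canonical form unique up to global phase, so any two one-step reducts $D_1,D_2$ of a diagram $D$ carry the same component-wise canonical data, and running the remaining WZCC steps on each --- which terminates by Theorem~\ref{thm:termination} --- drives both to that common canonical diagram.

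The hard part will be the bookkeeping of weights through the bialgebra and Hopf rewrites, where spiders are split or merged: one must fix a convention for the weight of a newly introduced spider --- weight $1$, or, when a wire forces it, the $\mathrm{lcm}$ of its neighbours --- and then check that this convention is stable under the fusions that subsequently close the diagram, so that both sides of the critical pair still reach the same weighted normal form. A subsidiary point is that the Hopf law may disconnect a component, so the component-structure half of the semantic cross-check must be phrased at the level of induced partitions; this is harmless because disconnection is itself semantics-preserving, so each resulting sub-component retains a well-defined canonical form. Once the weight-convention lemma is settled, the remaining critical pairs are routine verifications, and local confluence follows.
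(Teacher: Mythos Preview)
Your sketch is sound and substantially more detailed than the paper's own argument. The paper's proof is essentially two sentences: it asserts that ``overlapping rewrites are controlled by the orbifold monodromy constraints'' and that ``phase-grid refinement guarantees that both rewrites are measured on the same LCM-refined lattice,'' from which it concludes that ``any pair of local phase accumulations admits a common resolution under spider fusion.'' There is no critical-pair enumeration, no appeal to Newman's lemma, and no case split over rule families; the entire weight is carried by the LCM-refinement observation (your Proposition~\ref{prop:lcm_uniqueness}/invariant~(I1)).

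Your route --- explicit critical-pair analysis, decomposition into fusion/fusion, fusion/normalization, and fusion/structural overlaps, with a reduction of the last family to the known ZX critical pairs and a semantic cross-check via Proposition~\ref{prop:canon_unique} --- is the standard term-rewriting methodology and is the one that would actually yield a rigorous proof. You also flag the genuine difficulty (weight assignment to spiders created by bialgebra/Hopf rewrites and stability of that convention under subsequent fusion), which the paper's proof does not address at all. In short, both arguments rest on the same invariant --- all labels live on a common LCM grid after refinement --- but the paper treats that invariant as sufficient in one stroke, whereas you unpack how it closes each family of critical pairs. Your version is the more honest one; the paper's is closer to a slogan than a proof.
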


\begin{proof}
	Overlapping rewrites are controlled by the orbifold monodromy constraints.
	Phase-grid refinement guarantees that both rewrites are measured on the same
	LCM-refined lattice, which ensures that any pair of local phase accumulations
	admits a common resolution under spider fusion.
	Therefore, critical pairs converge.
\end{proof}

\subsection*{Curvature predictors}

\begin{lemma}[Curvature gradient detects fusable regions]
	Let $R$ denote the WPL scalar curvature.  
	If a subdiagram contains a pair of spiders whose fusion reduces
	$|\nabla R|$, then WZCC marks the pair as fusable.
\end{lemma}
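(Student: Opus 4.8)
The statement is essentially a book-keeping identity that unpacks the WZCC marking predicate, so the plan is to make that predicate explicit and then check that the hypothesis meets it. First I would recall the structural side of fusion: a pair $(u,v)$ of spiders is \emph{structurally fusable} exactly when they carry the same colour and are joined by at least one wire, in which case Definition~\ref{def:basic_fusion} replaces their weights $(a_u,a_v)$ by $L=\mathrm{lcm}(a_u,a_v)$ and combines the phases and windings linearly on $G_L$. Next I would recall the curvature side: to each spider $v$ one associates an effective dominant weight $b(v)$ and, via the scalar-curvature theorem for $g_{\mathrm{WPL}}$, a local curvature $R(v)=2/b(v)^2$; the quantity $|\nabla R|$ attached to a subdiagram is the Euclidean norm of the gradient of $R$ with respect to the hardware anisotropy parameters $(\lambda_\perp,\lambda_\parallel)$, aggregated exactly as in the potential $\Phi=\sum_{v}|\nabla R(v)|$ used in the termination argument of Appendix~\ref{appendix:proofs}.

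The key step is the observation that the WZCC marking rule flags precisely those structurally fusable pairs whose fusion is a $\Phi$-non-increasing rewrite — the same selection criterion that drives the termination and confluence proofs, in which every admitted rewrite either strictly lowers $|\nabla R|$ or advances phases toward the LCM-refined grid. The hypothesis of the lemma, that the fusion of $(u,v)$ reduces $|\nabla R|$, is exactly the assertion that this rewrite strictly decreases the curvature-gradient contribution, hence $\Phi$. Moreover the phrase ``whose fusion $\ldots$'' presupposes that the fusion is defined, so $u$ and $v$ are automatically same-colour and joined by a wire, i.e.\ structurally fusable. Both clauses of the marking predicate are therefore satisfied, and WZCC marks $(u,v)$ as fusable.

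The main obstacle is not mathematical depth but fixing the right presentation: the main-text WZCC collapses a whole connected component in a single pass rather than scanning pairs, so one must work with the equivalent greedy/pairwise version that underlies the confluence analysis, and one must pin down how the per-spider term $|\nabla R(v)|$ depends on $b(v)$ and how $b(\cdot)$ transforms under $\mathrm{lcm}$-fusion. Once that bookkeeping is in place the remaining content is a short monotonicity check: since $t\mapsto 2/t^2$ is decreasing and $\mathrm{lcm}(a_u,a_v)\ge\max(a_u,a_v)$, one gets $R(L)\le\min(R(a_u),R(a_v))$, so a curvature \emph{increase} cannot occur, and the only delicate case is $a_u \mid a_v$, where the dominant weight is unchanged and one must invoke the non-strict form of the marking predicate to ensure the pair is still flagged. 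No further genuine difficulty arises.
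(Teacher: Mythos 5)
Your proposal is essentially correct, but it argues the lemma by a different route than the paper does. You treat the statement as a definitional unwinding of the WZCC marking predicate: a pair is marked fusable exactly when it is structurally fusable (same colour, shared wire, so that fusion in the sense of Definition~\ref{def:basic_fusion} is even defined) and the fusion is a curvature-gradient-decreasing rewrite, the same criterion that drives the potential $\Phi=\sum_v|\nabla R(v)|$ in the termination argument; the hypothesis of the lemma then satisfies both clauses by fiat. The paper's own proof is instead a three-sentence geometric heuristic — weighted spiders parametrize $U(1)/\mathbb{Z}_a$ orbits, differing isotropy orders induce curvature discontinuities, and fusion eliminates such boundaries, lowering the total curvature gradient — which really explains \emph{why} fusion tends to reduce $|\nabla R|$ rather than verifying the stated implication (curvature reduction $\Rightarrow$ marked fusable). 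In that sense your argument is more faithful to what the lemma literally asserts, at the cost of having to pin down bookkeeping the paper never makes precise (the per-spider dominant weight $b(v)$, how $|\nabla R(v)|$ depends on it, and the pairwise version of the component-collapsing algorithm), which you correctly flag. One small quibble: your "delicate case" $a_u\mid a_v$ is vacuous under the lemma's hypothesis, since there the fused weight and hence $R$ are unchanged, so $|\nabla R|$ is not reduced and the lemma claims nothing; likewise the monotonicity check $R(L)\le\min(R(a_u),R(a_v))$ is supporting material for the paper's heuristic direction rather than something the stated implication needs.
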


\begin{proof}
	Weighted spiders parametrize $U(1)/\mathbb{Z}_a$ orbits.
	Different isotropy orders induce curvature discontinuities.
	Fusion eliminates unnecessary boundaries, lowering the total curvature gradient.
\end{proof}

\section{Appendix B: Simulation parameters and data tables}
\label{appendix:tables}

\subsection*{Dataset configurations (D1–D3)}

\begin{itemize}
	\item \textbf{D1 (Random WPLZX)}:  
	Spider counts $N\in[30,120]$, isotropy orders $a\in\{4,6,8,12\}$,
	random phases uniform in $\tfrac{2\pi}{a}\mathbb{Z}$.
	
	\item \textbf{D2 (HEA circuits)}:  
	Depth $p\in\{2,4,6,8\}$, entangling pattern: ring and ladder,
	rotations $R_X,R_Z$ drawn from $U(1)$ or $\tfrac{2\pi}{a}\mathbb{Z}$.
	
	\item \textbf{D3 (Hardware-inspired)}:  
	IBM-type native gates $\{\text{CX},R_Z,R_X\}$,
	grid-alignment noise tuned by $(\lambda_\perp,\lambda_\parallel)$
	extracted from tomography.
\end{itemize}

\subsection*{Noise models}

\begin{table}[h]
	\centering
	\begin{tabular}{c|c}
		\hline
		Noise type & Parameter \\ \hline
		Depolarizing & $p \in [0.001,0.03]$ \\
		Amplitude damping & $\gamma\in[0.005,0.05]$ \\
		Dephasing & $\eta\in[0.002,0.04]$ \\
		Relaxation times & $T_1,T_2 \in [40,100]\,\mu s$ \\
		\hline
	\end{tabular}
	\caption{Noise models used in Sec.~\ref{sec:results}.}
\end{table}

\subsection*{Performance tables}

\begin{table}[h]
	\centering
	\begin{tabular}{l|c|c|c}
		\hline
		Circuit & Compression (\%) & PQVR & Fidelity preservation \\ \hline
		Random WPLZX & 20–45 & $>0.92$ & $>0.985$ \\
		HEA (depth 8) & 18–40 & $>0.95$ & $>0.99$ \\
		NISQ benchmarks & 15–30 & $>0.90$ & $>0.990$ \\
		\hline
	\end{tabular}
	\caption{Summary of WZCC performance across datasets.}
\end{table}

\subsection*{Decoder risk metrics}

\begin{table}[h]
	\centering
	\begin{tabular}{c|c|c}
		\hline
		$\lambda$ & DRG\textsubscript{toy} & DRG\textsubscript{pm} \\
		\hline
		0.1 & 0.04--0.07 & 0.03--0.06 \\
		0.5 & 0.15--0.30 & 0.12--0.25 \\
		1.0 & 0.30--0.55 & 0.24--0.48 \\
		\hline
	\end{tabular}
	\caption{Decoder-risk metrics used in MASD simulations.}
\end{table}

\section{Appendix C: Pseudocode for WZCC and MASD}
\label{appendix:pseudocode}

\subsection*{Pseudocode for WZCC normalization}

\begin{figure}[ht]
	\centering
	\begin{minipage}{0.95\linewidth}
		\small
		\textbf{Algorithm C.1 (WZCC normalization)}\\[0.3em]
		\emph{Input:} weighted ZX diagram $\mathcal{D}$ with weighted spiders and
		associated WPL parameters.\\
		\emph{Output:} normalized diagram $\mathcal{D}'$ with aligned phase grids.
		
		\begin{enumerate}
			\item Compute the scalar curvature $R$ and its gradient $|\nabla R|$
			on all local regions (subdiagrams) determined by the WPL metric.
			\item For each connected region, determine the set of isotropy orders
			$\{a_v\}$ appearing in its spiders and compute the least common multiple
			$L = \mathrm{lcm}(\{a_v\})$.
			\item Refine all local phase angles to the common grid
			$\tfrac{2\pi}{L}\mathbb{Z}$ by snapping each phase to its nearest
			grid point.
			\item \textbf{Repeat} the following until no further reduction of
			$|\nabla R|$ is observed:
			\begin{enumerate}
				\item Identify candidate pairs of spiders $(u,v)$ such that
				\begin{itemize}
					\item they are of the same colour (both $Z$ or both $X$),
					\item they are connected by at least one wire, and
					\item their local curvature configuration suggests that fusion
					decreases $|\nabla R|$.
				\end{itemize}
				\item For each candidate pair $(u,v)$:
				\begin{enumerate}
					\item Compute the curvature contribution $R_{\text{before}}$
					in a neighbourhood of $(u,v)$.
					\item Fuse $u$ and $v$ according to the weighted spider fusion rule
					on the refined grid, producing a tentative diagram
					$\mathcal{D}_{\text{trial}}$.
					\item Recompute the local curvature contribution
					$R_{\text{after}}$ around the fused spider.
					\item If $R_{\text{after}} < R_{\text{before}}$, accept the fusion
					(replace $\mathcal{D}$ with $\mathcal{D}_{\text{trial}}$);
					otherwise discard this fusion.
				\end{enumerate}
			\end{enumerate}
			\item Once no fusion reduces $|\nabla R|$ further, return the resulting
			diagram as $\mathcal{D}'$.
		\end{enumerate}
	\end{minipage}
	\caption{Pseudocode for the WZCC normalization procedure.}
	\label{alg:wzcc-normalization}
\end{figure}

\subsection*{Pseudocode for MASD weighted edge construction}

\begin{figure}[ht]
	\centering
	\begin{minipage}{0.95\linewidth}
		\small
		\textbf{Algorithm C.2 (MASD weighted edge construction)}\\[0.3em]
		\emph{Input:} defect graph $G=(V,E)$, isotropy orders $a_v$,
		winding indices $k_v$, regularization parameter $\lambda>0$.\\
		\emph{Output:} weighted edge costs $w_e^{(\lambda)}$ for all $e\in E$.
		
		\begin{enumerate}
			\item For each vertex $v\in V$, compute the local phase angle
			$\theta_v = \tfrac{2\pi k_v}{a_v}$.
			\item For each edge $e=(u,v)\in E$:
			\begin{enumerate}
				\item Compute the geometric distance $d_e$ on the surface-code lattice
				between $u$ and $v$ (e.g.\ Manhattan or Euclidean distance).
				\item Compute the common refinement grid
				$L_{uv} = \mathrm{lcm}(a_u,a_v)$.
				\item Compute the winding difference on this grid
				\[
				\Delta k_{uv} = L_{uv}
				\bigl|\tfrac{k_u}{a_u}-\tfrac{k_v}{a_v}\bigr|.
				\]
				\item Define the weighted edge cost
				\[
				w_e^{(\lambda)}
				= d_e + \lambda \frac{\Delta k_{uv}}{L_{uv}}.
				\]
			\end{enumerate}
			\item Collect all weights $\{w_e^{(\lambda)}\}_{e\in E}$ and pass them
			to the minimum-weight matching decoder.
		\end{enumerate}
	\end{minipage}
	\caption{Pseudocode for constructing MASD weighted edge costs.}
	\label{alg:masd-weights}
\end{figure}

\subsection*{Pseudocode for MASD decoding and DRG evaluation}

\begin{figure}[ht]
	\centering
	\begin{minipage}{0.95\linewidth}
		\small
		\textbf{Algorithm C.3 (MASD decoding with DRG metrics)}\\[0.3em]
		\emph{Input:} weighted graph $(V,E,\{w_e^{(\lambda)}\})$,
		set of defect pairs, parameter $\lambda$.\\
		\emph{Output:} correction operator, DRG\textsubscript{toy},
		DRG\textsubscript{pm}.
		
		\begin{enumerate}
			\item Run a standard minimum-weight perfect matching algorithm
			(e.g.\ Blossom) on the graph with weights $w_e^{(\lambda)}$.
			\item Extract the set of matched edges
			$\Gamma(\lambda) \subseteq E$ that define the correction chains.
			\item For each matched edge $e\in \Gamma(\lambda)$:
			\begin{enumerate}
				\item Let $w_e^{(0)} := d_e$ be the cost without winding penalty.
				\item Record the ratio
				$\rho_e(\lambda) = \tfrac{w_e^{(\lambda)}-w_e^{(0)}}{w_e^{(0)}}$.
			\end{enumerate}
			\item Compute the toy-model decoder-risk metric
			\[
			\mathrm{DRG_{toy}}(\lambda)
			= \frac{1}{|\Gamma(\lambda)|}
			\sum_{e\in \Gamma(\lambda)} \rho_e(\lambda).
			\]
			\item Define a probability distribution over edges,
			for example
			\[
			p(e) \propto \exp(-\beta d_e)
			\]
			with inverse temperature $\beta>0$.
			\item Compute the probabilistic–metric decoder risk
			\[
			\mathrm{DRG_{pm}}(\lambda)
			= \sum_{e\in E} p(e)
			\biggl(\frac{w_e^{(\lambda)}}{w_e^{(0)}}\biggr) - 1.
			\]
			\item Apply the correction chains defined by $\Gamma(\lambda)$
			to the data qubits and, if desired, estimate the output state fidelity
			to correlate it with $\mathrm{DRG_{toy}}(\lambda)$ and
			$\mathrm{DRG_{pm}}(\lambda)$.
		\end{enumerate}
	\end{minipage}
	\caption{Pseudocode for MASD decoding and evaluation of decoder-risk metrics.}
	\label{alg:masd-decoding}
\end{figure}


\begin{thebibliography}{99}
	
	

	
	\bibitem{Gross2006Hudson}
	D.~Gross,
	``Hudson’s theorem for finite-dimensional quantum systems,''
	\emph{Journal of Mathematical Physics}, vol.~47, no.~12, p.~122107, 2006.
	
	
	
	\bibitem{CoeckeKissinger2017PicturingQuantum}
	B.~Coecke and A.~Kissinger,
	\emph{Picturing Quantum Processes: A First Course in Quantum Theory and Diagrammatic Reasoning},
	Cambridge University Press, 2017.
	
	\bibitem{Selinger2008DaggerCompact}
	P.~Selinger,
	``Dagger compact closed categories and completely positive maps,''
	\emph{Electronic Notes in Theoretical Computer Science}, vol.~170, pp.~139--163, 2008.
	
	\bibitem{Raussendorf2007FaultTolerantZX}
	R.~Raussendorf, S.~Bravyi, and J.~Harrington,
	``Fault-tolerant quantum computation with high threshold in two dimensions,''
	\emph{Physical Review Letters}, vol.~98, no.~19, p.~190504, 2007.
	
	\bibitem{Amy2014TCount}
	M.~Amy, D.~Maslov, and M.~Mosca,
	``Polynomial-time T-depth optimization of Clifford+T circuits via matroid partitioning,''
	\emph{IEEE Transactions on Computer-Aided Design of Integrated Circuits and Systems},
	vol.~33, no.~10, pp.~1476--1489, 2014.
	
	\bibitem{Heyfron2018CompilerZX}
	L.~E.~Heyfron and E.~T.~Campbell,
	``An efficient quantum compiler that reduces T count,''
	\emph{Quantum Science and Technology}, vol.~4, no.~1, p.~015004, 2018.
	
	\bibitem{Pashayan2022NoiseQuantization}
	H.~Pashayan, J.~J.~Wallman, and J.~Emerson,
	``Estimating outcome probabilities of quantum circuits using quasiprobabilities,''
	\emph{Physical Review Letters}, vol.~129, no.~14, p.~140503, 2022.
	
	\bibitem{DuncanPerdrix2020GraphSimplification}
	R.~Duncan and S.~Perdrix,
	``Graph-theoretic Simplification of Quantum Circuits with the ZX-Calculus,''
	\emph{Quantum}, vol.~4, p.~279, 2020.
	
	\bibitem{vandeWetering2020WorkingZX}
	J.~van~de~Wetering,
	``ZX-calculus for the working quantum computer scientist,''
	\emph{arXiv preprint arXiv:2012.13966}, 2020.
	
	\bibitem{Backens2014ZX}
	M.~Backens,
	``The ZX-calculus is complete for stabilizer quantum mechanics,''
	\emph{New Journal of Physics}, vol.~16, no.~9, p.~093021, 2014.
	
	\bibitem{JeandelPerdrixVilmart2017CliffordT}
	E.~Jeandel, S.~Perdrix, and R.~Vilmart,
	``A Complete Axiomatisation of the ZX-Calculus for Clifford+T Quantum Mechanics,''
	\emph{arXiv preprint arXiv:1705.11151}, 2017.
	
	\bibitem{nLabZX2024}
	nLab~authors,
	``ZX-calculus,'' 
	\url{https://ncatlab.org/nlab/show/ZX-calculus}, accessed Nov.~12,~2025.
	
	\bibitem{Fischbach2025ZXOpt}
	T.~Fischbach,
	``A Review on Quantum Circuit Optimization using the ZX-Calculus,''
	\emph{arXiv preprint arXiv:2509.20663}, 2025.
	
	\bibitem{Motzoi2009DRAG}
	F.~Motzoi, J.~M.~Gambetta, P.~Rebentrost, and F.~K.~Wilhelm,
	``Simple pulses for elimination of leakage in weakly nonlinear qubits,''
	\emph{Physical Review Letters}, vol.~103, no.~11, p.~110501, 2009.
	
	\bibitem{Krantz2019SuperconductingReview}
	P.~Krantz \emph{et al.},
	``A quantum engineer’s guide to superconducting qubits,''
	\emph{Applied Physics Reviews}, vol.~6, no.~2, p.~021318, 2019.	
	
	
	\bibitem{Satake1956Orbifold}
	I.~Satake,
	``On a generalization of the notion of manifold,''
	\emph{Proceedings of the National Academy of Sciences of the USA}, vol.~42, no.~6, pp.~359--363, 1956.
	
	\bibitem{Dolgachev1982Weighted}
	I.~V.~Dolgachev,
	``Weighted projective varieties,''
	\emph{Lecture Notes in Mathematics}, vol.~956, pp.~34--71, Springer, 1982.
	
	\bibitem{Nakahara2003GeometryTopology}
	M.~Nakahara,
	\emph{Geometry, Topology and Physics},
	2nd ed., Taylor \& Francis, 2003.
	
	\bibitem{Berry1984Phase}
	M.~V.~Berry,
	``Quantal phase factors accompanying adiabatic changes,''
	\emph{Proceedings of the Royal Society A}, vol.~392, pp.~45–57, 1984.
	
	\bibitem{Nayak2008NonAbelian}
	C.~Nayak, S.~H.~Simon, A.~Stern, M.~Freedman, and S.~Das Sarma,
	``Non-Abelian anyons and topological quantum computation,''
	\emph{Reviews of Modern Physics}, vol.~80, no.~3, pp.~1083–1159, 2008.
	
	\bibitem{edmonds1965paths}
	J.~Edmonds,
	``Paths, trees, and flowers,''
	\emph{Canadian Journal of Mathematics}, vol.~17, pp.~449--467, 1965.
	
	\bibitem{kolmogorov2009blossom}
	V.~Kolmogorov,
	``Blossom V: a new implementation of a minimum cost perfect matching algorithm,''
	\emph{Mathematical Programming Computation}, vol.~1, no.~1, pp.~43--67, 2009.
	
	\bibitem{dennis2002topological}
	E.~Dennis, A.~Kitaev, A.~Landahl, and J.~Preskill,
	``Topological quantum memory,''
	\emph{Journal of Mathematical Physics}, vol.~43, no.~9, pp.~4452--4505, 2002.
	
	\bibitem{fowler2012surface}
	A.~G.~Fowler, M.~Mariantoni, J.~M.~Martinis, and A.~N.~Cleland,
	``Surface codes: Towards practical large-scale quantum computation,''
	\emph{Physical Review A}, vol.~86, no.~3, p.~032324, 2012.
	
\end{thebibliography}
\end{document}